\newcommand*{\email}[1]{\texttt{#1}}
\newcommand{\poly}{\text{poly}}
\newcommand{\OPT}{\text{OPT}}
\newcommand{\opt}{\text{OPT}}
\newcommand{\proj}{\text{proj}}
\newcommand{\mmkc}{\textsc{Minmax $k$-cut}\xspace}
\newcommand{\mskc}{\textsc{Minsum $k$-cut}\xspace}
\newcommand{\mlpnormkc}{\textsc{Min $\ell_p$-norm $k$-cut}\xspace}
\newcommand{\mlonenormkc}{\textsc{Min $\ell_1$-norm $k$-cut}\xspace}
\newcommand{\mlinfinitynormkc}{\textsc{Min $\ell_{\infty}$-norm $k$-cut}\xspace}
\newcommand{\multiwaycut}{\textsc{Multiway cut}\xspace}
\newcommand{\minmaxmultiwaycut}{\textsc{Minmax Multiway cut}\xspace}
\newcommand{\lrange}{\{0,1,\ldots,\lambda\}}
\newcommand{\mcd}{\mathcal{D}}
\newcommand{\mcc}{\mathcal{C}}
\newcommand{\mcp}{\mathcal{P}}
\newcommand{\mcf}{\mathcal{F}}
\newcommand{\mcq}{\mathcal{Q}}
\newcommand{\mcr}{\mathcal{R}}
\newcommand{\krange}{\{0,1,\ldots,2k^2\}}
\newcommand{\prange}{\{0,1,\ldots,p\}}
\newcommand{\ct}{\chi(t)}
\newcommand{\tpi}{\Tilde{\Pi}}
\newcommand{\fs}{{$f$-sound}\xspace}
\newcommand{\fsness}{$f$-soundness\xspace}
\newcommand{\fc}{{$f$-correct}\xspace}
\newcommand{\fcness}{$f$-correctness\xspace}
\newcommand{\rs}{{$r$-sound}\xspace}
\newcommand{\rsness}{$r$-soundness\xspace}
\newcommand{\rc}{{$r$-correct}\xspace}
\newcommand{\rcness}{$r$-correctness\xspace}
\newcommand{\gs}{{$g$-sound}\xspace}
\newcommand{\gsness}{$g$-soundness\xspace}
\newcommand{\gc}{{$g$-correct}\xspace}
\newcommand{\gcness}{$g$-correctness\xspace}
\newcommand{\hs}{{$h$-sound}\xspace}
\newcommand{\hsness}{$h$-soundness\xspace}
\newcommand{\hc}{{$h$-correct}\xspace}
\newcommand{\hcness}{$h$-correctness\xspace}
\newcommand{\hps}{{$\mu$-sound}\xspace}
\newcommand{\hpsness}{$\mu$-soundness\xspace}
\newcommand{\hpc}{{$\mu$-correct}\xspace}
\newcommand{\hpcness}{$\mu$-correctness\xspace}
\newcommand{\nus}{{$\nu$-sound}\xspace}
\newcommand{\nusness}{$\nu$-soundness\xspace}
\newcommand{\nuc}{{$\nu$-correct}\xspace}
\newcommand{\nucness}{$\nu$-correctness\xspace}
\newcommand{\R}{\mathbb{R}}
\newcommand{\E}{\mathbb{E}}
\newcommand{\cP}{\mathcal{P}}
\newcommand{\minmax}{minmax\xspace}
\newcommand{\cost}{\text{cost}}
\newtheorem{theorem}{Theorem}[section]
\newtheorem{lemma}{Lemma}[section]
\newtheorem{proposition}{Proposition}[section]
\newtheorem{claim}{Claim}[section]
\newtheorem{observation}{Observation}[section]
\newtheorem{definition}{Definition}[section]
\def\final{0}  
\def\iflong{\iffalse}
\newcommand{\knote}[1]{{\color{red}[{\tiny Karthik: \bf #1}]\marginpar{\color{red}*}}}
\newcommand{\wnote}[1]{{\color{blue}[{\tiny Weihang: \bf #1}]\marginpar{\color{blue}*}}}
\newcommand{\todonote}[1]{{\color{red}[{\tiny TODO: \bf #1}]\marginpar{\color{red}*}}}
\newcommand{\knote}[1]{}
\newcommand{\wnote}[1]{}
\newcommand{\todonote}[1]{}
\def\set#1{\left\{ #1 \right\}}
\title{Fixed Parameter Approximation Scheme for  Min-max $k$-cut\thanks{University of Illinois, Urbana-Champaign, Email: \email{\{karthe, weihang3\}@illinois.edu}. Supported in part by NSF grants CCF-1814613 and CCF-1907937.}
}
\author{
Karthekeyan Chandrasekaran
\and 
Weihang Wang
}
\date{}
\begin{document}

\maketitle

\begin{abstract}
    We consider the graph $k$-partitioning problem under the min-max objective, termed as \mmkc. The input here is a graph $G=(V,E)$ with non-negative edge weights $w:E\rightarrow \R_+$ and an integer $k\ge 2$ and the goal is to partition the vertices into $k$ non-empty parts $V_1, \ldots, V_k$ so as to minimize $\max_{i=1}^k w(\delta(V_i))$. 
    Although minimizing the sum objective $\sum_{i=1}^k w(\delta(V_i))$, termed as \mskc, has been studied extensively in the literature, very little is known about minimizing the max objective. We initiate the study of \mmkc by showing that it is NP-hard and W[1]-hard when parameterized by $k$, and design a parameterized approximation scheme when parameterized by $k$. 
    The main ingredient of our parameterized approximation scheme is an exact algorithm for \mmkc that runs in time $(\lambda k)^{O(k^2)}n^{O(1)}$, where $\lambda$ is value of the optimum and $n$ is the number of vertices. Our algorithmic technique builds on the technique of Lokshtanov, Saurabh, and Surianarayanan \cite{LSS} who showed a similar result for \mskc. 
    Our algorithmic techniques are more general and can be used to obtain parameterized approximation schemes for minimizing $\ell_p$-norm measures of $k$-partitioning for every $p\ge 1$. 
\end{abstract}


\newpage
\setcounter{page}{1}

\section{Introduction}\label{sec:intro}
Graph partitioning problems are fundamental for their intrinsic theoretical value as well as applications in clustering. 
In this work, we consider graph partitioning under the \emph{minmax} objective. The input here is a graph $G=(V,E)$ with non-negative edge weights $w:E\rightarrow \R_+$ along with an integer $k\ge 2$ and the goal is to partition the vertices of $G$ into $k$ non-empty parts $V_1, \ldots, V_k$ so as to minimize $\max_{i=1}^k w(\delta(V_i))$; here, $\delta(V_i)$ is the set of edges which have exactly one end-vertex in $V_i$ and $w(\delta(V_i)):=\sum_{e\in \delta(V_i)}w(e)$ is the total weight of the edges in $\delta(V_i)$. We refer to this problem as \mmkc. 


\medskip
\noindent \emph{Motivations.}
Minmax objective for optimization problems has an extensive literature in approximation algorithms. It is relevant in scenarios where the goal is to achieve fairness/balance---e.g., load balancing in multiprocessor scheduling, discrepancy minimization, min-degree spanning tree, etc. 
In the context of graph cuts and partitioning, recent works (e.g., see \cite{CGS17, AKS19, KMZ19}) have proposed and studied alternative minmax objectives that are different from \mmkc. 

The complexity of \mmkc was also raised as an open problem by Lawler \cite{La73}. Given a partition $V_1, \ldots, V_k$ of the vertex set of an input graph, one can measure the quality of the partition in various natural ways. Two natural measures are (i) the max objective given by $\max_{i=1}^k w(\delta(V_i))$ and (ii) the sum objective given by $\sum_{i=1}^k w(\delta(V_i))$. We will discuss other \emph{$\ell_p$-norm measures} later. Once a measure is defined, a corresponding optimization problem involves finding a partition that minimizes the measure. 
We will denote the optimization problem where the goal is to minimize the sum objective as \mskc. 

\medskip
\noindent \emph{\mskc and prior works.}
For $k=2$, the objectives in \mmkc and \mskc coincide owing to the symmetric nature of the graph cut function (i.e., $w(\delta(S))=w(\delta(V\setminus S))$ for all $S\subseteq V$) but the objectives differ for $k\ge 3$. \mskc has been studied extensively in the algorithms community leading to fundamental graph structural results. 
We briefly recall the literature on \mskc. 

Goldschmidt and Hochbaum \cite{GH88, GH94} showed that \mskc is NP-hard when $k$ is part of input by a reduction from \textsc{CLIQUE} and designed the first polynomial time algorithm for fixed $k$. Their algorithm runs in time $n^{O(k^2)}$, where $n$ is the number of vertices in the input graph. Subsequently, Karger and Stein \cite{KS96} gave a random contraction based algorithm that runs in time $\tilde{O}(n^{2k-2})$. Thorup \cite{Th08} gave a tree-packing based deterministic algorithm that runs in time $\tilde{O}(n^{2k})$. 
The last couple of years has seen renewed interests in \mskc with  
exciting progress \cite{Ma18, GLL18-SODA, GLL18-FOCS, CQX19, Li19, GLL20-STOC, GHLL20, CKLPPSW18}. 
Very recently, Gupta, Harris, Lee, and Li \cite{GLL20-STOC, GHLL20} have shown that the Karger-Stein algorithm in fact runs in $\tilde{O}(n^k)$ time; $n^{(1-o(1))k}$ seems to be a lower bound on the run-time of any algorithm \cite{Li19}. 
The hardness result of Goldschmidt and Hochbaum as well as their algorithm inspired Saran and Vazirani \cite{SV95} to consider \mskc when $k$ is part of input from the perspective of approximation. They showed the first polynomial-time $2$-approximation for \mskc. Alternative $2$-approximations have also been designed subsequently \cite{RS08, ZNI05}. For $k$ being a part of the input, Manurangsi\cite{Ma18} showed that there does not exist a polynomial-time  $(2-\epsilon)$-approximation for any constant $\epsilon>0$ under the Small Set Expansion Hypothesis. 

\mskc has also been investigated from the perspective of fixed-parameter algorithms. 
It is known that \mskc when parameterized by $k$ is W[1]-hard and does not admit a $f(k)n^{o(1)}$-time algorithm for any function $f(k)$ \cite{DEFPR03, FPT-book}. 
Motivated by this hardness result and Manurangsi's $(2-\epsilon)$-inapproximability result, 
Gupta, Lee, and Li \cite{GLL18-SODA} raised the question of whether there exists a \emph{parameterized approximation algorithm} for \mskc when parameterized by $k$, i.e., can one obtain a $(2-\epsilon)$-approximation in time $f(k)n^{O(1)}$ for some constant $\epsilon>0$? As a proof of concept, they designed a $1.9997$-approximation algorithm that runs in time $2^{O(k^6)}n^{O(1)}$ \cite{GLL18-SODA} and a $(1+\epsilon)$-approximation algorithm that runs in time $(k/\epsilon)^{O(k)}n^{k+O(1)}$ \cite{GLL18-FOCS}. Subsequently, Kawarabayashi and Lin \cite{KL20} designed a $(5/3+\epsilon)$-approximation algorithm that runs in time $2^{O(k^2 \log k)}n^{O(1)}$. 
This line of work culminated in a \emph{parameterized approximation scheme} when parameterized by $k$---Lokshtanov, Saurabh, and Surianarayanan \cite{LSS} designed a $(1+\epsilon)$-approximation algorithm that runs in time $(k/\epsilon)^{O(k)}n^{O(1)}$. We emphasize that, from the perspective of algorithm design, a parameterized approximation scheme is more powerful than a parameterized approximation algorithm. 

\medskip
\noindent \emph{Fixed-terminal variants.}
A natural approach to solve both \mmkc and \mskc is to solve their fixed-terminal variants: 
The input here is a graph $G=(V,E)$ with non-negative edge costs $w:E\rightarrow \R_+$ along with $k$ terminals $v_1, \ldots, v_k\in V$ and the goal is to partition the vertices into $k$ parts $V_1, \ldots, V_k$ such that $v_i\in V_i$ for every $i\in [k]$ so as to minimize the measure of interest for the partition. 
The fixed-terminal variant of \mskc, popularly known as \multiwaycut, is NP-hard for $k\ge 3$ \cite{DJPSY94} and has a rich literature. It admits a $1.2965$ approximation \cite{SV14} and does not admit a $(1.20016-\epsilon)$-approximation for any constant $\epsilon>0$ under the unique games conjecture \cite{BCKM20}. 
The fixed-terminal variant of \mmkc, known as \minmaxmultiwaycut, is NP-hard for $k\ge 4$ \cite{svitkina-tardos} and admits an $O(\sqrt{\log n \log k})$-approximation \cite{BFKMNNS14}. 
Although fixed-terminal variants are natural approaches to solve global cut problems 
(similar to using min $\set{s,t}$-cut to solve global min-cut), they have two limitations: (1) they are not helpful when $k$ is part of input and (2) even for fixed $k$, they do not give the best algorithms (e.g., even for $k=3$, \multiwaycut is NP-hard while \mskc is solvable in polynomial time as discussed above). 

\medskip
\noindent \emph{\mmkc vs \mskc}. There is a fundamental structural difference between \mmkc and \mskc. The optimal solution to \mskc satisfies a nice property: 
assuming that the input graph is connected, every part in an optimal partition for \mskc induces a connected subgraph. 
Hence, \mskc is also phrased as the problem of deleting a subset of edges with minimum weight so that the resulting graph contains at least $k$ connected components. However, this nice property does not hold for \mmkc as illustrated by the example in Figure \ref{fig:disconnected-part}. 

\begin{figure}[ht]
    \centering
    \includegraphics[width=0.6\textwidth]{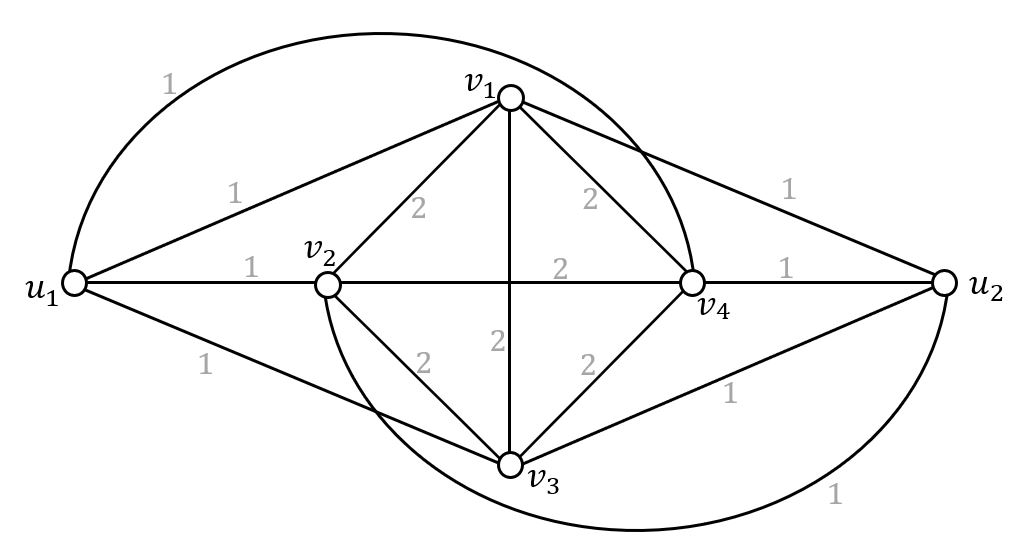}
    \caption{An example where the unique optimum partition for \mmkc for $k=5$ induces a disconnected part. The edge weights are as shown. Every $5$-partition in this example necessarily consists of one part with $2$ vertices and four singleton parts. If the part with $2$ vertices is $\{u_1,u_2\}$, then the objective value is $8$. If the part with 2 vertices is $\{u_i,v_j\}$ where $i\in[2]$ and $j\in[4]$, then the objective value is $10$. If the part with $2$ vertices is $\{v_i,v_j\}$ where $i,j\in[4]$, then the objective value is $12$. Hence the optimum partition for minmax $5$-cut is $(\{u_1,u_2\},\{v_1\},\{v_2\},\{v_3\},\{v_4\})$, where the first part induces a disconnected subgraph.
    }
    \label{fig:disconnected-part}
\end{figure}

\medskip
\noindent \emph{\mmkc for fixed $k$.} 
For fixed $k$, there is an easy approach to solve \mmkc based on the following observation: For a given instance, an optimum solution to \mmkc is a $k$-approximate optimum to \mskc. The randomized algorithm of Karger and Stein implies that the number of $k$-approximate solutions to \mskc is $n^{O(k^2)}$ and they can all be enumerated in polynomial time \cite{KS96, GLL20-STOC, GHLL20} (also see \cite{CQX19}). These two facts immediately imply that \mmkc can be solved in $n^{O(k^2)}$ time. 
We recall that the graph cut function is symmetric and submodular.\footnote{A function $f:2^V\rightarrow \R$ is symmetric if $f(S)= f(V\setminus S)$ for all $S\subseteq V$ and is submodular if $f(A)+f(B)\ge f(A\cap B) + f(A\cup B)$.}
In an upcoming work, Chandrasekaran and Chekuri \cite{CC21} show that the more general problem of min-max symmetric submodular $k$-partition\footnote{In the min-max symmetric submodular $k$-partition problem, the input is a symmetric submodular function $f:2^V\rightarrow \R$ given by an evaluation oracle, and the goal is to partition the ground set $V$ into $k$ non-empty parts $V_1, \ldots, V_k$ so as to minimize $\max_{i=1}^k f(V_i)$.} is also solvable in time $n^{O(k^2)}T$, where $n$ is the size of the ground set and $T$ is the time to evaluate the input submodular function on a given set. 


\subsection{Results}\label{sec:results}

In this work, we focus on \mmkc when $k$ is part of input. 
We first show that \mmkc is strongly NP-hard. Our reduction also implies that it is W[1]-hard when parameterized by $k$, i.e., there does not exist a $f(k)n^{O(1)}$-time algorithm for any function $f(k)$. 
\begin{restatable}{theorem}{thmHardness}
\label{thm:hardness}
\mmkc is strongly NP-hard and W[1]-hard when parameterized by $k$. 
\end{restatable}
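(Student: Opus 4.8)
\medskip
\noindent\textbf{Proof plan.}
The plan is to reduce from \textsc{Clique}, which is both strongly NP-hard and W[1]-hard when parameterized by the size $t$ of the sought clique. Given a \textsc{Clique} instance $(H,t)$, I will construct in polynomial time an edge-weighted graph $G$, together with $k:=t+1$ and a threshold $B$ (all polynomially bounded), so that the minmax $(t+1)$-cut value of $G$ is at most $B$ if and only if $H$ contains a clique of size $t$. Since $k$ depends only on $t$ and all weights are polynomially bounded, this single reduction yields both strong NP-hardness and W[1]-hardness parameterized by $k$. As preprocessing I would assume $t\ge 3$, that $n:=|V(H)|$ is large compared to $t$ (otherwise \textsc{Clique} is solvable by brute force), and -- by a routine padding argument (the folklore fact that \textsc{Clique} stays hard on regular graphs) -- that $H$ is $r$-regular. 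The construction then ``blows up'' $H$ into a weighted \emph{complete} graph on $V(H)$: fix $\theta:=n^{5}$ and give the pair $uv$ weight $\theta+\mathds{1}[uv\in E(H)]$. Regularity of $H$ makes $G$ weighted $d$-regular with $d=(n-1)\theta+r$; I set $B:=td-t(t-1)(\theta+1)$, which is exactly $w(\delta_{G}(S))$ for any $t$-set $S$ inducing a clique of $H$.

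The choice $k=t+1$ is the conceptual crux, and it is dictated by the minmax objective: a part can always be made cheap by shrinking it, and merging two parts never increases any cut, so ``isolating a clique as a single part'' (which would suggest $k\approx n-t+1$) is counterproductive -- such an instance collapses to a polynomial-time matching-type question. Instead, the $t$ clique vertices should be split off as $t$ \emph{singletons}, and it is the complementary big part whose cut value, $w(\delta_{G}(S))=td-2w(E_{G}(S))$, encodes how clique-like $S$ is. Concretely, for the forward direction: if $H$ has a $t$-clique $S=\{s_{1},\dots,s_{t}\}$, the partition $(\{s_{1}\},\dots,\{s_{t}\},V\setminus S)$ has singleton cuts $d$ and big-part cut $w(\delta_{G}(S))=B$; since one checks $B\ge d$, the minmax value is at most $B$.

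For the backward direction, assume $H$ is $K_{t}$-free; I must show every $(t+1)$-partition has some part of cut $>B$. For a part $P$, let $P'$ be the smaller of $P,V\setminus P$ and $s=|P'|$; then $w(\delta_{G}(P))=\theta\,s(n-s)+rs-2e_{H}(P')$, and since $H[P']$ is $K_{t}$-free, Tur\'an's theorem gives $e_{H}(P')\le(1-\tfrac{1}{t-1})\binom{s}{2}$. Because $\theta$ swamps all lower-order terms, a short computation shows $w(\delta_{G}(P))>B$ whenever $s\ge t$. Hence in any $(t+1)$-partition all of whose cuts are $\le B$, every part has size at most $t-1$ or at least $n-t+1$. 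At most one part can be that large; and neither ``all $t+1$ parts of size $\le t-1$'' (they cannot cover $n$ vertices) nor ``one huge part plus $t$ parts of total size $\le t-1$'' (impossible for $t$ nonempty parts) is consistent, so no such partition exists. Thus the minmax $(t+1)$-cut value exceeds $B$, completing the equivalence; strong NP-hardness and W[1]-hardness parameterized by $k$ follow.

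The step I expect to demand the most care is exactly this backward analysis. The minmax objective inherently favors unbalanced partitions and never penalizes coalescing parts, so one cannot directly impose a ``balanced''-type structure; the purpose of the uniform weight $\theta$ is to make every cut essentially a function of the part sizes, which is what forces medium-sized parts to be provably expensive and pins down the shape of every near-optimal partition. Getting the precise quantitative relations among $\theta,n,t,r,B$ right, and carrying out the regularization preprocessing cleanly, is the technical overhead; the driving idea is the simple ``isolate the clique as $t$ singletons'' gadget together with Tur\'an's bound.
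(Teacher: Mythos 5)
Your reduction is correct and shares the paper's central idea---reduce from \textsc{Clique} with $k=t+1$, isolate the would-be clique as $k-1$ small parts, and detect the clique via the cut of the complementary big part---but the gadget and the backward analysis are genuinely different. The paper stays unweighted: each vertex $v_i$ of $H$ is blown up into an $N$-clique $C_i$, a separate ``garbage'' $N$-clique $W$ is introduced, and $M-\deg_H(v_i)$ edges from $C_i$ to $W$ regularize degrees \emph{inside the gadget}, so no preprocessing assumption on $H$ is needed and the construction immediately also yields the ETH lower bound; the backward direction then reads off the clique directly by an edge count on $\delta_{G'}(P_{h+1})$. You instead assume $H$ is regular up front (a true but nontrivial folklore fact that deserves a citation or a sketch---regularizing without creating spurious $t$-cliques requires some care), build a weighted complete graph where the dominant background weight $\theta$ plays the same ``glue'' role as the paper's $N$-cliques, and in the backward direction argue by contrapositive via Tur\'an's theorem plus a size-counting contradiction (every cheap part must be tiny or huge, and $t+1$ nonempty such parts cannot coexist). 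Both routes are valid; the paper's is more elementary and self-contained (no Tur\'an, no regularity preprocessing, no weights), while yours gives a somewhat cleaner structural picture of what near-optimal partitions can look like. One small technical slip: Tur\'an's theorem gives $e_H(P')\le\bigl(1-\tfrac{1}{t-1}\bigr)\tfrac{s^2}{2}$, not $\bigl(1-\tfrac{1}{t-1}\bigr)\binom{s}{2}$ (for $s=t=3$ the path on three vertices is $K_3$-free with $2>1.5$ edges, violating your bound); with the correct inequality the slack at $s=t$ drops from $t$ to $t/(t-1)$, which is still positive, so your argument survives, but the ``because $\theta$ swamps everything'' phrasing is misleading at $s=t$ since the $\theta$ terms cancel there and Tur\'an carries the whole weight.
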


Our hardness reduction also implies that \mmkc does not admit an algorithm that runs in time $n^{o(k)}$ assuming the exponential time hypothesis. Given the hardness result, it is natural to consider approximations and fixed-parameter tractability. Using the known $2$-approximation for \mskc and the observation that the optimum value of \mskc is at most $k$ times the optimum value of \mmkc, it is easy to get a $(2k)$-approximation for \mmkc. An interesting open  question is whether we can improve the approximability. 

The hardness results also raise the question of whether \mmkc 
admits a parameterized approximation algorithm when parameterized by $k$ or, going a step further, does it admit a parameterized approximation scheme when parameterized by $k$? We resolve this question affirmatively by designing a parameterized approximation scheme. Let $G=(V,E)$ be a graph with non-negative edge weights $w:E\rightarrow \R_+$. We write $G$ to denote the unit-cost version  of the graph (i.e., the unweighted graph) and $G_w$ to denote the graph with edge weights $w$. We emphasize that the unweighted graph could have parallel edges. 
For a partition $(V_1, \ldots, V_k)$ of $V$, we define 
\[
\cost_{G_w}(V_1, \ldots, V_k):=\max\set{w(\delta(V_i)): i\in [k]}.
\]
We will denote the minimum cost of a $k$-partition in $G_w$ by $\OPT(G_w,k)$. 
The following is our algorithmic result showing that \mmkc admits a parameterized approximation scheme when parameterized by $k$.

\begin{restatable}{theorem}{thmFPAS}
\label{thm:FPAS}
There exists a randomized algorithm that takes as input 
an instance of \mmkc, namely an $n$-vertex graph $G=(V,E)$ with edge weights $w:E\rightarrow \R_{\ge 0}$ and an integer $k\ge 2$, 
along with an $\epsilon\in (0,1)$, 
and runs in time $(k/\epsilon)^{O(k^2)}n^{O(1)}\log(\max_{e\in E}w(e))$
to return a partition $\mcp$ of the vertices of $G$ such that $\cost_{G_w}(\mcp)\le (1+\epsilon)\opt(G_w, k)$ with high probability. 
\end{restatable}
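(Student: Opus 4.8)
The plan is to reduce Theorem~\ref{thm:FPAS} to an exact algorithm for \mmkc running in time $(\lambda k)^{O(k^2)}n^{O(1)}$ on an unweighted multigraph whose optimum value is $\lambda$, exactly as the excerpt advertises (and mirroring how \cite{LSS} derives their parameterized approximation scheme for \mskc from an exact algorithm parameterized by the optimum). So the first step is to state and use this exact algorithm as a black box: assume we have an algorithm $\mathcal{A}$ that, given an $n$-vertex multigraph $G$ and integer $k$, either returns a minimum-cost $k$-partition or reports that $\OPT(G,k) > \tau$ for a given threshold $\tau$, in time $(\tau k)^{O(k^2)}n^{O(1)}$. (If only the optimization version is available, combine it with a guess-and-check on $\tau$.)

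The second step is the standard weighted-to-unweighted bucketing/scaling reduction. Given $G_w$ and $\epsilon\in(0,1)$, I would first binary-search for the correct scale of $\OPT(G_w,k)$; concretely, for a guessed value $B$ in a geometric sequence of $O(\log(\sum_e w(e))) = O(n\log(\max_e w(e)))$ candidates, round each edge weight down to an integer multiple of $\epsilon B/(n^2 k)$ (or simply set $w'(e) := \lfloor w(e) \cdot n^2 k / (\epsilon B) \rfloor$), discard edges whose rounded weight is $0$, and cap rounded weights at some $\poly(n,k,1/\epsilon)$ value — any edge heavier than that cannot be cut in a partition of cost $\le (1+\epsilon)B$ if $B$ is the right guess, so it can be contracted. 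After contraction, the surviving graph $G'$ has integer edge weights bounded by $\lambda := \poly(n,k,1/\epsilon)$, which we view as a multigraph with at most $\lambda$ parallel copies per edge; then $\OPT(G',k) \le \lambda \cdot (\text{something polynomial})$, and more importantly the right guess $B$ makes $\OPT(G',k) = O(n^2 k/\epsilon)$, so running $\mathcal{A}$ with threshold $\tau = O(n^2 k/\epsilon)$ takes time $(\tau k)^{O(k^2)} n^{O(1)} = (k/\epsilon)^{O(k^2)} n^{O(1)}$ as required. Iterating over all $O(n \log(\max_e w(e)))$ guesses for $B$ and keeping the best partition found contributes only the advertised extra $\log(\max_e w(e))$ factor.

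The third step is the approximation-quality analysis: I need to check that rounding/capping/contracting changes the optimum by at most a $(1+\epsilon)$ factor, in both directions. Fix the guess $B$ with $\OPT(G_w,k) \le B < (1+\epsilon')\,\OPT(G_w,k)$ for a suitable $\epsilon' = \Theta(\epsilon)$ (such a $B$ exists in the geometric sequence). For this $B$: contracting only super-heavy edges is safe because any near-optimal partition leaves them uncut; rounding weights down by at most $\epsilon B/(n^2 k)$ per edge changes $w(\delta(V_i))$ by at most $|\delta(V_i)| \cdot \epsilon B/(n^2 k) \le \epsilon B/k \le \epsilon\,\OPT(G_w,k)$ (crudely, $|\delta(V_i)| \le n^2$), so $\OPT(G',k)$ and $\OPT(G_w,k)$ agree up to an additive $O(\epsilon\,\OPT(G_w,k))$, hence up to a multiplicative $(1+O(\epsilon))$ factor; and lifting an optimal partition of $G'$ back to $G_w$ incurs the same additive error. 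Rescaling $\epsilon$ at the start absorbs the constants. The one genuinely delicate point — the main obstacle — is that the exact algorithm $\mathcal{A}$ (Lemma/Theorem to be invoked) is itself the technical heart of the paper and is what actually needs the LSS-style machinery; given that it is available, everything here is the routine scaling wrapper. A secondary subtlety is that $\mathcal{A}$ is randomized with high success probability, so I should amplify by $O(\log n + \log(\text{number of guesses}))$ independent repetitions and a union bound so that, with high probability, $\mathcal{A}$ succeeds on all guesses simultaneously; the reported partition — the minimum-cost one over all guesses — is then a $(1+\epsilon)$-approximation to $\OPT(G_w,k)$ with high probability, within the claimed running time.
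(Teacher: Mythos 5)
Your plan has the right overall shape — convert the weighted instance to an unweighted multigraph with bounded optimum $\tau$, then invoke the exact algorithm of Theorem~\ref{thm dp}, which runs in time $(\tau k)^{O(k^2)} n^{O(1)}$ — but the arithmetic in your second step is wrong, and the error is fatal. After rounding you obtain $\tau = O(n^2 k/\epsilon)$, which is \emph{polynomial} in $n$. You then assert $(\tau k)^{O(k^2)} n^{O(1)} = (k/\epsilon)^{O(k^2)} n^{O(1)}$, but in fact $(n^2 k^2/\epsilon)^{O(k^2)} = n^{O(k^2)}(k/\epsilon)^{O(k^2)}$, so the resulting running time carries an $n^{O(k^2)}$ factor and is not of the required form $f(k,\epsilon)\cdot n^{O(1)}$. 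This is not a fixable constant: rounding alone cannot drive $\tau$ below $\Theta(m/\epsilon)$, because the rounding granularity must be $O(\epsilon\,\OPT/m)$ to keep the cumulative additive error over $m$ edges at $O(\epsilon\,\OPT)$, which forces the rescaled optimum to be at least $\Omega(m/\epsilon)$.

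The paper closes precisely this gap with two additional reduction steps (Theorem~\ref{thm:reduction-to-unweighted-instances-with-small-opt}) that your proposal omits. After the knapsack-style rounding, the paper (a) repeatedly deletes small cut sets of size at most $\epsilon\lambda/(2(k-1))$ until every remaining connected component has min-cut value at least $\epsilon\,\OPT/k$, losing at most an additive $\epsilon\,\OPT$ in cost, and then (b) applies Benczur--Karger random cut sparsification with sampling probability $p = \Theta\bigl(k\log n/(\epsilon^2\,\OPT(H_1,2))\bigr)$, which preserves every cut within a $(1\pm\epsilon)$ factor after rescaling by $p$. Step (a) guarantees $\OPT(H_1,2)\ge\epsilon\,\OPT/k$, so the rescaled optimum of the sparsified graph is $O((k/\epsilon^3)\log n)$ — polylogarithmic in $n$. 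Only at that point does $(\lambda k)^{O(k^2)} n^{O(1)}$ collapse to $(k/\epsilon)^{O(k^2)}(\log n)^{O(k^2)} n^{O(1)} = (k/\epsilon)^{O(k^2)} n^{O(1)}$, via the case split showing $(\log n)^{O(k^2)} \le k^{O(k^2)} + n^{O(1)}$. A secondary confusion in your write-up: the exact algorithm of Theorem~\ref{thm dp} is deterministic; the randomness in Theorem~\ref{thm:FPAS} originates entirely in the sparsification step you dropped, so the amplification you describe is aimed at the wrong component.
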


We note that $\log(\max_{e\in E}w(e))$ is polynomial in the size of the input. 
Theorem \ref{thm:FPAS} can be viewed as the counterpart of the parameterized approximation scheme for \mskc due to Lokshtanov, Saurabh, and Surianarayanan \cite{LSS} but for \mmkc. The central component of our parameterized-approximation scheme given in Theorem \ref{thm:FPAS} is the following result which shows a fixed-parameter algorithm for \mmkc in unweighted graphs when parameterized by $k$ and the solution size. 

\begin{restatable}{theorem}{thmDP}
\label{thm dp}
There exists an algorithm that takes as input an unweighted instance of \mmkc, namely an $n$-vertex graph $G=(V,E)$ and an integer $k\ge 2$, along with an integer $\lambda$, 
and runs in time $(k\lambda)^{O(k^2)}n^{O(1)}$
to determine if there exists a $k$-partition $(V_1, \ldots, V_k)$ of $V$ such that $\cost_G(V_1,\ldots, V_k)\le \lambda$ and if so, then finds an optimum. 
\end{restatable}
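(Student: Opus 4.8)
The plan is to design a recursive branching algorithm that constructs the sought $k$-partition one part at a time, adapting the divide-and-conquer strategy of Lokshtanov, Saurabh, and Surianarayanan~\cite{LSS} for \mskc to the minmax objective. First I would dispose of the degenerate cases. If $\lambda=0$, a feasible $k$-partition exists if and only if $G$ has at least $k$ connected components, which is easy to check; so assume $\lambda\ge 1$. I would also reduce to connected $G$: if $G$ has $t\ge k$ components, return the partition consisting of $k-1$ singleton-component parts and one part equal to the union of the rest (cost $0$); and if $t<k$, then no part of a cost-$\le\lambda$ solution needs to span two components, so branch over the at most $2^{k-1}$ compositions $k=k_1+\dots+k_t$ with $k_j\ge 1$ and recurse on each instance $(C_j,k_j,\lambda)$ independently.

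\textbf{The recursive framework.} Assume $G$ is connected. The algorithm maintains a subproblem consisting of the set $S\subseteq V$ of still-unassigned vertices, the graph $G[S]$, the number $\kappa$ of parts remaining to be created inside $S$, and a ``tagged interface'' $B$: the list of edges of $\delta_G(S)$, each labelled with the already-created part it meets. Since $\delta_G(S)$ is contained in the union of the boundaries of the $\le k$ already-created parts, $|B|\le k\lambda$, so $B$ fits in the state. A step ``peels off'' one more part: pick a nonempty $V\subseteq S$ with $|S\setminus V|\ge\kappa-1$, require that the total boundary of $V$ in $G$ — counting both edges of $G[S]$ from $V$ to $S\setminus V$ and edges of $B$ incident to $V$ — is at most $\lambda$, update the running boundary tallies of the already-created parts, append the new cut edges to the interface, and recurse on $(G[S\setminus V],\kappa-1,B')$. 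The recursion depth is $k$, and every $k$-partition $(V_1,\dots,V_k)$ with $\cost_G(V_1,\dots,V_k)\le\lambda$ is produced by peeling its parts in some order, so the algorithm is correct as long as the branching at each step ranges over a sufficiently rich family of candidate sets $V$. Two features set this apart from the \mskc setting of~\cite{LSS}: a part of a minmax-optimal partition may induce a disconnected subgraph (Figure~\ref{fig:disconnected-part}), so $G[S\setminus V]$ may be disconnected even when $G$ is connected; and the budget is per-part ($\delta_G(V_i)\le\lambda$ for each $i$) rather than global, which is precisely why the tagged interface $B$ has to be carried along.

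\textbf{Bounding the branching — the crux.} For the running time I need to show that at each of the $k$ peeling steps it suffices to try only $(k\lambda)^{O(k)}$ candidate sets $V$, so that the branching tree has $(k\lambda)^{O(k^2)}$ leaves. This is where the structural machinery of~\cite{LSS} is adapted. Morally, one fixes a pivot vertex $v\in S$, notes that $v$ lies in some part of an optimal completion, and shows that this part may be taken from a family of only $(k\lambda)^{O(k)}$ sets that can be generated in polynomial time — each such set being recoverable by minimum-cut computations once one guesses a $(k\lambda)^{O(k)}$-size piece of ``advice'', essentially the number of cut edges between the peeled part and each of the at most $k$ other parts together with the associated interface. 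Establishing that advice of this size suffices — that is, re-proving the separator/structure lemma of~\cite{LSS} for the minmax objective — is the main obstacle I expect: the per-part boundary budget and the possible disconnectedness of optimal parts are exactly the places where their argument does not transfer verbatim and has to be reworked.

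\textbf{Accounting.} Each of the $k$ peeling levels multiplies the number of branches by $(k\lambda)^{O(k)}$, and each branch does $\poly(n)$ work (minimum-cut and separator enumeration, interface updates, boundary-budget checks), for a total of $(k\lambda)^{O(k^2)}n^{O(1)}$. The algorithm reports a feasible $k$-partition of cost $\le\lambda$ when one exists; composing it with a binary search over the target value in $\{0,1,\dots,\lambda\}$ (an $O(\log\lambda)$ overhead) returns an optimum $k$-partition.
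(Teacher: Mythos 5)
Your proposal takes a genuinely different route from the paper: you propose a recursive branching (``peeling'') algorithm that builds the partition one part at a time while carrying a tagged interface, whereas the paper computes $\OPT$ by a dynamic program on a compact, low-adhesion, edge-unbreakable tree decomposition (Lemma~\ref{lemma tree decomposition}) guided by a spanning tree that the optimum $2k^2$-respects (Lemma~\ref{lemma thorup tree}). In the paper the subproblem at a tree node $t$ is indexed by a $T$-feasible partition of the adhesion $A_t$, a per-part boundary-cost vector $\bar x\in\lrange^k$, and a spanning-tree crossing budget $d\le 2k^2$; there is no peeling, no pivot vertex, and no search over candidate sets for a single part.

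The difficulty is that your proposal leaves exactly the crucial step unproved, and I do not think it can be proved as stated. You need, at each peeling level, a family of only $(k\lambda)^{O(k)}$ candidate sets such that some member is a part of an optimal completion, recoverable from a short ``advice'' string plus min-cut computations. But the number of candidate parts that could appear in \emph{some} optimal extension is in general $n^{\Theta(k)}$, not $(k\lambda)^{O(k)}$. Take $G$ a cycle on $n$ vertices and $k$ arbitrary; then $\OPT=2=\lambda$, and for a fixed pivot $v$ there are $\Theta(n)$ distinct intervals of boundary cost $2$ containing $v$, and depending on how the remaining cycle must be split, essentially any of them can be forced to be the correct choice. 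No advice of length $(k\lambda)^{O(1)}$ together with min-cut computations singles out the right one, because min cuts separating $v$ from the rest at value $2$ are far from unique. This is precisely why Goldschmidt--Hochbaum style peeling incurs a factor of $n^{O(k)}$ per level: one must guess $\Theta(k)$ \emph{vertices} on the cut frontier, not $\Theta(k)$ numbers. The LSS machinery you cite does not provide a separator lemma that collapses this to $(k\lambda)^{O(k)}$; rather it provides a tree decomposition whose adhesions have size $O(\lambda k)$ and whose bags are $((\lambda k+1)^5,\lambda k)$-edge-unbreakable, which is what makes the bag-by-bag DP state space small. Without a concrete replacement for that structure in your peeling framework, the branching-factor claim is unsupported, and the rest of the accounting (depth $k$, hence $(k\lambda)^{O(k^2)}$ leaves) does not get off the ground.

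A secondary remark: your tagged interface $B$ can be any set of up to $k\lambda$ boundary edges, so if you ever wanted to memoize subproblems the state space would be $\binom{m}{k\lambda}$-sized, far beyond $(k\lambda)^{O(k^2)}$; as a pure search tree this is not an issue, but it means the approach has no fallback to a DP and lives or dies by the branching bound. The paper sidesteps all of this: its states are partitions of small adhesions times $\lrange^k\times\krange$, which is $(\lambda k)^{O(k^2)}$ by Lemma~\ref{lemma F family size}, and the recursion across a tree node's children is handled through nice decompositions (Definition~\ref{def nice decomp}) and the auxiliary functions $r^D,g,h,\mu,\nu$.
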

We emphasize that the algorithm in Theorem \ref{thm dp} is deterministic. 

\subsection{Outline of techniques}
Our NP-hardness and W[1]-hardness results for \mmkc are based on a reduction from the clique problem. Our reduction is an adaptation of the reduction from the clique problem to \mskc due to Downey et al \cite{DEFPR03}.

Our randomized algorithm for Theorem \ref{thm:FPAS} essentially reduces the input weighted instance of \mmkc to an instance where Theorem \ref{thm dp} can be applied: we reduce the instance to an unweighted instance with optimum value $O((k/\epsilon ^3))\log n$, i.e., the optimum value is logarithmic in the number of vertices. Moreover, the reduction runs in time $2^{O(k)}(n/\epsilon)^{O(1)}\log{\opt(G_w,k)}$. 
Applying Theorem \ref{thm dp} to the reduced instance yields a run-time of
\begin{align*}
    \left(\left(\frac{k^2}{\epsilon^3}\right)\log{n}\right)^{O(k^2)}n^{O(1)}
    = \left(\frac{k}{\epsilon}\right)^{O(k^2)}(\log{n})^{O(k^2)}n^{O(1)}
    &= \left(\frac{k}{\epsilon}\right)^{O(k^2)}(k^{O(k^2)}+n)n^{O(1)}\\
    &= \left(\frac{k}{\epsilon}\right)^{O(k^2)}n^{O(1)}.
\end{align*}
Hence, the total run-time (including the reduction time) is $(k/\epsilon)^{O(k^2)}n^{O(1)}\log{\opt(G_w, k)}$, thereby proving Theorem \ref{thm:FPAS}. 

We now briefly describe the reduction to an unweighted instance with logarithmic optimum: 
(i) Firstly, we do a standard knapsack PTAS-style rounding procedure to convert the instance to an unweighted instance with a $(1+\epsilon)$-factor loss. (ii) Secondly, we delete cuts with small value to ensure that all connected components in the graph have large min-cut value, i.e., have min-cut value at least $\epsilon \opt/k$---this deletion procedure can remove at most $\epsilon \opt$ edges and hence, a $(1+\epsilon)$-approximate solution in the resulting graph gives a $(1+O(\epsilon))$-approximate solution in the original graph. (iii) Finally, we do a random sampling of edges with probability $p:=\Theta(k \log n/(\epsilon ^3 \opt))$. This gives a subgraph that preserves all cut values within a $(1\pm \epsilon)$-factor when scaled by $p$ with high probability. The preservation of all cut values also implies that the optimum value to \mmkc is also preserved within a $(1 \pm \epsilon)$-factor. The scaling factor of $p$ allows us to conclude that the optimum in the subsampled graph is $O((k/\epsilon ^3))\log n$. We note that this three step reduction follows the same ideas as that of \cite{LSS} who designed a parameterized approximation scheme for \mskc. Our contribution to the reduction is simply showing that their reduction ideas also apply to \mmkc (see Section \ref{sec:reduction-to-unweighted-instances} for details). 

The main contribution of our work is in proving Theorem \ref{thm dp}, i.e., giving a fixed-parameter algorithm for \mmkc when parameterized by $k$ and the solution size. We discuss this now. At a high-level, we exploit the tools developed by \cite{LSS} who designed a dynamic program based fixed-parameter algorithm for \mskc when parameterized by $k$ and the solution size. Our algorithm for \mmkc is also based on a dynamic program. 
However, since we are interested in \mmkc, the subproblems in our dynamic program are completely different from that of \cite{LSS}. 
We begin with the observation that an optimum solution to \mmkc is a $k$-approximate optimum to \mskc. This observation and the tree packing approach for \mskc due to \cite{CQX19} allows us to obtain, in polynomial time, a spanning tree $T$ of the input graph such that the number of edges of the tree crossing a \mmkc optimum partition is $O(k^2)$. We will call a partition $\Pi$ with $O(k^2)$ edges of the tree $T$ crossing $\Pi$ to be a $T$-feasible partition. 
Next, we use the tools of \cite{LSS} to generate, in polynomial time, a suitable tree decomposition of the input graph---let us call this a \emph{good} tree decomposition. 
The central intuition underlying our algorithm is to use the spanning tree $T$ to guide a dynamic program on the good tree decomposition. 

As mentioned before, our dynamic program is different from that of \cite{LSS}. We now sketch the details of our dynamic program. For simplicity, we assume that we have a value $\lambda \ge \opt(G,k)$. 
The \emph{adhesion} of a tree node $t$ in the tree decomposition, denoted $A_t$, is the intersection of the bag corresponding to $t$ with that of its parent (the adhesion of the root node of the tree decomposition is the empty set). The good tree decomposition that we generate has low adhesion, i.e., the adhesion size is $O(\lambda k)$ for every tree node. In order to define our sub-problems for a tree node $t$, we consider all possible partitions $\mathcal{F}^{A_t}$ of the adhesion $A_t$ containing at most $k$ parts and which can be extended to a $T$-feasible partition of the entire vertex set. A simple counting argument shows that $|\mathcal{F}_{A_t}|=(\lambda k)^{O(k^2)}$ (see Lemma \ref{lemma F family size}). Now consider a Boolean function $f_t:\mathcal{F}^{A_t}\times \set{0,1, \ldots, \lambda}^k \times \set{0,1, \ldots, 2k^2}\rightarrow \set{0,1}$. We note that the domain of the function is small, i.e., $(\lambda k)^{O(k^2)}$. 
Let $(\mcp _{A_t}, \Bar{x}, d)$ denote an argument to the function. The function aims to determine if there exists a partition $\mcp$ of the union of the bags descending from $t$ in the tree decomposition (call this set of vertices to be $V_t$)
so that (i) the projection of the partition  $\mcp$ to $A_t$ is exactly $\mcp_{A_t}$, (ii) the number of edges crossing the $i$'th part of $\mcp$ in the subgraph $G[V_t]$ 
is exactly $x_i$ for all $i\in [k]$, and (iii) the number of tree edges crossing the partition $\mcp$ is at most $d$. It is easy to see that if we can compute such a function $f_r$ for the root node $r$ of the tree decomposition, then it can be used to find the optimum value of \mmkc, namely $\opt(G,k)$. 

However, we are unable to solve the sub-problem (i.e., compute such a function $f_t$) based on the sub-problem values of the children of $t$. 
We observe that instead of solving this sub-problem exactly, a weaker goal of finding a function that satisfies a certain \fc and \fs properties suffices (see Definition \ref{defn:fc-fs} for these properties and Proposition \ref{prop:OPT from fr function}). We show that this weaker goal of computing an \fc and \fs function $f_t$ based on \fc and \fs functions $f_{t'}$ for all children $t'$ of $t$ can be achieved in time $(\lambda k)^{O(k^2)}n^{O(1)}$ (see Lemma \ref{lemma:children f to parent f}). Since the domain of the function is of size $(\lambda k)^{O(k^2)}$ and the tree decomposition is polynomial in the size of the input, the total number of sub-problems that we solve in the dynamic program is $(\lambda k)^{O(k^2)}n^{O(1)}$, thus proving Theorem \ref{thm dp}. 

In order to achieve the weaker goal of computing a function $f_t$ for the tree node $t$ that is \fc and \fs, we progressively define sub-problems and note that it suffices to achieve a weaker goal for all these sub-problems. Consequently, our goal reduces to computing Boolean functions that satisfy certain weaker properties. We encourage the reader to trace towards the base case of the dynamic program during the first read of the dynamic program. 

One of the advantages of our dynamic program (in contrast to that of \cite{LSS}) is that it is also applicable for alternative norm-based measures of $k$-partitions: here, the goal is to find a $k$-partition of the vertex set of the given edge-weighted graph so as to minimize $(\sum_{i=1}^k w(\delta(V_i))^p)^{1/p}$---we call this as \mlpnormkc. We note that \mmkc is exactly \mlinfinitynormkc while \mskc is exactly \mlonenormkc.
Our dynamic program can also be used to obtain the counterpart of Theorem \ref{thm dp} for \mlpnormkc for every $p\ge 1$. This result in conjunction with the reduction to unweighted instances (which can be shown to hold for \mlpnormkc) also leads to a parameterized approximation scheme for \mlpnormkc for every $p\ge 1$.

\paragraph{Organization.} 
We set up the tools to prove Theorem \ref{thm dp} in Section \ref{sec:prelims}. We prove Theorem \ref{thm dp} in Section \ref{sec:DP}. 
We show a reduction from weighted instances to unweighted instances with logarithmic optimum value in Section \ref{sec:reduction-to-unweighted-instances}. 
We use Theorem \ref{thm dp} and the reduction to unweighted instances with logarithmic optimum value to prove Theorem \ref{thm:FPAS} in Section \ref{sec:FPAS}.
We prove the hardness results mentioned in Theorem \ref{thm:hardness} in Section \ref{sec:hardness}. 
We conclude with a few open questions in Section \ref{sec:conclusion}.

\section{Tools for the fixed-parameter algorithm}\label{sec:prelims}

In this section, we set up the background for the fixed-parameter algorithm of Theorem \ref{thm dp}. 
Let $G=(V,E)$ be a graph. 
Throughout this work, we consider a partition to be an ordered tuple of non-empty subsets. 
An ordered tuple of subsets $(S_1,\ldots,S_k)$, where $S_i\subseteq V$ for all $i\in[k]$, is a $k$-subpartition of $V$ if $S_1\cup\ldots\cup S_k=V$ and $S_i\cap S_j=\emptyset$ for every pair of distinct $i,j\in[k]$.
We emphasize the distinction between partitions and $k$-subpartitions---in a partition, all parts are required to be non-empty but the number of parts can be fewer than $k$ while a $k$-subpartition allows for empty parts but the number of parts is exactly $k$.

For a subgraph $H\subseteq G$, a subset $X\subseteq V$, and a partition/$k$-subpartition $\mcp$ of $X$, we use $\delta_H(\mcp)$ to denote the set of edges in $E(H)$ whose end-vertices are in different parts of $\mcp$.
For a subgraph $H$ of $G$ and a subset $S\subseteq V(H)$, we use $\delta_{H}(S)$ to denote the set of edges in $H$ with exactly one end-vertex in $S$. 
We will denote the set of (exclusive) neighbors of a subset $S$ of vertices in the graph $G$ by $N_G(S)$.
We need the notion of a tree decomposition.

\begin{definition}
Let $G=(V,E)$ be a graph.
A pair $(\tau,\chi)$, where $\tau$ is a tree and $\chi:V(\tau)\rightarrow 2^V$ is a mapping of the nodes of the tree to subsets of vertices of the graph, is a \emph{tree decomposition} of $G$ if the following conditions hold:
\begin{enumerate}[(i)]
    \item $\cup_{t\in V(\tau)}\chi(t)=V$,
    \item for every edge $e=uv\in E$, there exists some $t\in V(\tau)$ such that $u,v\in\chi(t)$, and
    \item for every $v\in V$, the set of nodes $\{t\in V(\tau):v\in\chi(t)\}$ induces a connected subtree of $\tau$.
\end{enumerate}
For each $t\in V(\tau)$, we call $\chi(t)$ to be a \emph{bag} of the tree decomposition.
\end{definition}

We now describe certain notations that will be helpful while working with the tree decomposition. Let $(\tau, \chi)$ be the tree decomposition of the graph $G=(V,E)$.
We root $\tau$ at an arbitrary node $r\in V(\tau)$. 
For a tree node $t\in V(\tau)\backslash\{r\}$, there is a unique edge between $t$ and its parent. 
Removing this edge disconnects $\tau$ into two subtrees $\tau_1$ and $\tau_2$, and we say that the set $A_t:=\chi(\tau_1)\cap\chi(\tau_2)$ is the \emph{adhesion associated with $t$}. 
For the root node $r$, we define $A_r:=\emptyset$.
For a tree node $t\in V(\tau)$, we denote the subgraph induced by all vertices in bags descending from $t$ as $G_t$ (here, the node  $t$ is considered to be a descendant of itself), i.e., 
\[G_t:=G\left[\bigcup_{t'\text{ is a descendant of }t}\chi(t')\right].\]
We need the notions of compactness and edge-unbreakability.

\begin{definition}
A tree decomposition $(\tau,\chi)$ of a graph $G$ is \emph{compact} if for every tree node $t\in V(\tau)$, the set of vertices $V(G_t)\backslash A_t$ induces a connected subgraph in $G$ and $N_G(V(G_t)\backslash A_t)=A_t$. 
\end{definition}

\begin{definition}
Let $G=(V,E)$ be a graph and let $S\subseteq V$. The subset $S$ is \emph{$(a,b)$-edge-unbreakable} if for every nonempty proper subset $S'$ of $V$ satisfying $|E[S',V\backslash S']|\leq b$, we have that either $|S\cap S'|\leq a$ or $|S\backslash S'|\leq a$.
\end{definition}
Informally, a subset $S$ is $(a,b)$-edge-unbreakable if every non-trivial $2$-partition of $G[S]$ either has large cut value or one side of the partition is small in size. With these definitions, we have the following result from \cite{LSS}.

\begin{lemma}\label{lemma tree decomposition} \cite{LSS}
There exists a polynomial time algorithm that takes a graph $G=(V,E)$, an integer $k\ge 2$, and an integer $\lambda$ as input and returns a compact tree decomposition $(\tau,\chi)$ of $G$ such that
\begin{enumerate}[(i)]
    \item each adhesion has size at most $\lambda k$, and
    \item for every tree node $t\in V(\tau)$, the bag $\chi(t)$ is $((\lambda k+1)^5,\lambda k)$-edge-unbreakable.
\end{enumerate}
\end{lemma}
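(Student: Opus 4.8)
This lemma is the standard existence-and-construction result for ``unbreakable'' tree decompositions, here in its edge-unbreakable form; the full argument is given in \cite{LSS}, which adapts \cite{CKLPPSW18}. The plan is to build the decomposition top-down: start from the whole vertex set, repeatedly split the current piece along a separation of order at most $\lambda k$, and stop recursing on a piece once it is certified $((\lambda k+1)^5,\lambda k)$-edge-unbreakable.

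The first step is to isolate the right primitive: a polynomial-time subroutine that takes the current piece $W\subseteq V$ together with a boundary set of size at most $\lambda k$ that $W$ shares with the bag above it, and either (a) certifies that $W$ is $((\lambda k+1)^5,\lambda k)$-edge-unbreakable, or (b) returns a separation $(A_1,A_2)$ of $G$ with $|A_1\cap A_2|\le\lambda k$ such that both $W\cap(A_1\setminus A_2)$ and $W\cap(A_2\setminus A_1)$ are large. The second step is the recursion this drives: in case (a), declare $W$ together with its boundary to be a bag and halt on that branch; in case (b), add the at most $\lambda k$ separator vertices to the current bag, recurse on the two sides with boundaries updated to the old boundary together with the new separator (restricted to each side), and attach the two resulting sub-decompositions below. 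I would recurse on the connected components of each side, which is precisely what keeps the decomposition compact, that is, makes $V(G_t)\setminus A_t$ connected with $N_G(V(G_t)\setminus A_t)=A_t$. Every adhesion produced this way is contained in the union of one old boundary and one new separator, hence has size at most $\lambda k$, giving property (i); and every bag is an unbreakable piece together with at most $\lambda k$ extra separator vertices, so the two parameters must be chosen with enough slack for a bag to inherit edge-unbreakability from the piece inside it (a piece that is, say, $((\lambda k+1)^5-\lambda k,\lambda k)$-edge-unbreakable yields a bag satisfying property (ii)). Termination, and polynomial size of the recursion tree, would follow from a potential argument: each split strictly decreases, on both sides, the number of piece vertices lying outside the boundary, since both sides were guaranteed to be large.

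The step I expect to be the main obstacle is implementing this primitive in genuinely polynomial time. The naive implementation --- enumerate all pairs of disjoint $(\lambda k+1)$-element subsets $X,Y$ of $W$, compute a minimum edge cut separating $X$ from $Y$, output the corresponding separation when it has at most $\lambda k$ edges, and otherwise certify $(\lambda k,\lambda k)$-edge-unbreakability of $W$ --- is correct but runs in time $n^{O(\lambda k)}$, which is only XP and would destroy the $(k\lambda)^{O(k^2)}n^{O(1)}$ bound of Theorem \ref{thm dp}. Bringing this down to polynomial time is the technical heart of \cite{CKLPPSW18} (as invoked by \cite{LSS}): instead of testing all witness pairs, one repeatedly computes minimum separations for carefully chosen terminal pairs and uncrosses them in lean-decomposition style, so that polynomially many max-flow computations either produce a sufficiently balanced separation of order at most $\lambda k$ or certify unbreakability --- at the cost that the certified unbreakability parameter weakens from $\lambda k$ to a fixed polynomial in it, namely $(\lambda k+1)^5$, which is the source of the exponent $5$ in the statement. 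Feeding this primitive into the recursion above and finishing with a routine clean-up pass to enforce compactness would complete the proof.
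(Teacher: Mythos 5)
The paper does not prove this lemma; it is imported verbatim from \cite{LSS} (who in turn adapt the vertex-unbreakable tree-decomposition machinery of \cite{CKLPPSW18} to the edge-cut setting), so there is no in-paper proof to compare against. Your sketch has the right overall shape --- top-down recursion driven by a ``certify unbreakability or return a balanced small cut'' primitive, recursing on connected components to ensure compactness, and correctly locating the real difficulty in making that primitive run in polynomial time rather than $n^{O(\lambda k)}$, with the exponent $5$ arising as the price of that speedup.

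Two points in the write-up do not hold up as stated, and both matter. First, the primitive you describe produces a \emph{vertex} separation $(A_1,A_2)$ with $|A_1\cap A_2|\le\lambda k$, but the lemma's unbreakability is with respect to \emph{edge} cuts $|E[S',V\setminus S']|\le\lambda k$; these are not interchangeable, and \cite{LSS} has to work explicitly with edge cuts (or pass through a subdivision/line-graph translation) to get the edge-unbreakable guarantee. A direct vertex-separator recursion would give vertex-unbreakability, not property (ii). Second, your justification of the adhesion bound --- ``contained in the union of one old boundary and one new separator, hence has size at most $\lambda k$'' --- is a non sequitur: the union of two sets each of size at most $\lambda k$ can have size $2\lambda k$. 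The actual construction keeps adhesions bounded by a more careful choice of which vertices become the new adhesion at each split (roughly, only the new separator restricted to each side, not the accumulated boundary), and one needs to argue that this suffices; the step as you wrote it would fail. Your slack computation for inheriting unbreakability (``a piece that is $((\lambda k+1)^5-\lambda k,\lambda k)$-edge-unbreakable yields a bag satisfying (ii)'') is fine.
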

We observe that since Lemma \ref{lemma tree decomposition} runs in polynomial time, the size of $\tau$ is necessarily $\poly(n,k,\lambda)$, where $n$ is the number of vertices in the input graph.
Next, we need the notion of $\alpha$-respecting partitions.

\begin{definition}
 Let $G=(V,E)$ be a graph and $G'$ be a subgraph of $G$. A partition $\mathcal{P}$ of $V$ \emph{$\alpha$-respects} $G'$ if $|\delta_{G'}(\mathcal{P})|\leq \alpha$.
\end{definition}

The following lemma shows that we can efficiently find a spanning tree $T$ of a given graph such that there exists an optimum $k$-partition that $2k^2$-respects $T$. It follows from Lemma 7 of \cite{CQX19} and the observation that an optimum solution to \mmkc is a $k$-approximate optimum to \mskc. 
\begin{lemma}\label{lemma thorup tree} \cite{CQX19}
There exists a polynomial time algorithm that takes a graph $G$ as input and returns a spanning tree $T$ of $G$ such that there exists an optimum min-max $k$-partition $\Pi$ that $(2k^2)$-respects $T$.
\end{lemma}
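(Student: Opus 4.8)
The plan is to combine two ingredients: (a) the tree‑packing statement of \cite{CQX19} (Lemma 7), which says that for any fixed approximation parameter $\beta\ge 1$ one can, in polynomial time, produce a spanning tree $T$ of $G$ (obtained from a greedy tree packing) such that every edge set $C$ with the property that $G-C$ has at least $k$ connected components and $w(C)\le\beta\cdot\opt(\mskc)$ satisfies $|C\cap E(T)|=O(\beta k)$, quantitatively at most $2\beta(k-1)$; and (b) the elementary observation that an optimum \mmkc partition is a $k$‑approximate optimum to \mskc. Given these, the lemma follows by taking $\beta=k$ and applying (a) to the crossing‑edge set of an optimum \mmkc partition.

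Concretely, I would first fix an optimum \mmkc partition $\Pi^\star=(V_1,\ldots,V_k)$ and let $\lambda:=\opt(G,k)$ be its cost. Since the partition achieving $\opt(\mskc)$ has min-max cost at most its min-sum cost, we have $\lambda=\opt(G,k)\le\opt(\mskc)$; combining this with $\sum_{i=1}^k w(\delta(V_i))\le k\cdot\max_i w(\delta(V_i))=k\lambda$ gives $\sum_{i=1}^k w(\delta(V_i))\le k\cdot\opt(\mskc)$, i.e.\ $\Pi^\star$ is a $k$‑approximate \mskc solution. Setting $C^\star:=\delta_G(\Pi^\star)$, we then have $w(C^\star)=\tfrac12\sum_i w(\delta(V_i))\le\tfrac{k}{2}\opt(\mskc)$, and $G-C^\star$ has at least $k$ connected components (the $k$ parts of $\Pi^\star$ are mutually separated).

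Next I would invoke Lemma 7 of \cite{CQX19} with $\beta=k$ (applied to $C^\star$, which is even a $(k/2)$‑approximate $k$-cut in the edge-set sense) to obtain, in polynomial time, a spanning tree $T$ with $|C^\star\cap E(T)|\le 2k(k-1)\le 2k^2$. Every tree edge that crosses $\Pi^\star$ lies in $C^\star$, so $|\delta_T(\Pi^\star)|\le 2k^2$, which is exactly the statement that $\Pi^\star$ is an optimum min-max $k$-partition that $(2k^2)$‑respects $T$.

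The only point that needs care — and what I would regard as the main (though mild) obstacle — is pinning down the exact form and output of Lemma 7 of \cite{CQX19}: whether it directly returns a single spanning tree or a polynomial-size greedy tree packing in which at least one tree has the claimed property (in the latter case one simply iterates over the packing wherever $T$ is used in the dynamic program of Theorem~\ref{thm dp}, paying only a polynomial factor), and whether its guarantee is phrased for edge sets of bounded weight or for partitions into connected pieces. Either phrasing suffices after the reduction above, since replacing any disconnected part $V_i$ of $\Pi^\star$ by its connected components yields a partition into $\ge k$ connected pieces with the same crossing-edge set $C^\star$; and the target bound $2k^2$ has enough slack to absorb the constants in whichever version one cites.
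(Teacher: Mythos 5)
Your proposal is correct and is exactly the route the paper takes: the paper cites Lemma~7 of \cite{CQX19} together with the observation that an optimum \mmkc partition is a $k$-approximate optimum to \mskc, which is precisely the reduction you spell out (with $\beta=k$, yielding the $2k(k-1)\le 2k^2$ bound). The additional remarks you make about tree packings versus a single tree and about replacing disconnected parts by their connected components are sensible caveats but not needed beyond what the paper already relies on.
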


We will frequently work with refinements and coarsenings of partitions and also restrictions of partitions to subsets. 

\begin{definition}
Let $G=(V,E)$ be a graph and $S\subseteq V$ be a subset of vertices. 
\begin{enumerate}
\item Let $\mcq$ be a partition/$k$-subpartition of $S$. A partition/$k$-subpartition $\mcp$ of $S$ \emph{coarsens} $\mcq$ if each part of $\mcp$ is a union of parts of $\mcq$.

\item Let $\mcp$ be a partition/$k$-subpartition of $S$. A partition/$k$-subpartition $\mcq$ of $S$ \emph{refines} $\mcp$ if each part of $\mcp$ is a union of parts of $\mcq$.

\item Let $\mcp$ be a partition/$k$-subpartition of $V$. A partition/$k$-subpartition $\mcp'$ of $S$ is a \emph{restriction of $\mcp$ to $S$} if for every $u,v\in S$, $u$ and $v$ are in the same part of $\mcp'$ if and only if they are in the same part of $\mcp$. 
\end{enumerate}
\end{definition}
We note that a \emph{restriction of a partition/$k$-subpartition $\mcp$ to a subset $S$} is a reordering of the tuple obtained by taking the intersection of each part in $\mcp$ with $S$.
The following definition allows us to handle partitions of subsets that are crossed by a spanning tree at most $2k^2$ times.
\begin{definition}

Let $G=(V,E)$ be a graph, $T$ be a spanning tree of $G$, and $X\subseteq V$. A partition $\cP$ of $X$ is \emph{$T$-feasible} if there exists a partition $\cP'$ of $V$ such that
\begin{enumerate}[(i)]
    \item The restriction of $\cP'$ to $X$ is $\cP$, 
    and
    \item $\mcp'$ $(2k^2)$-respects $T$.
\end{enumerate}
Moreover, a $k$-subpartition $\mcp'$ of $X$ is $T$-feasible if the partition obtained from $\mcp'$ by discarding the empty parts of $\mcp'$ is $T$-feasible.
\end{definition}

The next definition and the subsequent lemmas will show a convenient way to work with $T$-respecting partitions of a subset $X$ of vertices, where $T$ is a spanning tree.
\begin{definition}
Let $G=(V,E)$ be a graph, $T$ be a spanning tree of $G$, and $X\subseteq V$. The graph $\proj(T,X)$ is the tree obtained from $T$ by
\begin{enumerate}
    \item repeatedly removing leaves of $T$ that are not in $X$ until there is none, and
    \item for every path in $T$ all of whose internal vertices are of degree $2$ and are in $V\setminus X$, contract this path, i.e. replace each such path with a single edge, until there is none.
\end{enumerate}
\end{definition}

\begin{observation}\label{obs proj tree size}
Every vertex of $\proj(T,X)$ that is not in $X$ has degree at least 3 in $\proj(T,X)$. Consequently, the number of vertices in $\proj(T,X)$ is $O(|X|)$.
\end{observation}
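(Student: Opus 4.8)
The plan is to prove Observation \ref{obs proj tree size} by carefully tracking what the two reduction operations in the definition of $\proj(T,X)$ do to the tree. First I would establish the degree claim. Consider any vertex $v$ of $\proj(T,X)$ with $v \notin X$. Such a vertex survived both reduction steps. Since $v$ survived step 1 (repeated leaf removal of non-$X$ vertices), $v$ cannot be a leaf in the tree obtained after step 1 — otherwise it would have been removed — so $v$ has degree at least $2$ at that stage. Since $v$ survived step 2 (path contraction), $v$ cannot be an internal vertex of degree exactly $2$ lying on a contracted path of non-$X$ vertices; more precisely, if $v \notin X$ had degree exactly $2$ after step 1, then $v$ would be an internal degree-$2$ vertex in $V \setminus X$, hence contained in some maximal path of such vertices, and step 2 would contract it away. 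Therefore $v$ has degree at least $3$ in $\proj(T,X)$. I should note that step 1 and step 2 do not create new low-degree non-$X$ vertices: leaf removal only decreases degrees of remaining vertices by moving toward smaller trees, but a non-$X$ vertex that becomes a degree-$2$ vertex is then handled by step 2; and path contraction replaces a path by an edge, which only affects the degrees of the two (necessarily distinct) endpoints of the contracted path, and these endpoints are either in $X$ or have degree $\ge 3$. I would make this precise by arguing that the process reaches a fixed point where every non-$X$ vertex has degree $\ge 3$.

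Next I would derive the vertex-count bound from the degree claim via a standard counting argument on trees. Let $T' = \proj(T,X)$, let $L$ be the set of leaves of $T'$, let $D_2$ be the set of degree-$2$ vertices, and let $D_{\ge 3}$ be the set of vertices of degree at least $3$. By the degree claim, every leaf and every degree-$2$ vertex lies in $X$, so $|L| + |D_2| \le |X|$. In any tree, the number of vertices of degree at least $3$ is at most the number of leaves minus $1$ (this follows from $\sum_v (\deg(v) - 2) = -2$ for a tree, i.e. $\sum_v \deg(v) = 2(|V(T')| - 1)$, which gives $\sum_{v \in D_{\ge 3}}(\deg(v)-2) = \sum_{v \in L}(2 - \deg(v)) = |L|$, hence $|D_{\ge 3}| \le |L| \le |X|$). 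Combining, $|V(T')| = |L| + |D_2| + |D_{\ge 3}| \le |X| + |X| = O(|X|)$. (One also needs to handle the degenerate cases where $T'$ is a single vertex or a single edge, which are trivial since then $|V(T')| \le 2 \le O(|X|)$ as long as $X \neq \emptyset$; if $X = \emptyset$ the tree $\proj(T,\emptyset)$ reduces to nothing or to a single vertex/edge by convention, and the bound holds vacuously or trivially.)

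The step I expect to require the most care — though it is conceptually routine — is verifying that after performing steps 1 and 2 to a fixed point, genuinely no non-$X$ vertex of degree $\le 2$ remains, i.e. that the two operations do not interfere with each other in a way that leaves such a vertex behind. Specifically, after the path contractions of step 2, could a new non-$X$ leaf appear? A path contraction replaces a maximal degree-$2$ non-$X$ path by a single edge between its two endpoints; the endpoints either are in $X$ or had degree $\ge 3$ before contraction and retain degree $\ge 3$ after (contraction does not lower the degree of an endpoint: it removes one edge of the path incident to the endpoint but adds the new contracted edge, a net change of zero). So no new non-$X$ leaf or non-$X$ degree-$2$ vertex is created, and the fixed point after one round of step 1 followed by one round of step 2 already has the desired property; iterating is unnecessary but harmless. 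I would state this cleanly as a short lemma-within-the-proof and then conclude.
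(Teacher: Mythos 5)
The paper states Observation \ref{obs proj tree size} without proof, treating it as routine, so there is no argument of the authors to compare against. Your proof is correct and is the natural one: (a) after the leaf-pruning in step~1 of the definition of $\proj(T,X)$ every non-$X$ vertex has degree $\ge 2$, (b) the path contractions in step~2 never change the degree of a surviving vertex (each contraction removes one incident path-edge from each endpoint and adds one new edge, a net change of zero), so they cannot create new non-$X$ leaves or non-$X$ degree-$2$ vertices, and (c) step~2 runs until no non-$X$ degree-$2$ vertex remains; hence every non-$X$ vertex of $\proj(T,X)$ has degree $\ge 3$. The vertex-count bound then follows from the standard tree-counting argument exactly as you outline. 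One small arithmetic slip: for a tree $T'$ on $n\ge 2$ vertices, $\sum_{v}(\deg(v)-2)=2(n-1)-2n=-2$, so splitting the sum over $L$, $D_2$, $D_{\ge 3}$ gives $\sum_{v\in D_{\ge 3}}(\deg(v)-2)=|L|-2$ (you wrote $|L|$). Since each term in that sum is at least $1$, this yields $|D_{\ge 3}|\le |L|-2$, which only strengthens your estimate; the conclusion $|V(\proj(T,X))|=|L|+|D_2|+|D_{\ge 3}|\le 2|X|=O(|X|)$ is unaffected.
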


The next lemma gives a convenient way to work with $T$-feasible partitions of subsets of vertices. It is adapted and simplified from \cite{LSS}. We give a proof for the sake of completeness.

\begin{lemma}\label{lemma proj cut size}
Let $G=(V,E)$ be a graph and $T$ be a spanning tree of $G$.
\begin{enumerate}
    \item For a subset $X\subseteq V(G)$ and a partition $\mcp$ of $G$, if $\mcp_X$ is the restriction of $\mcp$ to $X$, then there exists a partition $\tilde{\mcp}$ of the vertices of $\proj(T,X)$ such that $|\delta_{\proj(T,X)}(\tilde{\mcp})|\leq |\delta_T(\mcp)|$ and the restriction of $\tilde{\mcp}$ to $X$ is $\mcp_X$. 
    \item For a subset $X\subseteq V$ and a partition $\tilde{\mcp}$ of the vertices of $\proj(T,X)$, if $\tilde{\mcp}_X$ is the restriction of $\tilde{\mcp}$ to $X$, then there exists a partition $\mcp$ of $V(G)$ such that $|\delta_T(\mcp)|\leq |\delta_{\proj(T,X)}(\tilde{\mcp})|$ and the restriction of $\mcp$ to $X$ is $\tilde{\mcp}_X$.
\end{enumerate}

\end{lemma}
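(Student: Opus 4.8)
The plan is to use the explicit combinatorial relationship between $T$ and $\proj(T,X)$ that is implicit in the definition. First I would fix notation for the two steps. Let $T'$ be the subtree of $T$ obtained after exhaustively performing the first step (removing leaves not in $X$); a short argument shows $T'$ is exactly the minimal subtree of $T$ containing $X$, so in particular every leaf of $T'$ lies in $X$. The second step then replaces each maximal path of $T'$ all of whose internal vertices are degree-$2$ vertices of $V\setminus X$ by a single edge, producing $\proj(T,X)$. The two structural facts I would record are: (a) these maximal paths partition $E(T')$ into edge-disjoint paths $\{P_{\tilde e}\}$ indexed by the edges $\tilde e$ of $\proj(T,X)$, where both endpoints of $P_{\tilde e}$ lie in $V(\proj(T,X))$ and its internal vertices lie in $V\setminus X$; and (b) each connected component $C$ of the forest $T[V\setminus V(T')]$ is joined to $V(T')$ by exactly one edge of $T$ (otherwise $T$ would contain a cycle), attaching at a single vertex $t_C\in V(T')$.

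For part (1), I would take $\tilde{\mcp}$ to be the restriction of $\mcp$ to $V(\proj(T,X))$ (discarding any empty parts). Since $X\subseteq V(\proj(T,X))$, the restriction of $\tilde{\mcp}$ to $X$ equals $\mcp_X$, so only the cut bound remains. If $\tilde e=ab$ is an edge of $\proj(T,X)$ with $a$ and $b$ in different parts of $\tilde{\mcp}$, then $a$ and $b$ are in different parts of $\mcp$, so traversing $P_{\tilde e}$ from $a$ to $b$ one crosses at least one edge of $\delta_T(\mcp)$. Since the paths $\{P_{\tilde e}\}$ are pairwise edge-disjoint subsets of $E(T)$, summing this over all $\tilde e\in\delta_{\proj(T,X)}(\tilde{\mcp})$ gives $|\delta_{\proj(T,X)}(\tilde{\mcp})|\le|\delta_T(\mcp)|$.

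For part (2), I would extend $\tilde{\mcp}$ to $V$ in two stages, never merging parts. Stage one builds a partition $\mcp'$ of $V(T')$: for each edge $\tilde e=ab$ of $\proj(T,X)$, if $a$ and $b$ lie in the same part of $\tilde{\mcp}$, assign every internal vertex of $P_{\tilde e}$ to that part, so $P_{\tilde e}$ contributes no cut edge; otherwise assign every internal vertex of $P_{\tilde e}$ to $a$'s part, so $P_{\tilde e}$ is cut exactly once. This yields $|\delta_{T'}(\mcp')|=|\delta_{\proj(T,X)}(\tilde{\mcp})|$. Stage two extends $\mcp'$ to a partition $\mcp$ of $V$ by placing, for each component $C$ of $T[V\setminus V(T')]$, all vertices of $C$ into the part containing its attachment vertex $t_C$; no edge of $T$ incident to $C$ is then cut, so $|\delta_T(\mcp)|=|\delta_{T'}(\mcp')|=|\delta_{\proj(T,X)}(\tilde{\mcp})|$. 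Because the parts of $\tilde{\mcp}$ were only enlarged (and stay non-empty), $\mcp$ is a valid partition of $V$ whose restriction to $X$ is still $\tilde{\mcp}_X$.

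I expect the main obstacle to be establishing the structural facts (a) and (b) cleanly — in particular verifying that the second step decomposes $E(T')$ into edge-disjoint paths whose endpoints land in $V(\proj(T,X))$, and that each pruned branch has a unique attachment vertex — together with the minor bookkeeping that partitions here are ordered tuples, so "restriction" and hence all the conclusions must be read up to reordering of parts. Once (a) and (b) are in hand, both inequalities follow from the single observation that a path joining two vertices in different parts is cut at least once.
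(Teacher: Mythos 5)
Your proof is correct and takes essentially the same route as the paper: both arguments rely on the correspondence between edges of $T$ and edges of $\proj(T,X)$ induced by the leaf-pruning and path-contraction steps, using it to translate cut edges between the two trees. The only differences are presentational — you construct the partition directly (by restriction in part~1 and by a two-stage extension in part~2) and record the structural facts (edge-disjoint path decomposition of $E(T')$, unique attachment vertex for each pruned branch) explicitly, whereas the paper constructs the cut sets first and recovers the partition from connected components followed by regrouping — so there is no genuinely different idea at play.
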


\begin{proof}
We will start by proving the first statement. Let $X\subseteq V(G')$ and $\mcp$ be fixed as in the first statement. We will construct $\tilde{\mcp}$ by constructing $\delta_{\proj(T,X)}(\tilde{\mcp})$ as follows. For each edge $e\in \delta_T(\mcp)$, if $e$ remains in $\proj(T,X)$, then include $e$ into $\delta_{\proj(T,X)}(\tilde{\mcp})$; if $e$ is removed as part of a path that is replaced with an edge $e\in E(\proj(T,X))$, include $e'$ into $\delta_{\proj(T,X)}(\tilde{\mcp})$. By doing this, we can guarantee that $|\delta_{\proj(T,X)}(\tilde{\mcp})|\leq |\delta_T(\mcp)|$.

To see that $\delta(\tilde{\mcp})$ indeed yields a partition whose restriction to $X$ is $\mcp_X$, we claim that the partition $\tilde{\mcp}'$ whose parts are the connected components of $\proj(T,X)\backslash\delta_{\proj(T,X)}(\tilde{\mcp})$, when restricted to $X$, refines $\mcp_X$. 
For every pair of vertices $u,v\in X$ that are in different parts of $\mcp_X$, we know $u$ and $v$ are also in different parts of $\mcp$. 
This means some edge $e$ on the unique path in $T$ between $u$ and $v$ is in $\delta_T(\mcp)$. 
By the way we constructed $\delta_{\proj(T,X)}(\tilde{\mcp})$, either $e$ is contained in $\delta_{\proj(T,X)}(\tilde{\mcp})$, or $e'$ which replaces a path containing $e$ is contained in $\delta_{\proj(T,X)}(\tilde{\mcp})$. 
In either case, the unique path in $\proj(T,X)$ between $u$ and $v$ is disconnected. 
This proves our claim. 
To construct $\tilde{\mcp}$ from $\delta_{\proj(T,X)}(\tilde{\mcp})$, we simply group parts of $\tilde{\mcp}'$ together as necessary to comply with $\mcp_X$. 
This completes the proof of the first statement.

The proof of the second statement is similar to the preceding proof. 
Let $X\subseteq V$ and $\tilde{\mcp}$ be fixed as in the second statement. 
We start by constructing the edge set $\delta_T(\mcp)$. 
For each edge $e\in \delta_{\proj(T,X)}(\tilde{\mcp})$, if $e$ is originally an edge of $T$, then include $e$ into $\delta_T(\mcp)$; if $e$ is introduced to replace a path in $T$, then fix an arbitrary edge $e'$ in this path and include $e'$ into $\delta_T(\mcp)$. 
By doing this, clearly we can guarantee $|\delta_T(\mcp)|\leq |\delta_{\proj(T,X)}(\tilde{\mcp})|$.

By the same argument as in proof of the first statement, we can see that the connected components of $T\backslash \delta_T(\mcp)$ yields a partition that, when restricted to $X$, refines $\tilde{\mcp}_X$. Combining parts as necessary, we obtain a desired partition $\mcp$. This completes the proof of the second statement.
\end{proof}

\section{Fixed-parameter algorithm parameterized by $k$ and solution size}\label{sec:DP}
In this section we prove Theorem \ref{thm dp}. 
Let $(G=(V,E),k)$ be the input instance of \mmkc with $n$ vertices. The input graph $G$ could possibly have parallel edges.
We assume that $G$ is connected.
Let $\opt=\opt(G,k)$ (i.e., $\opt$ is the optimum objective value of \mmkc on input $G$) and let $\lambda$ be the input such that $\lambda\ge \opt$. We will design a dynamic programming algorithm that runs in time $(\lambda k)^{O(k^2)}n^{O(1)}$ to compute $\opt$. 

Given the input, we first use Lemma \ref{lemma tree decomposition} to obtain a tree decomposition $(\tau,\chi)$ of $G$ satisfying the conditions of the lemma. Since the algorithm in the lemma runs in polynomial time, the size of the tree decomposition $(\tau,\chi)$ is polynomial in the input size. Next, we use Lemma \ref{lemma thorup tree} to obtain a spanning tree $T$ such that there exists an optimum min-max $k$-partition $\Omega=(\Omega_1,\ldots,\Omega_k)$ of $V$ that $(2k^2)$-respects $T$, and moreover $T$ is a subgraph of $G$. We fix the tree decomposition $(\tau, \chi)$, the spanning tree $T$, and the optimum solution $\Omega$ with these choices in the rest of this section. We note that $\Omega_i\neq\emptyset$ for all $i\in[k]$ and $\max_{i\in[k]}|\delta_G(\Omega_i)|=\OPT$. We emphasize that the choice of $\Omega$ is fixed only for the purposes of the correctness of the algorithm and is not known to the algorithm explicitly. 

Our algorithm is based on dynamic program (DP). We will describe the subproblems of the DP in Section \ref{sec:dp-subproblems}. We will need the notion of a nice decomposition of the bags corresponding to the tree decomposition. We describe this notion in Section \ref{sec:nice-decomposition} and give an algorithm to generate them in Section \ref{sec:generating-nice-decompositions}. We will show the recursion to solve the dynamic program in Section \ref{sec:recursion}. We encourage the reader to trace towards the base case of the dynamic program on first read.

\subsection{Subproblems of the DP}\label{sec:dp-subproblems}
In this section, we state the subproblems in our dynamic program (DP), bound the number of subproblems in the DP, and prove Theorem \ref{thm dp}. For a tree node $t\in V(\tau)$, let $\mcf^{A_t}$ be the collection of partitions of the adhesion $A_t$ that are (i) $T$-feasible and (ii) have at most $k$ parts.
We emphasize that elements of $\mcf_{A_t}$ are of the form $\mcp_{A_t}=(\tilde{P}_1,\ldots,\tilde{P}_{k'})$ for some $k'\in\{0,1,\ldots,k\}$, where $\tilde{P}_i\neq \emptyset$ for all $i\in[k']$.
The following lemma bounds the size of $\mathcal{F}^{A_t}$, which in turn, will be helpful in bounding the number of subproblems to be solved in our dynamic program.

\begin{lemma}\label{lemma F family size}
For every tree node $t\in V(\tau)$, we have $|\mathcal{F}^{A_t}|=(\lambda k)^{O(k^2)}$. Moreover, the collection $\mathcal{F}^{A_t}$ can be enumerated in $(\lambda k)^{O(k^2)}$ time.
\end{lemma}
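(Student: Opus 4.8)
The plan is to bound $|\mathcal{F}^{A_t}|$ by reducing the counting of $T$-feasible partitions of the adhesion $A_t$ to counting partitions of the small tree $\proj(T, A_t)$, and then counting the latter directly. First I would recall that by Lemma \ref{lemma tree decomposition}(i), the adhesion $A_t$ has size at most $\lambda k$. Then, by Observation \ref{obs proj tree size}, the tree $\proj(T, A_t)$ has $O(|A_t|) = O(\lambda k)$ vertices and hence $O(\lambda k)$ edges. The key structural input is Lemma \ref{lemma proj cut size}: if $\mcp_{A_t} \in \mathcal{F}^{A_t}$, then $\mcp_{A_t}$ extends to a partition $\mcp'$ of $V$ that $(2k^2)$-respects $T$, and by the first statement of Lemma \ref{lemma proj cut size} this induces a partition $\tilde{\mcp}$ of the vertices of $\proj(T, A_t)$ with $|\delta_{\proj(T,A_t)}(\tilde{\mcp})| \le |\delta_T(\mcp')| \le 2k^2$ whose restriction to $A_t$ is exactly $\mcp_{A_t}$. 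Thus every element of $\mathcal{F}^{A_t}$ arises as the $A_t$-restriction of some partition of $\proj(T, A_t)$ that cuts at most $2k^2$ of its edges.

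The main counting step: a partition of the vertex set of a tree with $m = O(\lambda k)$ edges that cuts at most $2k^2$ of them is determined by the choice of the cut edge set (a subset of at most $2k^2$ edges out of $m$) together with a grouping of the resulting at-most-$(2k^2+1)$ connected components into at most $k$ groups. The number of edge subsets of size at most $2k^2$ is $\sum_{j=0}^{2k^2}\binom{m}{j} \le (m+1)^{2k^2} = (\lambda k)^{O(k^2)}$; and the number of ways to group at most $2k^2+1$ components into at most $k$ labeled parts is at most $k^{2k^2+1} = k^{O(k^2)}$ (and we also multiply by the at most $k!$ orderings, which is absorbed). Multiplying, the number of partitions of $\proj(T, A_t)$ cutting at most $2k^2$ edges is $(\lambda k)^{O(k^2)}$. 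Since restriction to $A_t$ is a (not necessarily injective) map from this set onto $\mathcal{F}^{A_t}$, we conclude $|\mathcal{F}^{A_t}| = (\lambda k)^{O(k^2)}$.

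For the enumeration claim, I would make the above argument constructive: enumerate all edge subsets of $\proj(T, A_t)$ of size at most $2k^2$ (there are $(\lambda k)^{O(k^2)}$ of them, each listed in polynomial time), for each one compute the connected components of $\proj(T, A_t)$ minus those edges, enumerate all ways to assign these components to at most $k$ labeled parts, restrict the resulting partition to $A_t$, discard empty parts, and also discard any candidate that fails to be $T$-feasible (which can be checked, e.g., by invoking the second statement of Lemma \ref{lemma proj cut size} to lift it back to a partition of $V$ and verifying the $(2k^2)$-respecting condition, or more simply by keeping only those that already came from a $\le 2k^2$-cut of $\proj(T,A_t)$, which by Lemma \ref{lemma proj cut size}(2) is automatically $T$-feasible). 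Collecting the distinct partitions of $A_t$ obtained this way yields exactly $\mathcal{F}^{A_t}$ in $(\lambda k)^{O(k^2)}$ total time.

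I expect the only subtle point — the "main obstacle" — to be arguing cleanly that the restriction map from partitions-of-$\proj(T,A_t)$-cutting-$\le 2k^2$-edges onto $\mathcal{F}^{A_t}$ is surjective in the right direction: one must be careful that a $T$-feasible partition of $A_t$ really does lift to a partition of $V$ that $(2k^2)$-respects $T$ (this is exactly the definition of $T$-feasibility), and then that Lemma \ref{lemma proj cut size}(1) transfers this to the projected tree while preserving the $A_t$-restriction. Everything else is routine binomial and grouping estimates.
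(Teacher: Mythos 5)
Your proposal is correct and follows essentially the same route as the paper: both proofs reduce the count to partitions of $\proj(T,A_t)$ that cut at most $2k^2$ edges via Lemma \ref{lemma proj cut size} and Observation \ref{obs proj tree size}, and then apply the same edge-subset-times-$k$-bins counting and enumeration argument. The paper phrases the Lemma \ref{lemma proj cut size} step as a two-way equivalence ("$T$-feasible iff $\proj(T,A_t)$-feasible") while you split it into a surjectivity argument for the upper bound plus the reverse direction for correctness of the enumeration, but these are the same calculation.
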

\begin{proof}
First we claim that a partition $\mathcal{P}_{A_t}$ of $A_t$ is $T$-feasible if and only if it is $\proj(T,A_t)$-feasible. 

Assume a partition $\mathcal{P}_{A_t}$ of $A_t$ is $T$-feasible, realized by a partition $\mcp$ of $V$. It follows that $|\delta_T(\mcp)|\leq 2k^2$. By Lemma \ref{lemma proj cut size}, there exists a partition $\tilde{\mcp}$ of $\proj(T,A_t)$ such that $|\delta_{\proj(T,A_t)}(\tilde{\mcp})|\leq 2k^2$ and the restriction of $\tilde{\mcp}$ to $A_t$ is $\mcp_{A_t}$. This is equivalent to saying $\mcp_{A_t}$ is $\proj(T,A_t)$-feasible.

The other direction is similar. If a partition $\mcp_{A_t}$ of $A_t$ is $\proj(T,A_t)$-feasible, realized by a partition $\tilde{\mcp}$ of $V(\proj(T,A_t))$, it follows that $|\delta_\proj(T,A_t)(\tilde{\mcp})|\leq 2k^2$. By Lemma \ref{lemma proj cut size} there exists a partition $\mcp$ of $G$ such that $|\delta_T(\mcp)|\leq 2k^2$ and the restriction of $\mcp$ to $A_t$ is $\mcp_{A_t}$. Hence $\mcp_{A_t}$ is $T$-feasible.

It remains to bound the number of $\proj(T,A_t)$-feasible partitions of $A_t$. By Observation \ref{obs proj tree size}, the size of $\proj(T,A_t)$ is $O(\lambda k)$. We notice that partitions with at most $k$ parts that $2k^2$-respect $\proj(T,A_t)$ can be enumerated by removing up to $2k^2$ edges of $\proj(T,A_t)$, and putting the resulting connected components (there are at most $2k^2+1$ of them) into $k$ bins. 
Therefore, combining the previous observation, we conclude that
\[|\mathcal{F}^{A_t}|= O(\lambda k)^{2k^2}\cdot k^{2k^2+1}=(\lambda k)^{O(k^2)}.\]
Moreover, the time required to compute $\mathcal{F}^{A_t}$ (by enumerating all eligible partitions as above) is also $(\lambda k)^{O(k^2)}$. 
\end{proof}

The following definition will be useful in identifying the subproblems of the DP. 
\begin{definition}\label{defn:fc-fs}
Let $t\in V(\tau)$ be a tree node, and $f_t:\mathcal{F}^{A_t}\times\{0,1,\ldots,\lambda\}^k\times\{0,1,\ldots,2k^2\}\to\{0,1\}$ be a Boolean function.
\begin{enumerate}
    \item
    \textbf{(Correctness)}
    The function $f_t$ is \emph{\fc} if we have $f_t(\mathcal{P}_{A_t},\Bar{x},d)=1$ for all $\mathcal{P}_{A_t}=(\Tilde{P}_1,\ldots ,\Tilde{P}_{k'})\in \mathcal{F}^{A_t}$, 
    $\Bar{x}\in \lrange^k$, and $d\in\krange$ for which there exists a $k$-subpartition $\mathcal{P}=(P'_1,\ldots,P'_k)$ of $V(G_t)$ satisfying the following conditions:
    \begin{enumerate}[(i)]
        \item $P'_i\cap A_t=\Tilde{P}_i$ for all $i\in [k']$, 
        \item $|\delta_{G_t}(P'_i)|= x_i$ for all $i\in [k]$,
        \item $|\delta_T(\mathcal{P})|\leq d$, and
        \item $\mcp$ is a restriction of $\Omega$ to $V(G_t)$.
    \end{enumerate}
    A $k$-subpartition of $V(G_t)$ satisfying the above four conditions is said to \emph{witness} \fcness of $f_t(\mathcal{P}_{A_t},\Bar{x},d)$. 
    
    \item 
    \textbf{(Soundness)} The function $f_t$ is \emph{\fs} if for all $\mathcal{P}_{A_t}=(\Tilde{P}_1,\ldots ,\Tilde{P}_{k'})\in \mathcal{F}^{A_t}$, $\Bar{x}\in \lrange^k$ and $d\in\krange$, we have $f_t(\mcp_{A_t},\Bar{x},d)=1$ only if there exists a $k$-subpartition $\mcp=(P'_1,\ldots,P'_k)$ of $V(G_t)$ satisfying conditions (i), (ii) and (iii) above. A $k$-subpartition of $V(G_t)$ satisfying (i), (ii) and (iii) is said to \emph{witness} \fsness of $f_t(\mathcal{P}_{A_t},\Bar{x},d)$.
\end{enumerate}
\end{definition}

We emphasize the distinction between correctness and soundness: correctness relies on all four conditions while soundness relies only on three conditions. We discuss the need for distinct correctness and soundness definitions after Lemma \ref{lemma:nice decomp}.

The next proposition shows that an \fc and \fs function for the root node of the tree decomposition can be used to recover the optimum value. 
\begin{proposition}\label{prop:OPT from fr function}
If we have a function $f_r:\mathcal{F}^{A_r}\times\{0,1,\ldots,\lambda\}^k\times\{0,1,\ldots,2k^2\}\to\{0,1\}$ that is both \fc and \fs, where $r$ is the root of the tree decomposition $\tau$, then 
\[\OPT=\min\left\{\max_{i\in[k]}\{x_i\}:f_r(\mathcal{P}_{\emptyset},\Bar{x},2k^2)=1,\Bar{x}\in[\lambda]^k\right\},\]
where $\mathcal{P}_\emptyset$ is the 0-tuple that denotes the trivial partition of $A_r=\emptyset$. 
\end{proposition}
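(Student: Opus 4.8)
The plan is to show the two inequalities separately, using the definitions of \fcness and \fsness at the root node $r$, together with the facts $A_r=\emptyset$, $G_r=G$, and the fixed choice of the $(2k^2)$-respecting optimum partition $\Omega$.

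First I would prove that $\OPT$ is at least the right-hand side, i.e., that the minimum on the right is well-defined and every value it ranges over is a valid cut value of some $k$-partition. Suppose $f_r(\mathcal{P}_\emptyset,\bar x,2k^2)=1$ for some $\bar x\in[\lambda]^k$. Since $f_r$ is \fs, there exists a $k$-subpartition $\mathcal{P}=(P'_1,\ldots,P'_k)$ of $V(G_r)=V$ witnessing soundness, so $|\delta_{G_r}(P'_i)|=|\delta_G(P'_i)|=x_i$ for all $i\in[k]$ (condition (i) is vacuous since $A_r=\emptyset$; condition (iii) is irrelevant here). A priori this is only a $k$-subpartition, so some parts could be empty; but an empty part $P'_i$ has $x_i=|\delta_G(P'_i)|=0$, and since each $x_i\ge 1$ for $\bar x\in[\lambda]^k$, in fact all parts are non-empty, so $\mathcal{P}$ is a genuine $k$-partition of $V$. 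Hence $\max_{i\in[k]}x_i=\cost_G(\mathcal{P})\ge\OPT$. Taking the minimum over all such $\bar x$ gives $\text{(RHS)}\ge\OPT$ — provided the set on the right is non-empty, which will follow from the other direction.

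Next I would prove $\OPT$ is at most the right-hand side by exhibiting a feasible $\bar x$ for the minimization that achieves value $\OPT$. Take the fixed optimum $\Omega=(\Omega_1,\ldots,\Omega_k)$, which is a $k$-partition of $V$ that $(2k^2)$-respects $T$, with $\max_{i\in[k]}|\delta_G(\Omega_i)|=\OPT\le\lambda$. Set $x_i:=|\delta_G(\Omega_i)|$ for each $i\in[k]$; then $\bar x\in[\lambda]^k$ since each $\Omega_i$ is non-empty and $G$ is connected, so $1\le x_i\le\OPT\le\lambda$. I claim $\mathcal{P}:=\Omega$, viewed as a $k$-subpartition of $V(G_r)=V$, witnesses \fcness of $f_r(\mathcal{P}_\emptyset,\bar x,2k^2)$: condition (i) holds vacuously ($A_r=\emptyset$, $k'=0$), condition (ii) holds by the choice of $x_i$ and $G_r=G$, condition (iii) holds because $\Omega$ $(2k^2)$-respects $T$ so $|\delta_T(\Omega)|\le 2k^2$, and condition (iv) holds because the restriction of $\Omega$ to $V(G_r)=V$ is $\Omega$ itself. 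Since $f_r$ is \fc, this forces $f_r(\mathcal{P}_\emptyset,\bar x,2k^2)=1$, so $\bar x$ is feasible for the minimization and contributes the value $\max_{i\in[k]}x_i=\OPT$. Therefore $\text{(RHS)}\le\OPT$, and combined with the previous paragraph this also shows the feasible set is non-empty, completing the argument.

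I do not anticipate a serious obstacle here — this is a wrapper lemma tying the DP table at the root to $\OPT$. The only subtlety worth stating carefully is the $k$-subpartition versus $k$-partition distinction: one must observe that requiring each coordinate $x_i\ge 1$ (which is built into "$\bar x\in[\lambda]^k$") rules out empty parts in the soundness direction, and conversely that the genuine partition $\Omega$ has all cut values in $[1,\lambda]$ so it legitimately indexes into the table. Everything else is unwinding Definition \ref{defn:fc-fs} with $A_r=\emptyset$ and $G_r=G$.
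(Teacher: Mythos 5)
Your proof is correct and is essentially the paper's own argument: the second paragraph shows $\mathrm{RHS}\le\OPT$ by checking that $\Omega$ witnesses $f$-correctness at the root, and the first paragraph shows $\mathrm{RHS}\ge\OPT$ via $f$-soundness plus the observation that $\bar x\in[\lambda]^k$ rules out empty parts (your "$|\delta_G(\emptyset)|=0$" justification is actually cleaner than the paper's appeal to connectivity). One small slip: the framing sentences at the top of each paragraph state the inequality directions backwards ("$\OPT$ is at least/at most the right-hand side"), but the conclusions you actually derive at the end of each paragraph are correct and together give equality.
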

\begin{proof}
The optimum partition $\Omega$ is a $k$-subpartition of $V=V(G_r)$ that witnesses \fcness of $f_r(\mcp_\emptyset,(|\delta_G(\Omega_1)|,\ldots,|\delta_G(\Omega_k)|),2k^2)$. 
Hence, $f_r(\mcp_\emptyset,(|\delta_G(\Omega_1)|,\ldots,|\delta_G(\Omega_k)|),2k^2)=1$, where $|\delta_G(\Omega_i)|\in[\lambda]$ for every $i\in [k]$. Consequently, 
\[\min\left\{\max_{i\in[k]}\{x_i\}:f_r(\mathcal{P}_{\emptyset},\Bar{x},2k^2)=1,\Bar{x}\in[\lambda]^k\right\}\leq\max_{i\in[k]}\{|\delta_G(\Omega_i)|\}=\OPT.\]

We now show the reverse inequality. Suppose that $f_r(\mathcal{P}_{\emptyset},\Bar{x},2k^2)=1$ for some $\Bar{x}\in[\lambda]^k$. Then there exists a $k$-subpartition $\mcp'$ of $V$ witnessing \fsness of $f_r(\mathcal{P}_{\emptyset},\Bar{x},2k^2)$. Since $\Bar{x}\in[\lambda]^k$, we know that $x_i\ge 1$ for every $i\in [k]$. Since $G$ is connected, this implies that each part of $\mcp'$ is non-empty. Therefore, $\mcp'$ is also a $k$-partition of $V$ and is hence, feasible for \mmkc. This implies that
\[
\max_{i\in [k]} \{x_i\} \ge \opt.
\]
\end{proof}

By Proposition \ref{prop:OPT from fr function}, it suffices to compute an \fc and \fs function $f_r$, where $r$ is the root of the tree decomposition $\tau$. We will compute this in a bottom-up fashion on the tree decomposition using the following lemma. 

\begin{lemma}\label{lemma:children f to parent f}
There exists an algorithm that takes as input 
$(\tau,\chi)$, a tree node $t\in V(\tau)$, Boolean functions $f_{t'}:\mathcal{F}^{A_{t'}}\times\{0,1,\ldots,\lambda\}^k\times\{0,1,\ldots,2k^2\}\to\{0,1\}$ for every child $t'$ of $t$ that are \fc and \fs, and runs in time $(\lambda k)^{O(k^2)}n^{O(1)}$ to return a function  $f_t:\mathcal{F}^{A_t}\times\{0,1,\ldots,\lambda\}^k\times\{0,1,\ldots,2k^2\}\to\{0,1\}$ that is \fc and \fs.
\end{lemma}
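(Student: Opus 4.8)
The goal is to compute $f_t$ from the functions $f_{t'}$ of the children $t'$ of $t$, respecting both \fcness and \fsness. The natural approach is a nested dynamic program \emph{inside} the bag $\chi(t)$, processing the children of $t$ one at a time. To make this work I would first fix a \emph{nice decomposition} of the bag $\chi(t)$ (the notion promised in Section~\ref{sec:nice-decomposition}): this should let us peel off the children's adhesions one at a time while keeping the ``interface'' over which we track information small, of size $O(\lambda k)$. The state of this inner DP, for a prefix of the children processed so far, will record (a) a $T$-feasible partition (with $\le k$ parts) of the current interface vertices --- there are $(\lambda k)^{O(k^2)}$ of these by Lemma~\ref{lemma F family size} applied to a set of size $O(\lambda k)$; (b) a vector $\bar{x}\in\lrange^k$ recording, for each of the $k$ parts, the number of graph edges cut so far within the portion of $G_t$ processed; and (c) an integer $d\in\krange$ recording the number of tree edges of $T$ cut so far. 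Since each of the three coordinates lives in a set of size $(\lambda k)^{O(k^2)}$, $\lambda^{O(k)}$, and $O(k^2)$ respectively, the total number of inner DP states is $(\lambda k)^{O(k^2)}$, and with $\poly(n)$ children the whole computation is $(\lambda k)^{O(k^2)}n^{O(1)}$.

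\textbf{Key steps.} First, invoke the nice-decomposition lemma to obtain an ordering/structure on the children of $t$ together with small interfaces. Second, set up the inner DP: initialize at the ``empty prefix'' with the information coming purely from $\chi(t)$'s own vertices outside all child-adhesions (here the edge-unbreakability of $\chi(t)$ from Lemma~\ref{lemma tree decomposition} is what keeps the bookkeeping tractable --- any $2$-partition of the bag with few cut edges has a small side, so the relevant partitions of the bag are limited). Third, the transition step: given the DP table for a prefix and the function $f_{t'}$ for the next child $t'$, enumerate all compatible pairs consisting of (i) a current inner state and (ii) a triple $(\mcp_{A_{t'}},\bar y, d')$ with $f_{t'}(\mcp_{A_{t'}},\bar y, d')=1$ such that the partition $\mcp_{A_{t'}}$ agrees with the current interface partition on the shared vertices $A_{t'}\cap(\text{interface})$; merge them by taking the common coarsening, add the edge-count vectors $\bar x + \bar y$ (plus any edges of $G[\chi(t)]$ newly accounted for, being careful not to double-count edges inside $A_{t'}$), add the tree-edge counts $d+d'$, and set the new state's value to $1$. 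Fourth, after all children are processed, read off $f_t$ by projecting the interface partition down to $A_t$ and checking it lies in $\mcf^{A_t}$; set $f_t(\mcp_{A_t},\bar x, d)=1$ iff some reachable final state projects to $(\mcp_{A_t},\bar x, d)$ (or to a state dominated by it, i.e.\ with tree-cut count $\le d$). Fifth, prove correctness and soundness: \fsness follows because every $1$-entry is built by gluing actual $k$-subpartitions witnessing soundness of the children along a consistent interface, yielding a genuine $k$-subpartition of $V(G_t)$ meeting conditions (i)--(iii); \fcness follows because the fixed optimum $\Omega$, restricted to $V(G_t)$, restricts further to each $G_{t'}$ in a way that witnesses \fcness of the corresponding child entry (using condition (iv) and the fact that $\Omega$ $(2k^2)$-respects $T$ so all intermediate tree-cut counts stay $\le 2k^2$ and all partitions encountered are $T$-feasible), so the merge reconstructs exactly the entry for $\Omega\restriction V(G_t)$.

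\textbf{Main obstacle.} The delicate point is controlling the interface size throughout the inner DP. If we naively tried to track a partition of $\chi(t)$ itself, the bag could be huge (only unbreakable, not small), so the number of relevant partitions would not be bounded by $(\lambda k)^{O(k^2)}$. The resolution must come from the nice decomposition together with edge-unbreakability: the partitions of the bag that can arise from a $k$-subpartition cut by at most $2k^2$ tree edges have all but $O(k)$ ``large'' blocks, and the large blocks are forced up to bounded ambiguity by unbreakability, so the interface really only needs to remember a $T$-feasible partition of an $O(\lambda k)$-size projected set (a $\proj(T,\cdot)$-type object) rather than of the whole bag. Getting this reduction precise --- i.e.\ showing that a compact, $O(\lambda k)$-size interface suffices to recover $f_t$ correctly and soundly --- is where the real work lies, and it is presumably exactly what the ``nice decomposition'' machinery of Section~\ref{sec:nice-decomposition} is designed to provide; I would lean on it as a black box here and defer its construction to that section. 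A secondary bookkeeping hazard is avoiding double-counting of edges that lie inside shared adhesions when summing the $\bar x$ vectors across children; this is handled by a careful but routine inclusion–exclusion, charging each edge of $G[V(G_t)]$ to exactly one child or to $G[\chi(t)]$.
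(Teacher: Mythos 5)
Your high-level plan --- fix a nice decomposition of $\chi(t)$, do an inner DP that sweeps across the children, and glue child partitions together, reading off $f_t$ at the end --- does match the architecture of the paper's proof (via the $r^D, g, h, \mu, \nu$ hierarchy). However, the specific black box you are leaning on does not do what you need it to, and two pieces of machinery you would need are missing.

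First, the interface object is wrong. You propose to track ``a $T$-feasible partition (with $\le k$ parts) of the current interface vertices ... of size $O(\lambda k)$''. But there is no such small set of ``interface vertices'' inside $\chi(t)$ as you sweep across children: the union of adhesions of not-yet-processed children can be as large as $|\chi(t)|$, and a $\proj(T,\cdot)$-type object of it would not have size $O(\lambda k)$ either. What the paper's nice decomposition actually supplies is a \emph{fixed partition} $\mathcal{Q}_{\chi(t)}$ of the bag whose parts are grouped into $\mathcal{P}_{\chi(t)}=(P_1,\ldots,P_p,O)$ with each $P_\ell\cup O$ containing only $O(k^2)$ blocks of $\mathcal{Q}_{\chi(t)}$. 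The DP state then tracks a $k$-subpartition of $O\cup P_\ell$ that \emph{coarsens} $\mathcal{Q}_{\chi(t)}|_{O\cup P_\ell}$ --- only $k^{O(k^2)}$ possibilities --- not a $T$-feasible partition of a small vertex set. The structural properties (iii) and (iv) of the nice decomposition (no $G_t$-edges between distinct $P_\ell,P_{\ell'}$, and each child adhesion hitting at most one $P_\ell$ besides $O$) are exactly what lets you process the parts $P_1,\ldots,P_p$ independently (except through $O$) and, within $P_\ell$, process the children whose adhesions live there one at a time. Without these two properties your ``merge by common coarsening'' step has no reason to keep the state small, so the $(\lambda k)^{O(k^2)}$ bound does not follow from what you have written.

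Second, you omit the label-matching step. The child functions $f_{t'}$ and the inner state each describe a $k$-\emph{sub}partition (an ordered $k$-tuple, some parts possibly empty), and the coordinate $i$ of one witness need not correspond to coordinate $i$ of another: only the intersection with the shared adhesion pins down the correspondence on some coordinates, and the remaining coordinates must be matched by trying permutations. The paper's transition steps explicitly iterate over permutation pairs $(\sigma_1,\sigma_2)$ subject to compatibility constraints (conditions (G1)--(G3) and (N1)--(N2)) before adding the cut-count vectors; skipping this would make both the soundness gluing and the correctness bookkeeping fail, since $\bar x + \bar y$ is only meaningful after the coordinates are aligned. Relatedly, your description of $f$-correctness glosses over why the argument must be anchored to the fixed optimum $\Omega$: the nice-decomposition family $\mathcal{D}$ only guarantees a member whose $\mathcal{Q}_{\chi(t)}$ refines the restriction of $\Omega$ to $\chi(t)$, not of an arbitrary soundness witness, which is precisely why the paper maintains two distinct (\fc vs.\ \fs) properties throughout the recursion.
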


We now complete the proof of Theorem \ref{thm dp} using Lemma \ref{lemma:children f to parent f} and Proposition \ref{prop:OPT from fr function}. 
\begin{proof}[Proof of Theorem \ref{thm dp}]
In order to compute a function $f_r:\mathcal{F}^{A_r}\times\{0,1,\ldots,\lambda\}^k\times\{0,1,\ldots,2k^2\}\to\{0,1\}$ that is both \fc and \fs, we can apply Lemma \ref{lemma:children f to parent f} on each tree node $t\in V(\tau)$ in a bottom up fashion starting from the leaf nodes of the tree decomposition. Therefore, using Lemmas \ref{lemma F family size} and \ref{lemma:children f to parent f}, the total run time to compute $f_r$ is 
\[(\lambda k)^{O(k^2)}n^{O(1)}\cdot|V(\tau)|=(\lambda k)^{O(k^2)}n^{O(1)}\cdot\poly(n,\lambda,k)=(\lambda k)^{O(k^2)}n^{O(1)}.\]
Using Proposition \ref{prop:OPT from fr function}, we can compute OPT from the function $f_r$. Consequently, the total time to compute OPT is $(\lambda k)^{O(k^2)}n^{O(1)}$.
\end{proof}

We will prove Lemma \ref{lemma:children f to parent f} in the following subsections. 
We fix the tree node $t\in V(\tau)$ for the rest of the subsections. 

\subsection{Nice decomposition}\label{sec:nice-decomposition}

We need the notion of a nice decomposition that we define below. Our definition differs from the notion of the nice decomposition defined by \cite{LSS} in property (ii). 

\begin{definition}\label{def nice decomp}
A \emph{nice decomposition} of $\chi(t)$ is a triple $(\mathcal{P}_{\chi(t)},\mathcal{Q}_{\chi(t)},O)$ where $\mathcal{P}_{\chi(t)}$ and $\mathcal{Q}_{\chi(t)}$ 
are partitions of $\chi(t)$, $\mathcal{Q}_{\chi(t)}$ refines $\mathcal{P}_{\chi(t)}$, and $O$ is either a part of $\mathcal{P}_{\chi(t)}$ or $\emptyset$. Additionally, the following conditions need to be met:
\begin{enumerate}[(i)]
    \item If $O\neq\emptyset$, then $O$ is a part of $\mathcal{Q}_{\chi(t)}$.
    \par If $O=\emptyset$, then $\mathcal{P}_{\chi(t)}$ has only one part.
    \item For every part $P$ of $\mathcal{P}_{\chi(t)}$, $P$ contains at most $4k^2+1$ parts of $\mathcal{Q}_{\chi(t)}$.
    \item For every pair of distinct parts $P,P'$ of $\mathcal{P}_{\chi(t)}$ other than $O$, there are no edges between $P$ and $P'$.
    \item If $t'\in V(\tau)$ is a child of $t$ or $t$ itself, then $A_{t'}$ intersects with at most one part of $\mathcal{P}_{\chi(t)}$ other than $O$.
\end{enumerate}
\end{definition}
 In order to compute an \fc and \fs function $f_t:\mathcal{F}^{A_t}\times\{0,1,\ldots,\lambda\}^k\times\{0,1,\ldots,2k^2\}\to\{0,1\}$, we will compute a family $\mcd$ of nice decompositions of $\ct$ such that if there exists a $k$-subpartition $\Pi$ of $V(G_t)$ that realizes \fcness of $f_t(\mcp_{A_t}, \Bar{x}, d)$ for some $\mcp_{A_t}\in\mathcal{F}^{A_t}$, $\bar{x}\in\lrange^k$ and $d\in\krange$, then there exists a nice decomposition $(\mcp_{\ct},\mcq_{\ct},O)$ in $\mathcal{D}$ such that $\mcq_{\ct}$ refines a restriction of $\Pi$ to $\ct$.
A formal statement is given in Lemma \ref{lemma:nice decomp}.

\begin{restatable}{lemma}{lemGeneratingNiceDecomps}
\label{lemma:nice decomp}
There exists an algorithm that takes as input 
the spanning tree $T$, 
the tree decomposition $(\tau, \chi)$, 
a tree node $t\in V(\tau)$, 
and runs in time $(\lambda k)^{O(k^2)}n^{O(1)}$ 
to return a family $\mathcal{D}$ of nice decompositions of $\ct$ with $|\mathcal{D}|=(\lambda k)^{O(k^2)}n^{O(1)}$. Additionally, if a $k$-subpartition $\Pi$ of $V(G_t)$ realizes \fcness of $f_t(\mathcal{P}_{A_t},\Bar{x},d)$ for some $\mathcal{P}_{A_t}\in\mathcal{F}^{A_t}$, $\Bar{x}\in\lrange^k$ and $d\in\krange$, then $\mathcal{D}$ contains a nice decomposition $(\mcp_{\chi(t)},\mathcal{Q}_{\chi(t)},O)$ where $\mathcal{Q}_{\chi(t)}$ refines a restriction of $\Pi$ to $\chi(t)$.
\end{restatable}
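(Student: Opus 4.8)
The plan is to build the family $\mathcal{D}$ in two nested stages, mirroring the two-level structure $(\mathcal{P}_{\chi(t)}, \mathcal{Q}_{\chi(t)})$ of a nice decomposition, and to use the edge-unbreakability of $\chi(t)$ guaranteed by Lemma \ref{lemma tree decomposition} together with the spanning tree $T$ of Lemma \ref{lemma thorup tree} to keep everything of size $(\lambda k)^{O(k^2)}n^{O(1)}$. Fix the optimum $k$-subpartition $\Pi = (\Pi_1,\ldots,\Pi_k)$ of $V(G_t)$ witnessing \fcness of $f_t(\mathcal{P}_{A_t},\bar x, d)$; by condition (iii) in Definition \ref{defn:fc-fs} we have $|\delta_T(\Pi)| \le d \le 2k^2$, and since $\Pi$ is the restriction of $\Omega$ to $V(G_t)$ and $\Omega$ $(2k^2)$-respects $T$, the restriction $\Pi_{\chi(t)}$ of $\Pi$ to $\chi(t)$ is in particular crossed by few edges of the projected tree. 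The target is to guess a pair $(\mathcal{P}_{\chi(t)}, \mathcal{Q}_{\chi(t)}, O)$ such that $\mathcal{Q}_{\chi(t)}$ refines $\Pi_{\chi(t)}$ — we do not need $\mathcal{Q}_{\chi(t)}$ to equal $\Pi_{\chi(t)}$, which is crucial because $\Pi_{\chi(t)}$ itself need not satisfy the structural bounds (ii)–(iv).

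First I would construct the coarse partition $\mathcal{P}_{\chi(t)}$. The idea is that $\chi(t)$ decomposes into one "exceptional" part $O$ (where $\Pi$ may genuinely cut $\chi(t)$ in a complicated, non-local way) together with a bounded number of "simple" parts, each of which is essentially a connected chunk that $\Pi$ does not cut much. To guess $O$: by edge-unbreakability of $\chi(t)$ with parameters $((\lambda k+1)^5, \lambda k)$, any part $\Pi_j$ whose boundary inside $G_t$ has at most $\lambda k$ edges has $|\Pi_j \cap \chi(t)| \le (\lambda k+1)^5$ or $|\chi(t)\setminus \Pi_j| \le (\lambda k+1)^5$; in the latter "large" case all the structure is concentrated, and one takes the union of $\chi(t)\cap \Pi_j$ over such large $j$ (there are at most a constant number of them since the $\Pi_j$ are disjoint) to form a candidate for $O$ — this is a set that can be enumerated over a family of size $n^{O(1)}$ by guessing which $O(1)$ vertices form the small complements, or more carefully by using the projection tree $\proj(T,\chi(t))$ of size $O(\lambda k)$ (Observation \ref{obs proj tree size}) and enumerating how the $\le 2k^2$ cut edges of $\Pi$ distribute on it, giving $(\lambda k)^{O(k^2)}$ choices. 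Having fixed $O$, on $\chi(t)\setminus O$ the partition $\Pi$ cuts few $T$-edges, so $\chi(t)\setminus O$ breaks into connected components of $(\chi(t)\setminus O)$ in the subgraph $G - \delta_T(\Pi)$ restricted appropriately; grouping these according to $\Pi$ and according to which child adhesion $A_{t'}$ they meet yields the remaining parts of $\mathcal{P}_{\chi(t)}$, and conditions (iii) and (iv) are then enforced by construction. Condition (iv) — each child or own adhesion meeting at most one non-$O$ part — is handled by merging any two non-$O$ parts that are hit by a common $A_{t'}$; since adhesions have size at most $\lambda k$ and there are $\poly(n)$ children, and since merging only coarsens, this stays within the size budget and does not violate (iii) as long as $O$ absorbed all the genuine inter-part edges, which the choice of $O$ guarantees.

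Second, for each candidate $\mathcal{P}_{\chi(t)}$ and each part $P$ of it, I would construct the refinement $\mathcal{Q}_{\chi(t)}$ by guessing, within $P$, a partition into at most $4k^2+1$ pieces that refines $\Pi_{\chi(t)}|_P$ — this is exactly condition (ii). The bound $4k^2+1$ comes from the fact that $\Pi_{\chi(t)}|_P$, being a restriction of a partition with at most $2k^2$ tree-cut edges, when pulled back to $\proj(T, P)$ (of size $O(|P|)$, but the relevant count is the number of cut edges) has at most $2k^2$ relevant cut edges and hence at most $2k^2+1$ pieces meeting $P$ in a nontrivial way — I would double-check the exact constant, but the point is it is $O(k^2)$, and one can enumerate all partitions of $\proj(T,P)$'s vertex set into $\le 2k^2+1$ connected pieces by removing $\le 2k^2$ edges from a tree of size $O(\lambda k)$ and binning components, exactly as in the proof of Lemma \ref{lemma F family size}, giving $(\lambda k)^{O(k^2)}$ choices per part $P$ and hence $(\lambda k)^{O(k^2)} \cdot n^{O(1)}$ choices overall (using that $\mathcal{P}_{\chi(t)}$ has $O(1)$ parts plus possibly $O$, but $O$ is a single part of $\mathcal{Q}_{\chi(t)}$ by (i) so it contributes nothing). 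For $O$ itself, condition (i) says $O$ is its own part in $\mathcal{Q}_{\chi(t)}$, so no further guessing there; and when $O = \emptyset$, condition (i) forces $\mathcal{P}_{\chi(t)}$ to be trivial, which we include as a special case. Taking $\mathcal{D}$ to be the union of all $(\mathcal{P}_{\chi(t)},\mathcal{Q}_{\chi(t)},O)$ produced this way gives $|\mathcal{D}| = (\lambda k)^{O(k^2)} n^{O(1)}$, and the enumeration time is of the same order.

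The correctness claim — that whenever such a witnessing $\Pi$ exists, the family $\mathcal{D}$ contains a nice decomposition with $\mathcal{Q}_{\chi(t)}$ refining $\Pi_{\chi(t)}$ — then follows because every guessing step above was set up to have $\Pi$'s "true" value among the enumerated candidates: the true $O$ (union of $\chi(t)$-parts of the large $\Pi_j$'s) is enumerated, the true coarse grouping outside $O$ is a coarsening of connected components we enumerate, the merging steps for (iv) only coarsen $\mathcal{P}_{\chi(t)}$ and do not affect the existence of a refining $\mathcal{Q}_{\chi(t)}$, and within each part the true trace $\Pi_{\chi(t)}|_P$ (or any refinement of it with few pieces) is among the enumerated sub-partitions. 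I expect the main obstacle to be the bookkeeping that pins down $O$ correctly: one must argue that after pulling $O$ out, \emph{all} edges of $G[\chi(t)]$ between distinct $\Pi$-classes lie inside $O$ or are incident to $O$ — equivalently that every $\Pi_j$ in the "small" side of edge-unbreakability contributes a chunk of $\chi(t)$ with \emph{no} outgoing edges to other small chunks. This is where edge-unbreakability with the specific parameter $\lambda k$ (matching the adhesion bound) has to be combined carefully with the bound $|\delta_{G_t}(\Pi_j)| \le \lambda$ coming from $\bar x \in \lrange^k$; handling the parts that are neither small nor isolated, by absorbing them into $O$, and verifying that only $O(1)$ such parts exist and that $O$ remains a single $\mathcal{Q}_{\chi(t)}$-part as required by (i), is the delicate part of the argument.
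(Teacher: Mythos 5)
Your proposal has the right high-level shape — guess the exceptional part $O$ using edge-unbreakability of $\chi(t)$ (at most one part of $\Pi_{\chi(t)}$ can exceed $(\lambda k+1)^5$), then guess a refinement within each coarse part — and your self-flagged worry about pinning down $O$ is indeed the delicate spot. But there is a concrete error earlier that makes the running-time claim unsound: you assert that $\proj(T,\chi(t))$ has size $O(\lambda k)$, citing Observation~\ref{obs proj tree size}. That observation gives $|\proj(T,X)| = O(|X|)$, so $\proj(T,A_t)$ is $O(\lambda k)$ (because $|A_t|\le \lambda k$), but $\proj(T,\chi(t))$ has size $O(|\chi(t)|)$, which can be $\Theta(n)$. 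Consequently, naively enumerating subsets of $\le 4k^2$ edges of $\proj(T,\chi(t))$, or partitions of $\proj(T,P)$ by removing $\le 2k^2$ edges, costs $n^{\Theta(k^2)}$, not $(\lambda k)^{O(k^2)}$. This also infects your plan to enumerate candidates for $O$ and for the per-part refinements; the analogy with the proof of Lemma~\ref{lemma F family size} does not transfer precisely because that proof works on the small projection tree $\proj(T,A_t)$.

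The paper's proof closes this gap with a tool you do not invoke: Lemma~\ref{lemma:2.1 LSS} (the universal-family / covering lemma from LSS). It is applied twice: once to the edge set $E(\proj(T,\chi(t)))$ (to obtain a family $\mathcal{C}$ of size $(\lambda k)^{O(k^2)}\log n$ that contains some $C$ with $\tilde{C}\subseteq C$ and $C\cap\tilde{S}_E=\emptyset$), and once to the set of parts of $\mathcal{R}_{C}$ (to obtain a family $\mathcal{S}_{C}$ that contains some $X$ with $L_C\subseteq X$ and $X\cap N_C=\emptyset$). This is the mechanism that replaces brute-force enumeration over a potentially linear-size projection tree by a polylog-size covering family, and it is what makes the size and time bounds of $(\lambda k)^{O(k^2)}n^{O(1)}$ attainable. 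After that, the three-stage merge (parts outside $X$ into $O_1$, large components of the adhesion/$G_t$-edge adjacency graph into $O_2$, remaining components into one part each) produces the nice decomposition, and the bounds $|\tilde{S}_E|$, $|N_C|$ (the paper's Claims~\ref{claim:SE size bound} and~\ref{claim:bounding nbrs in R}) are precisely the parameters $s_2$ fed into Lemma~\ref{lemma:2.1 LSS}. Without some version of this covering argument, your construction does not meet the stated time bound, and the handwave ``enumerable over a family of size $n^{O(1)}$'' for the choice of $O$ is not justified: $O$ need not be recoverable from $O(1)$ vertices, since $\chi(t)\setminus O$ can contain many small $\Pi$-parts.
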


We defer the proof of Lemma \ref{lemma:nice decomp} to Section \ref{sec:generating-nice-decompositions}. 

We now discuss the need for distinct definitions for correctness and soundness. 
If a $k$-subpartition $\Pi$ of $V(G_t)$ realizes \fcness of $f_t(\mcp_{A_t},\Bar{x},d)$ for some $\mcp_{A_t}\in\mcf^{A_t}$, $\Bar{x}\in\lrange^k$ and $d\in\krange$, then $\Pi$ is a restriction of $\Omega$, the optimal partition, to $V(G_t)$.
The family $\mcd$ of nice decompositions then serves to provide a partition $\mcq_{\ct}$ of $\ct$ that refines a restriction of the optimal partition $\Omega$ to $\ct$, which will later be used to identify an optimal partition. We note that if a $k$-subpartition $\Pi$ of $V(G_t)$ only witnesses $f$-soundness of $f_t(\mcp_{A_t},\bar{x},d)$ for some $\mcp_{A_t}\in\mcf^{A_t}$, $\bar{x}\in\lrange^k$ and $d\in\krange$, then the family $\mcd$ is not guaranteed to provide a refinement of a restriction of the $k$-subpartition $\Pi$ to $\ct$. This motivates the two distinct definitions for correctness and soundness.

\subsection{Computing an \fc and \fs function $f_t$}\label{sec:recursion}
In this section, we will prove Lemma \ref{lemma:children f to parent f}. For a fixed tree node $t\in V(\tau)$, we will describe an algorithm to assign values to $f_t(\mcp_{A_t},\Bar{x},d)$ for all $\mcp_{A_t}\in\mcf^{A_t}$, $\bar{x}\in\lrange^k$, and $d\in\krange$ based on the value of $f_{t'}(\mcp_{A_{t'}},\Bar{\alpha},\beta)$ for all children $t'$ of $t$, and all $\mcp_{A_{t'}}\in\mcf^{A_{t'}}$, $\bar{\alpha}\in\lrange^k$, and $\beta\in\krange$ so that the resulting function $f_t$ is \fc and \fs. 

For the fixed $t\in V(\tau)$, we use Lemma \ref{lemma:nice decomp} to obtain a family $\mcd$ of nice decompositions of $\chi(t)$. 
Our plan to compute the function $f_t$ involves working with each nice decomposition $D\in \mcd$. 
The following definition will be helpful in transforming our goal of computing the function $f_t$. 

\begin{definition}
Let $D:=(\mcp_{\chi(t)},\mcq_{\chi(t)},O)\in\mcd$, and $r^D:\mcf^{A_t}\times[\lambda]^k\times\{0,1,\ldots,2k^2\}\to\{0,1\}$ be a Boolean function.
\begin{enumerate}
    \item
    \textbf{(Correctness)}
    The function $r^D$ is \emph{\rc} if we have $r^D(\mcp_{A_t},\Bar{x},d)=1$ for all $\mcp_{A_t}=(\tilde{P}_1, \ldots, \tilde{P}_{k'})\in\mcf^{A_t}$, $\Bar{x}\in \lrange^k$ and $d\in\krange$ for which there exists a $k$-subpartition $\mcp=(P'_1,\ldots,P'_k)$ of $V(G_t)$ satisfying the following conditions:
    \begin{enumerate}[(i)]
        \item $P'_i\cap A_t=\Tilde{P}_i$ for all $i\in [k']$, 
        \item $|\delta_{G_t}(P'_i)|= x_i$ for all $i\in [k]$,
        \item $|\delta_T(\mcp)|\leq d$,
        \item $\mcp$ restricted to $\ct$ coarsens $\mcq_{\chi(t)}$, and
        \item $\mcp$ is a restriction of $\Omega$ to $V(G_t)$.
    \end{enumerate}
    A $k$-subpartition of $V(G_t)$ satisfying the above five conditions is said to \emph{witness} \rcness of $r^D(\mcp_{A_t},\Bar{x},d)$.
    \item 
    \textbf{(Soundness)} The function $r^D$ is \emph{\rs} if for all $\mcp_{A_t}=(\tilde{P}_1, \ldots, \tilde{P}_{k'})\in\mcf^{A_t}$, $\Bar{x}\in \lrange^k$ and $d\in\krange$, we have $r^D(\mcp_{A_t},\Bar{x},d)=1$ only if there exists a $k$-subpartition $\mcp=(P'_1,\ldots,P'_k)$ of $V(G_t)$ satisfying conditions (i), (ii), (iii) and (iv) above. A $k$-subpartition of $V(G_t)$ satisfying (i), (ii), (iii) and (iv) is said to \emph{witness} \rsness of $r^D(\mcp_{A_t},\Bar{x},d)$.
\end{enumerate}
\end{definition}

The next proposition shows that \rc and \rs functions can be used to recover an \rc and \fc function. 
\begin{proposition}\label{prop:rD to ft}
Suppose that we have functions $r^D:\mcf^{A_t}\times[\lambda]^k\times\{0,1,\ldots,2k^2\}\to\{0,1\}$ for every $D\in \mcd$ such that all of them are both \rc and \rs. Then, the function $f_t$ obtained by setting 
\[f_t(\mcp_{A_t},\Bar{x},d):=\max\set{r^D(\mcp_{A_t},\Bar{x},d):\ D\in\mcd}\]
for every $\mcp_{A_t}\in\mcf^{A_t}$, $\Bar{x}\in \lrange^k$ and $d\in\krange$ is both \fc and \fs.
\end{proposition}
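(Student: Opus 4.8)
The plan is to prove the two directions separately, exploiting the fact that $f_t$ is the pointwise maximum over $D \in \mcd$ of the functions $r^D$.

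\medskip

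\noindent\textbf{Proof plan.}
First I would prove \fsness of $f_t$. Suppose $f_t(\mcp_{A_t},\bar x, d) = 1$ for some $\mcp_{A_t}\in\mcf^{A_t}$, $\bar x\in\lrange^k$, $d\in\krange$. By definition of $f_t$ as a maximum, there exists some $D\in\mcd$ with $r^D(\mcp_{A_t},\bar x,d) = 1$. Since $r^D$ is \rs, there is a $k$-subpartition $\mcp$ of $V(G_t)$ witnessing \rsness of $r^D(\mcp_{A_t},\bar x,d)$, i.e., satisfying conditions (i), (ii), (iii), (iv) of the \rc/\rs definition. In particular $\mcp$ satisfies (i), (ii), (iii), which are exactly the conditions required to witness \fsness of $f_t(\mcp_{A_t},\bar x,d)$. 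Hence $f_t$ is \fs. (Condition (iv) is simply discarded here; it is only a further restriction, so dropping it preserves the property.)

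\medskip

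Next I would prove \fcness of $f_t$. Fix $\mcp_{A_t} = (\tilde P_1,\ldots,\tilde P_{k'})\in\mcf^{A_t}$, $\bar x\in\lrange^k$, and $d\in\krange$ for which there exists a $k$-subpartition $\Pi = (P_1',\ldots,P_k')$ of $V(G_t)$ satisfying the four \fcness conditions (i)--(iv) of Definition \ref{defn:fc-fs}; I must show $f_t(\mcp_{A_t},\bar x,d) = 1$. The key point is that $\Pi$ realizes \fcness of $f_t(\mcp_{A_t},\bar x,d)$, so by the guarantee of Lemma \ref{lemma:nice decomp}, the family $\mcd$ contains a nice decomposition $D = (\mcp_{\ct},\mcq_{\ct},O)$ such that $\mcq_{\ct}$ refines a restriction of $\Pi$ to $\ct$. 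I claim that this same $\Pi$ witnesses \rcness of $r^D(\mcp_{A_t},\bar x,d)$: conditions (i), (ii), (iii), (v) of the \rc/\rs definition are precisely conditions (i), (ii), (iii), (iv) of the \fc/\fs definition which $\Pi$ already satisfies, and condition (iv) of the \rc/\rs definition (``$\Pi$ restricted to $\ct$ coarsens $\mcq_{\ct}$'') holds because $\mcq_{\ct}$ refines the restriction of $\Pi$ to $\ct$, which by the definition of refinement means exactly that the restriction of $\Pi$ to $\ct$ coarsens $\mcq_{\ct}$. Since $r^D$ is \rc, it follows that $r^D(\mcp_{A_t},\bar x,d) = 1$, and therefore $f_t(\mcp_{A_t},\bar x,d) = \max_{D'\in\mcd} r^{D'}(\mcp_{A_t},\bar x,d) \ge r^D(\mcp_{A_t},\bar x,d) = 1$. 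Hence $f_t$ is \fc.

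\medskip

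\noindent\textbf{Main obstacle.}
The soundness direction is entirely routine, as is the bookkeeping in the correctness direction. The one step that requires care — and that I expect to be the crux — is invoking Lemma \ref{lemma:nice decomp} correctly: one must verify that the hypothesis ``$\Pi$ realizes \fcness of $f_t(\mcp_{A_t},\bar x,d)$'' is satisfied so that the lemma applies and yields a nice decomposition $D\in\mcd$ whose finer partition $\mcq_{\ct}$ refines the restriction of $\Pi$ to $\ct$. Once that decomposition is in hand, matching up the numbered conditions between the \fc/\fs definition and the \rc/\rs definition is mechanical, with the only subtlety being the correct reading of ``refines'' versus ``coarsens'' to see that condition (iv) of \rcness is automatically met. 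I would also note in passing that nothing here uses any structural property of nice decompositions beyond the existence guarantee of Lemma \ref{lemma:nice decomp}; the real content of Lemma \ref{lemma:children f to parent f} is postponed to the construction of the \rc and \rs functions $r^D$, which is carried out in the subsequent subsections.
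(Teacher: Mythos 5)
Your proposal is correct and follows essentially the same argument as the paper's proof: soundness is a direct pass-through from the witness of $r^D$, and correctness invokes Lemma \ref{lemma:nice decomp} to pick a suitable $D^\ast\in\mcd$ so that the same witness satisfies the extra coarsening condition (iv) of \rcness. Your version just spells out the condition-matching more explicitly and handles the two directions in the opposite order; the underlying reasoning is identical.
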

\begin{proof}
We first show \fcness. For $\mcp_{A_t}\in\mcf^{A_t}$, $\bar{x}\in\lrange^k$ and $d\in\krange$, suppose that there exists a $k$-subpartition $\Pi$ of $V(G_t)$ witnessing \fcness of $f_t(\mcp_{A_t},\Bar{x},d)$. By Lemma \ref{lemma:nice decomp}, we know that $\mcd$ contains a nice decomposition $D^\ast=(\mcp^\ast_{\ct},\mcq^\ast_{\ct},O^\ast)$ such that $\mcq^\ast_{\ct}$ refines $\Pi$ restricted to $\ct$. Then by \rcness of $r^{D^{\ast}}$, we know that $r^{D^{\ast}}(\mcp_{A_t},\bar{x},d)=1$. This implies $f_t(\mcp_{A_t},\Bar{x},d)=1$.

Next we show \fsness. Suppose that $f_t(\mcp_{A_t},\Bar{x},d)=1$ for some $\mcp_{A_t}\in\mcf^{A_t}$, $\bar{x}\in\lrange^k$ and $d\in\krange$. Then, there exists $D\in \mcd$ such that $r^D(\mcp_{A_t},\bar{x},d)=1$. By \rsness of the function $r^D$, there exists a $k$-subpartition $\Pi'$ of $V(G_t)$ witnessing \rsness of $r^D(\mcp_{A_t},\bar{x},d)$. It follows that $\Pi'$ also witnesses \fsness of $f_t(\mcp_{A_t},\Bar{x},d)$. 
\end{proof}
Our goal now is to compute an \rc and \rs function $r^D$ for each $D\in \mcd$. 

\begin{lemma}\label{lemma:rD update}
There exists an algorithm that takes as input 
$(\tau,\chi)$, a tree node $t\in V(\tau)$, a nice decomposition $D=(\mcp_{\chi(t)},\mcq_{\chi(t)},O)\in\mcd$, together with Boolean functions $f_{t'}:\mcf^{A_{t'}}\times\lrange^k\times\krange\to\{0,1\}$ for every child $t'$ of $t$ that are \fc and \fs, and 
runs in time $(\lambda k)^{O(k^2)}n^{O(1)}$
to return a function $r^D:\mcf^{A_t}\times[\lambda]^k\times\{0,1,\ldots,2k^2\}\to\{0,1\}$ that is \rc and \rs.
\end{lemma}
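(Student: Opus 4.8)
\textbf{Proof plan for Lemma \ref{lemma:rD update}.}

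The plan is to compute $r^D$ by a secondary dynamic program that processes the parts of $\mcp_{\ct}$ one at a time, exploiting the structure guaranteed by the nice decomposition. The key observation is that, by property (iii) of a nice decomposition, the only part of $\mcp_{\ct}$ that can have edges to other parts is $O$; every other part $P$ of $\mcp_{\ct}$ is ``isolated'' within $\chi(t)$ in the sense that edges leaving $P$ go only to $V(G_t)\setminus\chi(t)$, i.e.\ into the subgraphs $G_{t'}$ hanging off children $t'$ of $t$. Moreover, by property (iv), each adhesion $A_{t'}$ (for $t'$ a child of $t$, or $t$ itself) meets at most one non-$O$ part of $\mcp_{\ct}$, so the children of $t$ can be partitioned according to which non-$O$ part of $\mcp_{\ct}$ (if any) their adhesion touches. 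This means that the contributions of the different non-$O$ parts of $\mcp_{\ct}$ to the cut vector $\bar{x}$ and to the tree-cut count are essentially independent and can be combined by a convolution-style aggregation.

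Concretely, I would first handle a single part $P$ of $\mcp_{\ct}$ together with the set of children $t'$ of $t$ whose adhesion intersects $P$. For each way of refining $P$ consistently with $\mcq_{\ct}$ (there are at most $k^{4k^2+1}=(\lambda k)^{O(k^2)}$ such refinements by property (ii), since $\mcq_{\ct}$ puts at most $4k^2+1$ parts inside $P$, and each must be assigned one of the at most $k$ labels), and for each way of choosing, for each relevant child $t'$, an element $\mcp_{A_{t'}}\in\mcf^{A_{t'}}$ compatible with the chosen labelling of $A_{t'}\cap P$, I would look up the values $f_{t'}(\mcp_{A_{t'}},\bar{\alpha},\beta)$ and combine: the contribution of $G_{t'}$ to the count of edges crossing part $i$ is $\alpha_i$ minus a correction for edges of $G_{t'}$ incident to $A_{t'}$ that get ``absorbed'' when we glue $G_{t'}$ onto $\chi(t)$ (these are countable exactly because $A_{t'}$ and its incident edges are determined once we fix $\mcp_{A_{t'}}$ and the labelling of $\chi(t)$), and similarly the tree-edge counts add up with the tree edges inside $\chi(t)$ and on the adhesions counted once. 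I would iterate over the children one at a time, maintaining a Boolean table indexed by $\mcf^{A_t}\times\lrange^k\times\krange$ (capped: any coordinate exceeding $\lambda$ or any tree-count exceeding $2k^2$ is simply discarded, since such subpartitions are irrelevant to $r^D$), so that after processing all children the table records exactly which $(\mcp_{A_t},\bar{x},d)$ are achievable. Each glue step is a join of two tables of size $(\lambda k)^{O(k^2)}$, so it costs $(\lambda k)^{O(k^2)}$, and there are $n^{O(1)}$ children, giving the claimed running time.

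For correctness (\rcness), I would argue that if a $k$-subpartition $\mcp$ of $V(G_t)$ witnesses \rcness of $r^D(\mcp_{A_t},\bar{x},d)$, then its restriction to $\chi(t)$ coarsens $\mcq_{\ct}$ by condition (iv) of the $r$-correctness definition, so one of the refinements we enumerate matches it; restricting $\mcp$ to each $V(G_{t'})$ gives a $k$-subpartition that is a restriction of $\Omega$ (since $\mcp$ is, by condition (v)) and hence witnesses \fcness of $f_{t'}$ at the appropriate argument — here is exactly where \fcness of the child functions (as opposed to mere \fsness) is needed, because we must be able to reassemble the global optimum $\Omega$. For soundness (\rsness), conversely, whenever the algorithm sets $r^D(\mcp_{A_t},\bar{x},d)=1$, the witnesses guaranteed by \fsness of the $f_{t'}$ can be glued along $\chi(t)$ using the chosen labelling and refinement to produce a $k$-subpartition of $V(G_t)$ satisfying conditions (i)--(iv); properties (iii) and (iv) of the nice decomposition ensure there is no ``interference'' between the glued pieces beyond the part $O$, so the cut-count bookkeeping is exact and consistent. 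The main obstacle, I expect, is the careful accounting of edges incident to the adhesions $A_{t'}$ and of tree edges shared between $\chi(t)$ and the $G_{t'}$'s: one must be sure each such edge is counted exactly once and that the compactness of the tree decomposition (so that $N_G(V(G_{t'})\setminus A_{t'})=A_{t'}$) is what licenses treating the children's contributions as depending only on the interface data $\mcp_{A_{t'}}$. Handling the part $O$ — which may have internal edges as well as edges both inside $\chi(t)$ and into children — is the delicate case and will likely need its own sub-argument, possibly invoking a further round of sub-subproblems as foreshadowed in the paragraph preceding Lemma \ref{lemma:rD update}.
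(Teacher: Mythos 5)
Your plan follows essentially the same route as the paper's proof: iterate over the non-$O$ parts of $\mcp_{\chi(t)}$ exploiting properties (iii)--(iv) of the nice decomposition, process the children attached to each part one at a time, and glue contributions via the interface data $\mcp_{A_{t'}}$, with \fcness (not merely \fsness) of the child functions needed precisely to reassemble restrictions of $\Omega$, as you note. The sub-subproblems you anticipate for tracking $O$'s part, relabeling when gluing, and exact adhesion/tree-edge accounting are exactly the paper's $g^{D,\mcp_{A_t}}$, $h^{D,\mcp_{A_t}}$, $\mu^{D,\mcp_{A_t},\ell}$, $\nu^{D,\mcp_{A_t},\ell,\mcr}$ hierarchy together with the permutation conditions (G1)--(G3) and (N1)--(N2).
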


We now use Lemma \ref{lemma:rD update} to complete the proof of Lemma \ref{lemma:children f to parent f}.
\begin{proof}[Proof of Lemma \ref{lemma:children f to parent f}]
Using Lemma \ref{lemma:nice decomp}, we compute a family $\mcd$ of nice decompositions in time $(\lambda k)^{O(k^2)}n^{O(1)}$, where $|\mcd|=(\lambda k)^{O(k^2)}n^{O(1)}$. For each $D\in \mcd$, we use Lemma \ref{lemma:rD update} to compute a function $r^D:\mcf^{A_t}\times[\lambda]^k\times\{0,1,\ldots,2k^2\}\to\{0,1\}$ that is both \rc and \rs in time $(\lambda k)^{O(k^2)}n^{O(1)}$. Finally, we use Proposition \ref{prop:rD to ft} to compute the desired function $f_t$ that is both \fc and \fs. The total run-time is 
\[(\lambda k)^{O(k^2)}n^{O(1)}+(\lambda k)^{O(k^2)}n^{O(1)}\cdot (\lambda k)^{O(k^2)}n^{O(1)}=(\lambda k)^{O(k^2)}n^{O(1)}.\]
\end{proof}
The rest of the section is devoted to proving Lemma \ref{lemma:rD update}.
We fix the inputs specified in Lemma \ref{lemma:rD update} for the rest of this section. In particular, we fix $D=(\mcp_{\chi(t)},\mcq_{\chi(t)},O)\in\mcd$. 

\paragraph{Notations.} 
Let $\mathcal{P}_{\chi(t)}=(P_1,\ldots,P_p,O)$. If $O=\emptyset$, we will abuse notation and use $\mcp_{\ct}$ to refer to the partition of $\ct$ containing only one part, namely $\mcp_{\ct}=(P_1)$. We note that $p\le n$ since $D$ is a nice decomposition. 
We define $P_{\leq \ell}:=\cup_{i=1}^\ell P_i$, where $\ell\in\prange$. It follows that $P_{\leq 0}=\emptyset$. We specially define $P_0:=O$ for indexing convenience.
For every $\ell\in\prange$, we define $\mathcal{A}(\ell)$ to be the set of children of $t$ whose adhesion is contained in $O\cup P_\ell$ and intersects $P_\ell$, i.e.,
\[\mathcal{A}(\ell):=\{t'\in V(\tau):t'\text{ is a child of }t, A_{t'}\subseteq O\cup P_\ell, A_{t'}\cap P_\ell\neq\emptyset\}.\]
Moreover, let $\mathcal{A}(\ell):=\{t_1,t_2,...,t_{|\mathcal{A}(\ell)|}\}$. For each $\ell\in \set{0, 1, \ldots, p}$ and $a\in\{0,1,\ldots,|\mathcal{A}(\ell)|\}$, let
\begin{align*}
G_\leq(\ell,a)&:=G\left[O\cup P_\ell\cup\bigcup_{1\leq i\leq a}V(G_{t_i})\right], \\
G(\ell)&:=G\left[O\cup P_\ell\cup\bigcup_{t\in\mathcal{A}(\ell)}V(G_{t'})\right], \text{ and }\\
G_\leq(\ell)&:=G\left[\bigcup_{0\leq i\leq \ell}V(G(i))\right].
\end{align*}

\begin{figure}
    \centering
    \includegraphics[width=0.6\textwidth]{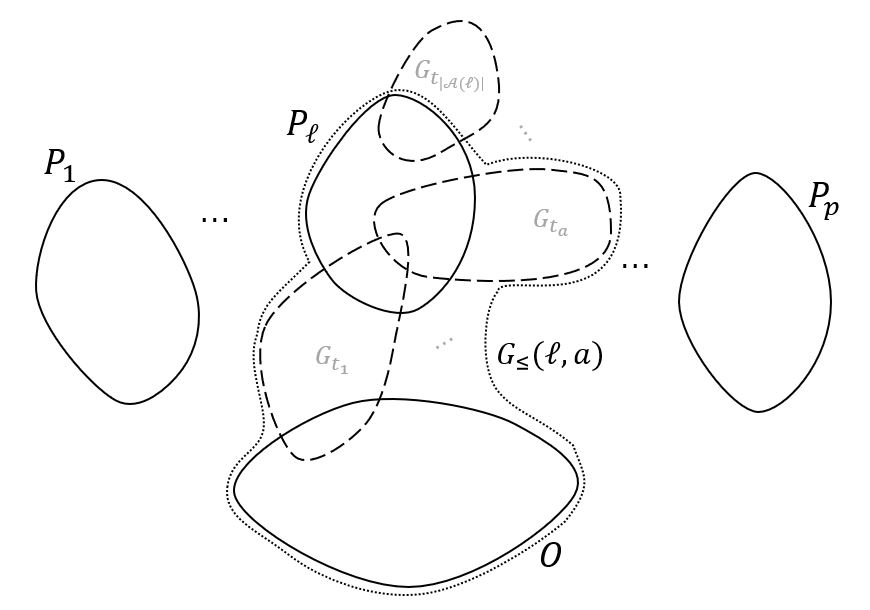}
    \caption{An example of $G_\leq(\ell,a)$. Here the regions enclosed by dashed lines represent $V(G_{t'})$ that intersects $P_\ell$, where $t'$ runs over $\mathcal{A}(\ell)$. The region enclosed by the dotted line is $G_\leq(\ell,a)$.}
    \label{fig:G_leq(l,a)}
\end{figure}

\begin{figure}
    \centering
    \includegraphics[width=0.6\textwidth]{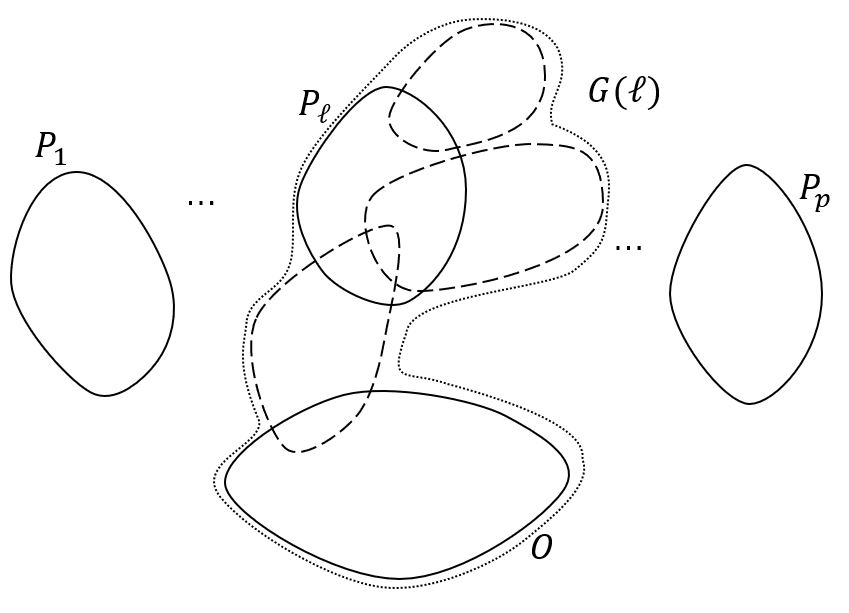}
    \caption{An example of $G(\ell)$. Here the regions enclosed by dashed lines represent $V(G_{t'})$ that intersects $P_\ell$, where $t'$ runs over children of $t$. The region enclosed by the dotted line is $G(\ell)$.}
    \label{fig:G(l)}
\end{figure}


These subgraphs are illustrated in Figures \ref{fig:G_leq(l,a)} and \ref{fig:G(l)}. In order to compute an \rc and \rs function $r^D$, we will employ new sub-problems that we define below. 
\begin{definition}
Let $g^{D,\mcp_{A_t}}:\{0,1,\ldots,p\}\times\{0,1,\ldots,2k^2\}\times\lrange^k\times[k]\to\{0,1\}$ be a Boolean function, where $D=(\mcp_{\ct}=(P_1,\ldots,P_p,O),\mcq_{\ct},O)\in\mathcal{D}$ and $\mcp_{A_t}=(\tilde{P}_1, \ldots, \tilde{P}_{k'})\in\mathcal{F}^{A_t}$.
\begin{enumerate}
    \item
    \textbf{(Correctness)}
    The function $g^{D,\mcp_{A_t}}$ is \emph{\gc} if we have $g^{D,\mcp_{A_t}}(\ell,d,\bar{y},q)=1$ for all $\ell\in\prange, d\in\krange, y \in\lrange^k$ and $q\in[k]$ for which there exists a $k$-subpartition $\mathcal{P}=(P'_1,\ldots,P'_k)$ of $V(G_\leq(\ell))$ satisfying the following conditions:
    \begin{enumerate}[(i)]
        \item $\mcp$ restricted to $O\cup P_{\le \ell}$ is a coarsening of $\mathcal{Q}_{\chi(t)}$ restricted to $O\cup P_{\le \ell}$,
        \item $|\delta_T(\mathcal{P})|\leq d$, 
        \item $|\delta_{G_\leq(\ell)}(P'_i)|= y_i$  for all $i\in [k]$,
        \item $O\subseteq P'_q$,
        \item if $A_t\subseteq O\cup P_{\leq \ell}$, then $P'_i\cap A_t=\Tilde{P}_i$, for all $i\in [k']$, and
        \item $\mcp$ is a restriction of $\Omega$ to $V(G_\leq(\ell))$.
    \end{enumerate}
    A $k$-subpartition of $V(G_\leq(\ell))$ satisfying the above six conditions is said to \emph{witness} \gcness of $g^{D,\mcp_{A_t}}(\ell,d,\bar{y},q)$.
    \item 
    \textbf{(Soundness)} The function $g^{D,\mcp_{A_t}}$ is \emph{\gs} if for all $\ell\in\prange, d\in\krange, y \in\lrange^k$ and $q\in[k]$, we have $g^{D,\mcp_{A_t}}(\ell,d,\bar{y},q)=1$ only if there exists a $k$-subpartition $\mathcal{P}=(P'_1,\ldots,P'_k)$ of $V(G_\leq(\ell))$ satisfying conditions (i), (ii), (iii), (iv) and (v) above. A $k$-subpartition of $V(G_\leq(\ell))$ satisfying (i), (ii), (iii), (iv) and (v) is said to \emph{witness} \gsness of $g^{D,\mcp_{A_t}}(\ell,d,\bar{y},q)$.
\end{enumerate}
\end{definition}

\begin{definition}
Let $h^{D,\mcp_{A_t}}:\{0,1,\ldots,p\}\times\{0,1,\ldots,2k^2\}\times\lrange^k\times[k]\to\{0,1\}$ be a Boolean function, where $D\in\mathcal{D}$ and $\mcp_{A_t}=(\tilde{P}_1, \ldots, \tilde{P}_{k'})\in\mathcal{F}^{A_t}$.
\begin{enumerate}
    \item
    \textbf{(Correctness)}
    The function $h$ is \emph{\hc} if we have $h^{D,\mcp_{A_t}}(\ell,d,\bar{y},q)=1$ for all $\ell\in\prange, d\in\krange, y \in\lrange^k$ and $q\in[k]$ for which there exists a $k$-subpartition $\mathcal{P}=(P'_1,\ldots,P'_k)$ of $V(G(\ell))$ satisfying the following conditions:
    \begin{enumerate}[(i)]
        \item $\mcp$ restricted to $O\cup P_{\ell}$ is a coarsening of 
13
 $\mathcal{Q}_{\chi(t)}$ restricted to $O\cup P_{\ell}$,
        \item $|\delta_T(\mathcal{P})|\leq d$, 
        \item $|\delta_{G(\ell)}(P'_i)|= y_i$  for all $i\in [k]$,
        \item $O\subseteq P'_q$,
        \item if $A_t\subseteq O\cup P_{\ell}$, then $P'_i\cap A_t=\Tilde{P}_i$, for all $i\in [k']$, and
        \item $\mcp$ is a restriction of $\Omega$ to $V(G(\ell))$.
    \end{enumerate}
    A $k$-subpartition of $V(G(\ell))$ satisfying the above six conditions is said to \emph{witness} \hcness of $h^{D,\mcp_{A_t}}(\ell,d,\bar{y},q)$.
    \item 
    \textbf{(Soundness)} The function $h^{D,\mcp_{A_t}}$ is \emph{\hs} if for all $\ell\in\prange, d\in\krange, y \in\lrange^k$ and $q\in[k]$, we have $h^{D,\mcp_{A_t}}(\ell,d,\bar{y},q)=1$ only if there exists a $k$-subpartition $\mathcal{P}=(P'_1,\ldots,P'_k)$ of $V(G(\ell))$ satisfying conditions (i), (ii), (iii), (iv) and (v) above. Such $k$-subpartition $\mcp$ is said to \emph{witness} \hsness of $h^{D,\mcp_{A_t}}(\ell,d,\bar{y},q)$.
\end{enumerate}
\end{definition}

The following proposition outlines how to compute a function $r^D$ that is both \rc and \rs using functions $g^{D,\mcp_{A_t}}$ for every $\mcp_{A_t}\in \mathcal{F}^{A_t}$.
\begin{proposition}\label{prop:g-to-rD}
Suppose that we have functions $g^{D,\mcp_{A_t}}:\{0,1,\ldots,p\}\times\{0,1,\ldots,2k^2\}\times\lrange^k\times[k]\to\{0,1\}$ for every $\mcp_{A_t}\in\mathcal{F}^{A_t}$ such that all of them are both \gc and \gs. Then, the function $r^D$ obtained by setting 
\[r^D(\mcp_{A_t},\Bar{x},d):=\max \set{g^{D,\mcp_{A_t}}(p,d,\Bar{x},q): q\in[k]}\]
for every $\mcp_{A_t}\in\mathcal{F}^{A_t}$, $\Bar{x}\in \lrange^k$ and $d\in\krange$ is both \rc and \rs.
\end{proposition}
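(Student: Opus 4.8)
The plan is to verify that the function $r^D$ defined by
\[
r^D(\mcp_{A_t},\Bar{x},d):=\max\set{g^{D,\mcp_{A_t}}(p,d,\Bar{x},q): q\in[k]}
\]
satisfies both \rcness and \rsness, using the assumed \gcness and \gsness of the functions $g^{D,\mcp_{A_t}}$. The key observation is that the graph $G_\leq(p)$ equals $G_t$ (since every child $t'$ of $t$ has its adhesion $A_{t'}\subseteq\ct$, and by property (iv) of a nice decomposition, $A_{t'}$ intersects exactly one part of $\mcp_{\ct}$ other than $O$, so the subgraphs $G(\ell)$ for $\ell\in\prange$ cover all of $V(G_t)$). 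Thus a $k$-subpartition of $V(G_\leq(p))$ is precisely a $k$-subpartition of $V(G_t)$, and the conditions defining \gcness/\gsness at $\ell=p$ are the conditions defining \rcness/\rsness, once we quantify out the index $q$. Concretely, for $\ell=p$: condition (i) of $g$ (restriction to $O\cup P_{\leq p}=\ct$ coarsens $\mcq_{\ct}$) matches condition (iv) of $r^D$; conditions (ii), (iii) of $g$ match (iii), (ii) of $r^D$ (note $\delta_{G_\leq(p)}=\delta_{G_t}$); condition (v) of $g$ matches (i) of $r^D$ (here $A_t\subseteq\ct=O\cup P_{\leq p}$ always holds, so the conditional is vacuously triggered); and condition (vi) of $g$ matches (v) of $r^D$. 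The extra condition (iv) of $g$, namely $O\subseteq P'_q$, is exactly what is being existentially quantified by the $\max$ over $q\in[k]$.

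For \rcness: suppose $\mcp_{A_t}\in\mcf^{A_t}$, $\Bar{x}\in\lrange^k$, $d\in\krange$ and there is a $k$-subpartition $\mcp=(P'_1,\ldots,P'_k)$ of $V(G_t)$ witnessing \rcness of $r^D(\mcp_{A_t},\Bar{x},d)$. Since $O\subseteq\ct=V(G_\leq(p))$ and the parts of $\mcp$ partition $V(G_\leq(p))$, there is a unique $q\in[k]$ with $O\subseteq P'_q$ (using that $O$ is a single part of $\mcp_{\ct}$ and $\mcp$ restricted to $\ct$ coarsens $\mcq_{\ct}$ which has $O$ as a part, so $O$ lies entirely within one part of $\mcp$). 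Then $\mcp$ witnesses \gcness of $g^{D,\mcp_{A_t}}(p,d,\Bar{x},q)$, so $g^{D,\mcp_{A_t}}(p,d,\Bar{x},q)=1$ by \gcness, hence $r^D(\mcp_{A_t},\Bar{x},d)=1$.

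For \rsness: suppose $r^D(\mcp_{A_t},\Bar{x},d)=1$. Then there is some $q\in[k]$ with $g^{D,\mcp_{A_t}}(p,d,\Bar{x},q)=1$, so by \gsness there is a $k$-subpartition $\mcp$ of $V(G_\leq(p))=V(G_t)$ witnessing \gsness, i.e., satisfying conditions (i)--(v) of the $g$-definition. Translating these conditions as above (dropping condition (iv), $O\subseteq P'_q$, which is not needed for \rsness), we see $\mcp$ satisfies conditions (i)--(iv) of the \rsness definition, hence witnesses \rsness of $r^D(\mcp_{A_t},\Bar{x},d)$.

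The only genuine subtlety — and thus the main thing to nail down carefully — is the identification $V(G_\leq(p))=V(G_t)$ together with the well-definedness of the index $q$ (that $O$ sits inside a single part of $\mcp$). Both follow from the properties of a nice decomposition: property (iv) ensures the subgraphs $G(\ell)$ collectively exhaust $V(G_t)$, and property (i) ensures $O$ is a part of $\mcq_{\ct}$, which is coarsened by $\mcp$ restricted to $\ct$. Everything else is a mechanical matching of the bulleted conditions in the two definitions, so this proof will be short.
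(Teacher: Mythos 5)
Your proof is correct and follows essentially the same approach as the paper's: instantiate $g^{D,\mcp_{A_t}}$ at $\ell=p$, note $V(G_{\le}(p))=V(G_t)$, and let the $\max$ over $q\in[k]$ absorb the extra condition $O\subseteq P'_q$. The additional remarks you make (why $G_{\le}(p)=G_t$ via property (iv) of nice decompositions, and why $O$ lies inside a single part because $O$ is a part of $\mcq_{\ct}$ and $\mcp$ restricted to $\ct$ coarsens $\mcq_{\ct}$) are just a more explicit spelling-out of facts the paper takes for granted.
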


\begin{proof}
We first show \rcness. For $\mcp_{A_t}\in\mathcal{F}^{A_t}$, $\bar{x}\in\lrange^k$ and $d\in\krange$, suppose that there exists a $k$-subpartition $\Pi=(\pi_1,\ldots,\pi_k)$ of $V(G_t)$ witnessing \rcness of $r^D(\mcp_{A_t},\bar{x},d)$. Then, there exists $q'\in [k]$ such that $O\subseteq \pi_{q'}$. It follows that $\Pi$ also witnesses \gcness of $g^{D,\mcp_{A_t}}(p,d,\bar{x},q')$, and hence $g^{D,\mcp_{A_t}}(p,d,\bar{x},q')=1$ since the function $g^{D,\mcp_{A_t}}$ is \gc. This implies that $r^D(\mcp_{A_t}, \Bar{x}, d)=1$. 

Next, we show \rsness. Suppose that $r^D(\mcp_{A_t},\Bar{x},d)=1$ for some $\mcp_{A_t}\in\mathcal{F}^{A_t}$, $\bar{x}\in\lrange^k$ and $d\in\krange$. Then, there exists $q'\in[k]$ such that $g^{D,\mcp_{A_t}}(p,d,\bar{x},q')=1$. 
By \gsness of the function $g^{D,\mcp_{A_t}}$, there exists a $k$-subpartition $\Pi'$ of $V(G_\leq(p))=V(G_t)$ that witnesses \gsness of $g^{D,\mcp_{A_t}}(p,d,\bar{x},q')$. It follows that $\Pi'$ also witnesses \rsness of $r^D(\mcp_{A_t},\bar{x},d)$.
\end{proof}

Our goal now is to compute a function $g^{D,\mcp_{A_t}}$ that is both \gc and \gs.

\begin{lemma}\label{lemma:compute g}
There exists an algorithm that takes as input 
$(\tau,\chi)$, 
a tree node $t\in V(\tau)$, 
a partition $\mathcal{P}_{A_t}\in\mathcal{F}^{A_t}$, 
a nice decomposition $(\mcp_{\chi(t)}=(P_1,\ldots,P_p,O),\mcq_{\chi(t)},O)\in \mcd$, 
together with Boolean functions $f_{t'}:\mathcal{F}^{A_{t'}}\times\lrange^k\times\krange\to\{0,1\}$ for every child $t'$ of $t$ that are \fc and \fs,
and runs in time $(\lambda k)^{O(k^2)}n^{O(1)}$
to return a function $g^{D,\mcp_{A_t}}:\{0,1,\ldots,p\}\times\{0,1,\ldots,2k^2\}\times\lrange^k\times[k]\to\{0,1\}$ that is \gc and \gs.
\end{lemma}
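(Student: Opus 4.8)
The plan is to compute $g^{D,\mcp_{A_t}}$ by an inner dynamic program that sweeps through the parts $P_1, \ldots, P_p$ of $\mcp_{\chi(t)}$ one at a time, and within the handling of each part $P_\ell$, incorporates the children subtrees in $\mathcal{A}(\ell)$ one at a time. Concretely, I would introduce two further auxiliary Boolean functions indexed by the intermediate graphs: one, say $g_{\le}^{D,\mcp_{A_t}}$, defined on $V(G_\le(\ell,a))$ that tracks the partition built so far after processing the first $a$ children of $P_\ell$, and one, say $h^{D,\mcp_{A_t}}$ (already defined in the excerpt), defined on $V(G(\ell))$ that tracks a complete partition of the $\ell$-th "slice" $O \cup P_\ell$ together with all its attached subtrees. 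The recursion then has three layers: (a) compute $h^{D,\mcp_{A_t}}(\ell,\cdot)$ from the auxiliary function on $G_\le(\ell, |\mathcal{A}(\ell)|)$; (b) compute $g^{D,\mcp_{A_t}}(\ell, \cdot)$ by "gluing" $g^{D,\mcp_{A_t}}(\ell-1,\cdot)$ with $h^{D,\mcp_{A_t}}(\ell,\cdot)$, using property (iii) of the nice decomposition (no edges between distinct parts of $\mcp_{\chi(t)}$ other than $O$) so that the cuts $|\delta_{G_\le(\ell)}(P'_i)|$ decompose cleanly into a contribution from $G_\le(\ell-1)$ and a contribution from $G(\ell)$, with $O$'s part identified via the index $q$; and (c) compute the auxiliary function on $G_\le(\ell,a)$ from that on $G_\le(\ell,a-1)$ by incorporating the child $t_a$ via its \fc and \fs function $f_{t_a}$, enumerating over the $T$-feasible partition $\mcp_{A_{t_a}}$ of the adhesion $A_{t_a}$ and over how the cut values split.

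**The base cases** are $g^{D,\mcp_{A_t}}(0,\cdot)$, which concerns $V(G_\le(0)) = V(G(0)) = O$ (or all of $\chi(t)$'s single part when $O = \emptyset$): here one directly checks whether $O$ by itself is consistent with $\mcq_{\chi(t)}$ restricted to $O$, whether it lies entirely in part $q$, the adhesion condition (v) when $A_t \subseteq O$, and records $|\delta_{G(0)}(P'_i)|$ and $|\delta_T|$ over the trivial partition; and similarly the innermost base case $G_\le(\ell,0) = G[O \cup P_\ell]$, where the partition of $O \cup P_\ell$ is enumerated over all coarsenings of $\mcq_{\chi(t)}$ restricted to $O \cup P_\ell$ — there are at most $k^{4k^2+1}$ of these by property (ii) of the nice decomposition, which is what keeps the enumeration within budget. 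In each gluing step I would carefully maintain the invariant that correctness is witnessed by the restriction of $\Omega$ (so that conditions like (vi) propagate), while soundness only requires the first five conditions; this is exactly why the excerpt splits the two notions, and the argument for the recursion must respect that asymmetry.

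**The main obstacle** I anticipate is the cut-counting bookkeeping in the gluing steps, in two places. First, in layer (b), when combining $G_\le(\ell-1)$ with $G(\ell)$: the two subgraphs share only the vertex set $O$ (they overlap exactly in $O$, since the adhesions routed into $P_\ell$ are disjoint from those into $P_{\le \ell - 1}$), and property (iii) guarantees the only edges between $P_{\le \ell-1}$-side vertices and $P_\ell$-side vertices go through $O$'s part — so the cut of part $P'_i$ for $i \ne q$ is simply additive, but the cut of $P'_q$ (the part containing $O$) requires care to avoid double-counting edges incident to $O$, and to correctly account for edges internal to $O$ that may have been "cut" in neither subproblem. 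Second, in layer (c), incorporating a child $t_a$: the adhesion $A_{t_a} \subseteq O \cup P_\ell$ is shared between $G_\le(\ell,a-1)$ and $G_{t_a}$, and I must enumerate the $T$-feasible partition $\mcp_{A_{t_a}} \in \mcf^{A_{t_a}}$ induced on that adhesion, match it against the partition already built on $O \cup P_\ell$, and split both the target cut vector $\bar y$ and the tree-cut budget $d$ into the part already accumulated and the part supplied by $f_{t_a}$. I would need Lemma \ref{lemma F family size} to bound $|\mcf^{A_{t_a}}| = (\lambda k)^{O(k^2)}$, the adhesion-size bound $|A_{t_a}| \le \lambda k$ from Lemma \ref{lemma tree decomposition}, and property (iv) of the nice decomposition to ensure each child's adhesion meets at most one non-$O$ part so that the assignment of $\mcp_{A_{t_a}}$ to the current slice's parts is well-defined. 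The running time is then $(\lambda k)^{O(k^2)} n^{O(1)}$ because at each of the $O(n)$ steps of the sweep we enumerate over domains of size $(\lambda k)^{O(k^2)}$ (the cut vectors $\lrange^k$, the coarsenings of $\mcq_{\chi(t)}$, and $\mcf^{A_{t'}}$) and split them in $(\lambda k)^{O(k^2)}$ ways.
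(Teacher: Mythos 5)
Your three-layer recursion matches the paper's proof exactly: layer (b) is the paper's Lemma~\ref{lemma:g update from h} gluing $g^{D,\mcp_{A_t}}(\ell-1,\cdot)$ with $h^{D,\mcp_{A_t}}(\ell,\cdot)$; layer (c) is Lemma~\ref{lemma:compute nu}, where your $g_\le^{D,\mcp_{A_t}}$ on $G_\le(\ell,a)$ plays the role of the paper's $\nu^{D,\mcp_{A_t},\ell,\mcr}$ (which carries the fixed restriction $\mcr$ to $O\cup P_\ell$ as a superscript rather than re-enumerating $\mcf^{A_{t_a}}$, so $\mcr_{A_{t_a}}$ is determined rather than guessed); and layer (a) is the $\nu\to\mu\to h$ pipeline via Observation~\ref{obs:nu to mu} and Proposition~\ref{prop:mu to h}. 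Your anticipated double-counting worries are the right ones to have: in the paper, gluing across a shared adhesion $A_{t_a}$ subtracts $|\delta_{G[A_{t_a}]}(\cdot)|$ and $|\delta_T(\mcr_{A_{t_a}})|$, while the worry about $O$ in layer (b) dissolves because condition (iv) forces $O$ into a single part on both sides so no internal-$O$ edge is ever cut.
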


Lemma \ref{lemma:rD update} follows from Lemma \ref{lemma:compute g} and Proposition \ref{prop:g-to-rD}. We will prove Lemma \ref{lemma:compute g} in the following subsections.

\subsubsection{Computing $g^{D,\mcp_{A_t}}$ assuming $h^{D,\mcp_{A_t}}$ is available}
In this section, for a given pair of $D=(\mcp_{\ct}=(P_1,\ldots,P_p,O),\mcq_{\ct},O)\in\mathcal{D}$ and $\mcp_{A_t}\in\mathcal{F}^{A_t}$, we will show how to construct a function $g^{D,\mcp_{A_t}}:\{0,1,\ldots,p\}\times\{0,1,\ldots,2k^2\}\times\lrange^k\times[k]\to\{0,1\}$ that is both \gc and \gs using a function $h^{D,\mcp_{A_t}}:\{0,1,\ldots,p\}\times\{0,1,\ldots,2k^2\}\times\lrange^k\times[k]\to\{0,1\}$ that is \hc and \hs.

\begin{lemma}\label{lemma:g update from h}
There exists an algorithm that takes as input 
a partition $\mcp_{A_t}\in\mathcal{F}^{A_t}$, 
a nice decomposition $D=(\mcp_{\chi(t)}=(P_1,\ldots,P_p,O),\mcq_{\chi(t)},O)\in\mathcal{D}$, 
together with a Boolean function $h^{D,\mcp_{A_t}}:\{0,1,\ldots,p\}\times\{0,1,\ldots,2k^2\}\times\lrange^k\times[k]\to\{0,1\}$ that is \hc and \hs, and 
runs in time $(\lambda k)^{O(k^2)}n$
to return a function $g^{D,\mcp_{A_t}}:\{0,1,\ldots,p\}\times\{0,1,\ldots,2k^2\}\times\lrange^k\times[k]\to\{0,1\}$ that is \gc and \gs.
\end{lemma}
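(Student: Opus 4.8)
The plan is to compute $g^{D,\mcp_{A_t}}$ by a sweep over $\ell$, obtaining the value at level $\ell$ from the value at level $\ell-1$ together with the ``slab'' function $h^{D,\mcp_{A_t}}$ at level $\ell$. The starting point is the structural fact that $V(G_\leq(\ell))=V(G_\leq(\ell-1))\cup V(G(\ell))$, that these two vertex sets overlap exactly in $O$, and that no edge of $G$ joins $V(G_\leq(\ell-1))\setminus O$ with $V(G(\ell))\setminus O$. Since every $k$-subpartition relevant to $g$ or $h$ puts all of $O$ into a single part indexed $q$, such a $k$-subpartition of $V(G_\leq(\ell))$ splits, preserving part indices, into a $k$-subpartition of $V(G_\leq(\ell-1))$ and one of $V(G(\ell))$ that agree on $O$, and conversely any such pair glues back by index-wise union. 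Accordingly I set $g^{D,\mcp_{A_t}}(0,d,\bar{y},q):=h^{D,\mcp_{A_t}}(0,d,\bar{y},q)$ --- here $G_\leq(0)=G(0)$ and $O\cup P_{\le 0}=O\cup P_0=O$, so the defining conditions of $g$ and $h$ at level $0$ coincide verbatim --- and for $\ell\ge 1$ I set
\[
g^{D,\mcp_{A_t}}(\ell,d,\bar{y},q):=\bigvee_{\substack{d_1+d_2\le d,\ \bar{y}^{(1)}+\bar{y}^{(2)}=\bar{y}\\ d_1,d_2\in\krange,\ \bar{y}^{(1)},\bar{y}^{(2)}\in\lrange^k}} g^{D,\mcp_{A_t}}(\ell-1,d_1,\bar{y}^{(1)},q)\ \wedge\ h^{D,\mcp_{A_t}}(\ell,d_2,\bar{y}^{(2)},q).
\]

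\textbf{The structural decomposition.} First I would establish, for every $\ell\ge 1$, that $V(G_\leq(\ell-1))\cap V(G(\ell))=O$ and that no edge of $G$ joins $V(G_\leq(\ell-1))\setminus O$ with $V(G(\ell))\setminus O$. This uses nice-decomposition property (iii) (no edges between distinct parts of $\mcp_{\chi(t)}$ other than $O$), property (iv) (the adhesion of each child of $t$, and $A_t$ itself, meets at most one part of $\mcp_{\chi(t)}$ besides $O$), the fact that $A_{t'}\subseteq\chi(t)$ for a child $t'$ of $t$, and the tree-decomposition facts that $V(G_{t'})\cap\chi(t)\subseteq A_{t'}$ and that two subtrees hanging below distinct children of $t$ share only vertices of $\chi(t)$. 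Two consequences then follow by routine counting: (a) since $O\subseteq P'_q$, every edge that crosses $P'_i$ (or crosses $\mcp$ among the tree edges) lies in exactly one of $E(G_\leq(\ell-1))$, $E(G(\ell))$ --- edges inside $O$ lie in both but cross neither --- so $|\delta_{G_\leq(\ell)}(P'_i)|$ and $|\delta_T(\mcp)|$ both split additively over the two pieces; and (b) because $\mcq_{\chi(t)}$ refines $\mcp_{\chi(t)}$, a partition of $O\cup P_{\le\ell}$ coarsens $\mcq_{\chi(t)}$ there if and only if its restrictions to $O\cup P_{\le\ell-1}$ and to $O\cup P_\ell$ do so on those sets (the only new vertex pairs have one vertex in some $P_i$ with $i\le\ell-1$ and the other in $P_\ell$, hence already lie in different $\mcq_{\chi(t)}$-parts, so the implication is vacuous for them).

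\textbf{Correctness and soundness.} I would then prove by induction on $\ell$ that the computed function is \gc and \gs. For \gcness at level $\ell\ge 1$: given a $k$-subpartition $\Pi$ witnessing \gcness of $g^{D,\mcp_{A_t}}(\ell,d,\bar{y},q)$, restrict $\Pi$ --- keeping part indices --- to $V(G_\leq(\ell-1))$ and to $V(G(\ell))$; by (a) and (b) these restrictions witness \gcness of $g^{D,\mcp_{A_t}}(\ell-1,d_1,\bar{y}^{(1)},q)$ and \hcness of $h^{D,\mcp_{A_t}}(\ell,d_2,\bar{y}^{(2)},q)$ for the restrictions' tree-cut counts $d_1,d_2$ and cut-vectors $\bar{y}^{(1)},\bar{y}^{(2)}$, which by (a) satisfy $d_1+d_2\le d$ and $\bar{y}^{(1)}+\bar{y}^{(2)}=\bar{y}$ (condition (vi), that $\Pi$ is a restriction of $\Omega$, is inherited by further restriction); the induction hypothesis at level $\ell-1$ and \hcness of $h$ then force the recursion to output $1$. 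For \gsness: if the recursion outputs $1$ through some split, the induction hypothesis at level $\ell-1$ and \hsness of $h$ furnish $k$-subpartitions $\mcp^{(1)}$ of $V(G_\leq(\ell-1))$ and $\mcp^{(2)}$ of $V(G(\ell))$ satisfying conditions (i)--(v), both placing $O$ in part $q$; gluing them index-wise yields, by the structural decomposition, a genuine $k$-subpartition of $V(G_\leq(\ell))$ whose two index-wise restrictions are exactly $\mcp^{(1)}$ and $\mcp^{(2)}$, and then conditions (i)--(iv) are immediate from (a), (b), with (iii) using $y_i=\bar{y}^{(1)}_i+\bar{y}^{(2)}_i$. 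The one delicate point is condition (v): by nice-decomposition property (iv) applied to $t'=t$ we have $A_t\subseteq O\cup P_j$ for a single $j$ (or $A_t\subseteq O$), and I would case on $j<\ell$, $j=\ell$, $j>\ell$, and $A_t\subseteq O$. When $j>\ell$ condition (v) for $g(\ell)$ is vacuous. In the other cases exactly one of the two pieces has its condition (v) active, and --- again via the subtree/adhesion structure --- the portion of $A_t$ inside the other piece lies entirely in $O$, hence in that piece's part $q$, which by condition (iv) of that piece equals $\tilde{P}_q$; this makes the glued partition restricted to $A_t$ equal to $\mcp_{A_t}$.

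\textbf{Running time and the main obstacle.} There are $p+1=O(n)$ levels, $O(k^3)(\lambda+1)^k$ triples $(d,\bar{y},q)$, and $O(k^4)(\lambda+1)^k$ splits to consider per triple, for a total of $(\lambda k)^{O(k)}n\le(\lambda k)^{O(k^2)}n$ time, which is within the claimed bound. I expect the main obstacle to be the structural decomposition step: establishing $V(G_\leq(\ell-1))\cap V(G(\ell))=O$ and the absence of crossing edges precisely from the nice-decomposition properties, together with the case analysis showing that the adhesion-consistency condition (v) survives the gluing; once those are in place the induction is bookkeeping.
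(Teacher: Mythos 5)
Your proof is correct and follows the same high-level structure as the paper's (a sweep over $\ell$, combining $g$ at level $\ell-1$ with $h$ at level $\ell$ by exploiting that $V(G_\leq(\ell-1))$ and $V(G(\ell))$ overlap exactly in $O$ and share no crossing edges), but it makes one genuine simplification: the paper's recursion enumerates over pairs of permutations $(\sigma_1,\sigma_2)$ of $[k]$ subject to constraints (G1)--(G3) and writes the update as $g(\ell,d_3,\sigma_1(\bar y)+\sigma_2(\bar z),\sigma_1(q_1))=1$, whereas you restrict to the identity permutation and force the two pieces to share the same index $q$ for the part containing $O$. This simplification is sound: for $g$-correctness the paper's own argument already only invokes the identity permutation (since the restrictions of a single witness $\Pi$ to the two subgraphs automatically place $O$ in the same part index), and for $g$-soundness the index-wise glue is well-defined exactly because both pieces put all of $O$ into part $q$ and the two domains are otherwise disjoint and edge-disconnected, so no relabeling is ever required. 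Your case analysis for condition (v), using that $A_t$ lies in $O\cup P_j$ for a single $j$ by nice-decomposition property (iv), and that the portion of $A_t$ in the ``inactive'' piece lies in $O\subseteq$ part $q$, matches what the paper needs. You even get a slightly sharper running time ($(\lambda k)^{O(k)}n$ since you avoid the $(k!)^2$ permutation enumeration), which is well within the stated bound. The permutations in the paper's version of this lemma are harmless but not necessary here; they are genuinely needed later, in the analogous recursion for $\nu^{D,\mcp_{A_t},\ell,\mcr}$, where the child-function $f_{t'}$ carries no distinguished index pinning down a shared part, but in the $g$-from-$h$ step the argument $q$ already plays that role, as you correctly exploit.
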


\begin{proof} 
Let the inputs be fixed as in the lemma. We will iteratively assign values to $g^{D,\mcp_{A_t}}(\ell,d,\Bar{w}, q)$ for $\ell=0,1,\ldots, p$. 

For $\ell=0$, we set $g^{D,\mcp_{A_t}}(0,d,\Bar{w}, q):=h^{D,\mcp_{A_t}}(0,d,\Bar{w},q)$ for every $d\in\krange$, $\bar{w}\in\lrange^k$ and $q\in[k]$. We observe that if there exists a $k$-subpartition of $V(G_{\le}(0))$ that witnesses \gcness of $g^{D,\mcp_{A_t}}(0,d,\Bar{w}, q)$, then it also witnesses \hcness of $h^{D,\mcp_{A_t}}(0,d,\Bar{w}, q)$ and by \hcness of the function $h^{D, \mcp_{A_t}}$, we should have that $h^{D,\mcp_{A_t}}(0,d,\Bar{w}, q)=1$, which in turn implies that we have indeed set $g^{D,\mcp_{A_t}}(0,d,\Bar{w}, q)=1$. 
Moreover, if $g^{D,\mcp_{A_t}}(0,d,\Bar{w}, q)=1$, then $h^{D,\mcp_{A_t}}(0,d,\Bar{w}, q)=1$ and by \hsness of the function $h^{D, \mcp_{A_t}}$, there exists a $k$-subpartition of $V(G_{\le}(0))$ that witnesses \hsness of $h^{D,\mcp_{A_t}}(0,d,\Bar{w}, q)$ which also witnesses \gsness of $g^{D,\mcp_{A_t}}(0,d,\Bar{w}, q)=1$

Now, we will assume that for some $\ell\in [p]$, we have assigned values to $g^{D,\mcp_{A_t}}(\ell-1,d,\Bar{w}, q)$ for every $d\in\krange$, $\bar{w}\in\lrange^k$ and $q\in[k]$ and describe an algorithm to assign values to $g^{D,\mcp_{A_t}}(\ell,d,\Bar{w}, q)$ for every $d\in\krange$, $\bar{w}\in\lrange^k$ and $q\in[k]$. 


Let 
\begin{align*}
    \mathcal{Y}^\ell:=\big\{&((\ell-1,d_1,\Bar{y},q_1),(\ell,d_2,\Bar{z},q_2)):
    \\&\qquad d_1,d_2\in\krange,\bar{y},\bar{z}\in\lrange^k,q_1,q_2\in[k],
    \\&\qquad g^{D,\mcp_{A_t}}(\ell-1,d_1,\Bar{y},q_1)=1,h^{D,\mcp_{A_t}}(\ell,d_2,\Bar{z},q_2)=1\big\}.
\end{align*}
Here, $\mathcal{Y}^\ell$ is the collection of pairs of inputs to $g^{D,\mcp_{A_t}}$ and $h^{D,\mcp_{A_t}}$ such that $g^{D,\mcp_{A_t}}$ evaluates to $1$ on the first input and $h^{D,\mcp_{A_t}}$ evaluates to $1$ on the second input. 
For each pair $((\ell-1,d_1,\Bar{y},q_1),(\ell,d_2,\Bar{z},q_2))\in\mathcal{Y}^\ell$ and each pair of permutations $\sigma_1, \sigma_2:[k]\rightarrow [k]$ of permutations (i.e., bijections) satisfying the following conditions, 
\begin{enumerate}
    \item [(G1)] $\sigma_1(q_1)=\sigma_2(q_2)$, 
    \item [(G2)] If $A_t\subseteq O\cup P_\ell$, then $\sigma_2(i)=i$ for all $i\in[k']$.
    \item [(G3)] If $A_t\subseteq O\cup P_{\leq \ell-1}$, then $\sigma_1(i)=i$ for all $i\in[k']$.
\end{enumerate}
our algorithm will set 
$g^{D,\mcp_{A_t}}(\ell,d_3,\sigma_1(\Bar{y})+\sigma_2(\Bar{z}),\sigma_1(q_1))=1$ for all $d_3\geq d_1+d_2$. Here the notation $\sigma_1(\bar{y})$ refers to the $k$-dimensional vector whose $i$th entry is $y_{\sigma_1^{-1}(i)}$. 
Finally, we set $g^{D,\mcp_{A_t}}(\ell, d, \Bar{w}, q)=0$ for all $d\in \set{0,1,\ldots, 2k^2}$, $\bar{w}\in\lrange^k$ and $q\in[k]$ for which the algorithm has not set the $g^{D,\mcp_{A_t}}$ value so far. 
This algorithm is described in Algorithm \ref{algo updating g values}. 

\begin{algorithm}[ht]
\begin{algorithmic}
\STATE Set $g^{D,\mcp_{A_t}}(0,d,\bar{w},q)=h^{D,\mcp_{A_t}}(0,d,\bar{w},q)$ for every $d\in\krange$, $\bar{w}\in\lrange^k$ and $q\in[k]$.
\FORALL{$\ell=1,\ldots,p$}
    \FORALL{pair $((\ell-1,d_1,\Bar{y},q_1),(\ell,d_2,\Bar{z},q_2))\in\mathcal{Y}^\ell$}
        \FORALL{permutation pairs $(\sigma_1,\sigma_2)$ satisfying (G1), (G2) and (G3)}
            \STATE Set $g^{D,\mcp_{A_t}}(\ell,d_3,\sigma_1(\Bar{y})+\sigma_2(\Bar{z}),\sigma_1(q_1))=1$ for all $d_3\geq d_1+d_2$.
        \ENDFOR
    \ENDFOR
    \STATE For $d_3\in\krange$, $\bar{w}\in\lrange^k$ and $q\in[k]$ such that $g^{D,\mcp_{A_t}}(\ell,d_3,\bar{w},q)$ is not yet set to 1, set $g^{D,\mcp_{A_t}}(\ell,d_3,\bar{w},q)=0$.
\ENDFOR
\end{algorithmic}
\caption{Computing function $g^{D,\mcp_{A_t}}$}\label{algo updating g values}
\end{algorithm}

We first bound the run-time of the algorithm. The size of $\mathcal{Y}^{\ell}$ is $O(k^6)\lambda^{2k}$ for every $\ell\in [p]$. We recall that $p\le n$. Thus, the run-time of the algorithm is 
\[
p\cdot O(k^6)\lambda^{2k}\cdot(k!)^2 O(k^2) = (\lambda k)^{O(k^2)}n. 
\]
We now prove the correctness of the algorithm by induction on $\ell$. We recall that we have already proved the base case. We now prove the induction step. 

By induction hypothesis, if there exists a $k$-subpartition of $V(G_{\le}(\ell-1))$ witnessing \gcness of $g^{D, \mcp_{A_t}}(\ell-1, d, \Bar{w}, q)$ for some $d\in\krange$, $\bar{w}\in\lrange^k$ and $q\in[k]$, then $g^{D, \mcp_{A_t}}(\ell-1, d, \Bar{w}, q)=1$. Furthermore, if $g^{D, \mcp_{A_t}}(\ell-1, d, \Bar{w}, q)=1$ for some $d\in\krange$, $\bar{w}\in\lrange^k$ and $q\in[k]$, then there exists a $k$-subpartition of $V(G_{\le}(\ell-1))$  witnessing \gsness of $g^{D, \mcp_{A_t}}(\ell-1, d, \Bar{w}, q)$. 

Suppose that there exists a $k$-subpartition $\mcq^0=(Q^0_1,\ldots,Q^0_k)$ of $V(G_\leq(\ell))$ witnessing \gcness of $g^{D,\mcp_{A_t}}(\ell, d, \Bar{w}, q)$ for some $d\in\krange$, $\bar{w}\in\lrange^k$ and $q\in[k]$. We now show that the algorithm will correctly set $g^{D,\mcp_{A_t}}(\ell, d, \Bar{w}, q)=1$.

Let $\mcq^1:=(Q^1_1,\ldots,Q^1_k)$ and $\mcq^2:=(Q^2_1,\ldots,Q^2_k)$ be the restriction of $\mcq^0$ to the vertices of $G_\leq(\ell-1)$ and $G(\ell)$, respectively, given by $Q^1_i:=Q^0_i\cap V(G_\leq(\ell-1))$ and $Q^2_i:=Q^0_i\cap V(G(\ell))$ for every $i\in[k]$.
It follows that $\mcq^1$ and $\mcq^2$ are both restrictions of $\Omega$.
We note that $\mcq^1$ witnesses \gcness of $g^{D,\mcp_{A_t}}(\ell-1,d_1,\bar{y},q)$, where $d_1:=|\delta_{T}(\mcq^1)|$ and $y_i:=|\delta_{G_\leq(\ell-1)}(Q^1_i)|$ for all $i\in[k]$.
Furthermore, $\mcq^2$ witnesses \hcness of $h^{D,\mcp_{A_t}}(\ell,d_2,\bar{z},q)$ where $d_2:=|\delta_{T}(\mcq^2)|$ and $z_i:=|\delta_{G(\ell)}(Q^1_i)|$ for all $i\in[k]$. We note that $d_1, d_2\in \krange$ and $\Bar{y}, \Bar{z}\in \lrange^k$ since the number of crossing edges in the subgraph is at most the number of crossing edges in the graph $G_{\le}(\ell)$.

Consequently, the pair $((\ell-1,d_1,\bar{y},q),(\ell,d_2,\bar{z},q))$ is present in $\mathcal{Y}^\ell$. We now consider the case when $\sigma_1$ and $\sigma_2$ are both the identity permutation on $[k]$, i.e., $\sigma_1(i)=i=\sigma_2(i)$ for every $i\in [k]$. These two permutations satisfy the conditions (G1), (G2), and (G3). Moreover, since there are no edges between any two distinct parts among $P_1, \ldots, P_{\ell}$ due to the nice decomposition property, we know that $d_1+d_2\le d$ and $y_i+z_i=w_i$ for each $i\in [k]$. This also implies that $\sigma_1(\Bar{y})+\sigma_2(\Bar{z})=\Bar{w}$. We also have that $\sigma_1(q)=q$. Consequently, the algorithm will set $g^{D,\mcp_{A_t}}(\ell, d, \Bar{w}, q)$ to be $1$. 

Next suppose that $((\ell-1,d_1,\Bar{y},q_1),(\ell,d_2,\Bar{z},q_2))\in\mathcal{Y}^\ell$ and let $\sigma_1, \sigma_2:[k]\rightarrow [k]$ be permutations satisfying conditions (G1), (G2), and (G3). Then, we will exhibit a $k$-subpartition $\mcq^0$ of the vertices of $G_{\le}(\ell)$ that witnesses \gsness of $g^{D,\mcp_{A_t}}(\ell,d_3,\sigma_1(\Bar{y})+\sigma_2(\Bar{z}),\sigma_1(q_1))$ for every $d_3\ge d_1 + d_2$.
Let $\mcq^1:=(Q^1_1, \ldots, Q^1_k)$ and $\mcq^2:=(Q^2_1, \ldots, Q^2_k)$ be a $k$-subpartition of the vertices of $G_{\le}(\ell-1)$ and $G(\ell)$ respectively, that witnesses \gsness of $g^{D,\mcp_{A_t}}(\ell-1,d_1,\Bar{y},q_1)$ and \hsness of $h^{D,\mcp_{A_t}}(\ell,d_1,\Bar{z},q_2)$, respectively. 

Consider the $k$-subpartition $\mcq^0:=(Q^0_1, \ldots, Q^0_k)$ of the vertices of $G_\leq(\ell)$ obtained by setting $Q^0_i:=Q^1_{\sigma_1^{-1}(i)}\cup Q^2_{\sigma_2^{-1}(i)}$. Let $d_3\ge d_1+d_2$.
We will show that $\mcq^0$ witnesses \gsness of $g^{D,\mcp_{A_t}}(\ell,d_3,\sigma_1(\Bar{y})+\sigma_2(\Bar{z}),\sigma_1(q_1))$.

Since $\sigma_1(q_1)=\sigma_2(q_2)$, we know that the parts containing $O$ in $\mcq^1$ and $\mcq^2$, i.e. $Q^1_{q_1}$ and $Q^2_{q_2}$, are both in $Q^0_{\sigma_1(q_1)}$, thus proving condition (iv) needed to witness \gsness of $g^{D,\mcp_{A_t}}(\ell,d_3,\sigma_1(\Bar{y})+\sigma_2(\Bar{z}),\sigma_1(q_1))$.
As a consequence of the fact that $O\subseteq Q^1_{q_1}$ and $O\subseteq Q^2_{q_2}$, we obtain that $\mcq^0$ is indeed a $k$-subpartition of the vertices of $G_{\le} (\ell)$. Furthermore, the $k$-subpartition $\mcq^0$ restricted to $O\cup P_{\le \ell}$ is a coarsening of $\mcq_{\chi(t)}$ restricted to $O\cup P_{\le \ell}$, thus proving condition (i) needed to witness \gsness of $g^{D,\mcp_{A_t}}(\ell,d_3,\sigma_1(\Bar{y})+\sigma_2(\Bar{z}),\sigma_1(q_1))$. By the nice decomposition property, there are no edges between two distinct parts among $P_1,\ldots,P_\ell$. Hence, 
\begin{align*}
    |\delta_T(\mcq^0)|&=|\delta_T(\mcq^1)|+|\delta_T(\mcq^2)|\leq d_1+d_2\text{, and}
    \\|\delta_{G_\leq(\ell)}(Q^0_i)|&=|\delta_{G_\leq(\ell-1)}(Q^1_{\sigma_1^{-1}(i)})|+|\delta_{G(\ell)}(Q^2_{\sigma_2^{-1}(i)})|
    =y_{\sigma_1^{-1}(i)}+z_{\sigma_2^{-1}(i)}\text{ for all }i\in[k],
\end{align*}
thus proving conditions (ii) and (iii) needed to witness \gsness of $g^{D,\mcp_{A_t}}(\ell,d_3,\sigma_1(\Bar{y})+\sigma_2(\Bar{z}),\sigma_1(q_1))$.

Suppose that $A_t\subseteq O\cup P_\ell$. Then, we know that $\sigma_2(i)=i$ for all $i\in[k']$. We also know that $Q^2_i\cap A_t=\Tilde{P}_i$ for all $i\in[k']$. Therefore, $\Tilde{P}_i\subseteq Q^2_i\subseteq Q^0_{\sigma_2(i)}=Q^0_i$ for all $i\in[k']$. Next, suppose that $A_t\subseteq O\cup P_{\leq\ell}$. Then, we know that $\sigma_1(i)=i$ and $Q^1_i\cap A_t=\Tilde{P}_i$ for all $i\in[k']$. Therefore, $\Tilde{P}_i\subseteq Q^1_i\subseteq Q^0_{\sigma_1(i)}=Q^0_i$ for all $i\in[k']$. Hence, condition (v) needed to witness \gsness of $g^{D,\mcp_{A_t}}(\ell,d_3,\sigma_1(\Bar{y})+\sigma_2(\Bar{z}),\sigma_1(q_1))$ also holds. This shows that $\mcq^0$ witnesses \gsness of $g^{D,\mcp_{A_t}}(\ell,d_3,\sigma_1(\bar{y})+\sigma_2(\bar{z}),\sigma_1(q_1))$ for all $d_3\geq d_1+d_2$.


\end{proof}


\subsubsection{Computing $h^{D,\mcp_{A_t}}$ in leaf nodes of the tree decomposition}
In this section we will describe an algorithm to compute an \hc and \hs function $h^{D,\mcp_{A_t}}:\prange\times\krange\times\lrange^k\times[k]\to\{0,1\}$ when $t\in V(\tau)$ is a leaf node of $\tau$. This corresponds to the base case of our dynamic program.

\begin{lemma}\label{lemma:h in leaf nodes}
If $t\in V(\tau)$ is a leaf node of $\tau$, then there exists an algorithm that takes as input 
a partition $\mcp_{A_t}\in\mathcal{F}^{A_t}$, 
a nice decomposition  $D=(\mcp_{\chi(t)}=(P_1,\ldots,P_p,O),\mcq_{\chi(t)},O)\in\mathcal{D}$, 
and runs in time $(\lambda k)^{O(k^2)}n^{O(1)}$
to return a function $h^{D,\mcp_{A_t}}:\prange\times\krange\times\lrange^k\times[k]\to\{0,1\}$ that is \hc and \hs.
\end{lemma}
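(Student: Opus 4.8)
Since $t$ is a leaf node of $\tau$, the graph $G(\ell)$ for each $\ell \in \{0,1,\ldots,p\}$ is induced entirely on vertices of the bag $\chi(t)$—specifically, $G(\ell) = G[O \cup P_\ell]$ because a leaf node has no children, so $\mathcal{A}(\ell) = \emptyset$ and each $V(G_{t'})$ term vanishes. Thus the subproblem reduces to reasoning about $k$-subpartitions of the subset $O \cup P_\ell$ of $\chi(t)$. The plan is: for each $\ell \in \prange$, enumerate all candidate $k$-subpartitions $\mathcal{P}$ of $O \cup P_\ell$ that (a) restrict to a coarsening of $\mathcal{Q}_{\chi(t)}$ on $O \cup P_\ell$, and (b) place $O$ entirely inside one part. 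By nice-decomposition property (ii), $P_\ell$ contains at most $4k^2+1$ parts of $\mathcal{Q}_{\chi(t)}$, and $O$ is itself one part of $\mathcal{Q}_{\chi(t)}$ when $O \ne \emptyset$; so the number of $\mathcal{Q}_{\chi(t)}$-parts inside $O \cup P_\ell$ is $O(k^2)$, and the number of coarsenings assigning these blocks to $\le k$ parts is $k^{O(k^2)}$. For each such coarsening $\mathcal{P} = (P'_1,\ldots,P'_k)$, we directly compute $d := |\delta_T(\mathcal{P})|$ (edges of $T$ crossing $\mathcal{P}$), the tuple $\bar{y}$ with $y_i := |\delta_{G(\ell)}(P'_i)|$, and the index $q$ with $O \subseteq P'_q$; if $A_t \subseteq O \cup P_\ell$ we additionally check $P'_i \cap A_t = \tilde{P}_i$ for all $i \in [k']$, discarding $\mathcal{P}$ if this fails. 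We then set $h^{D,\mcp_{A_t}}(\ell, d', \bar{y}, q) := 1$ for all $d' \ge d$ with $d' \in \krange$, and $y_i$ within $\lrange$; all remaining entries are set to $0$.

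For \textbf{\hsness}: any entry set to $1$ by the algorithm was set because an explicitly constructed $k$-subpartition $\mathcal{P}$ of $O \cup P_\ell = V(G(\ell))$ satisfies conditions (i)–(v) by construction (monotonicity in $d'$ is fine since condition (ii) is $|\delta_T(\mathcal{P})| \le d'$). Thus $\mathcal{P}$ witnesses \hsness. For \textbf{\hcness}: suppose some $k$-subpartition $\mathcal{P}^\star$ of $V(G(\ell)) = O \cup P_\ell$ witnesses \hcness of $h^{D,\mcp_{A_t}}(\ell,d,\bar{y},q)$. Then $\mathcal{P}^\star$ restricted to $O \cup P_\ell$ is a coarsening of $\mathcal{Q}_{\chi(t)}$ restricted to $O \cup P_\ell$ by condition (i); and since $\mathcal{P}^\star$ is a $k$-subpartition of exactly $O \cup P_\ell$, it \emph{equals} its own restriction to this set. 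Hence $\mathcal{P}^\star$ is precisely one of the coarsenings enumerated by the algorithm (its extra condition-(v) constraint only restricts the set further, and the algorithm does not enforce (vi)), and it has crossing values exactly $(d^\star, \bar{y}, q)$ with $d^\star = |\delta_T(\mathcal{P}^\star)| \le d$ and matching $\bar y, q$; moreover if $A_t \subseteq O \cup P_\ell$ then $\mathcal P^\star$ passes the adhesion check by condition (v). Therefore the algorithm sets $h^{D,\mcp_{A_t}}(\ell, d, \bar{y}, q) = 1$ (using $d \ge d^\star$), as required.

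For the \textbf{running time}: there are $p+1 = O(n)$ values of $\ell$; for each, the number of candidate coarsenings is $k^{O(k^2)}$; for each candidate, computing $\delta_T$, the cut values $\delta_{G(\ell)}(P'_i)$, and the adhesion check takes $n^{O(1)}$ time; and spreading the $1$-value over all $d' \ge d$ costs an extra $O(k^2) \cdot \lambda^{O(k)}$ factor for the remaining coordinates. Overall this is $(\lambda k)^{O(k^2)} n^{O(1)}$ as claimed.

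The main subtlety—not really an obstacle, but the point that needs care—is reconciling ``$\mathcal{P}$ restricted to $O \cup P_\ell$ coarsens $\mathcal{Q}_{\chi(t)}$ restricted to $O \cup P_\ell$'' with the fact that in the leaf case $\mathcal{P}$ is \emph{already} a partition of $O \cup P_\ell$, so ``restriction'' is trivial; one must verify that the enumeration over $\mathcal{Q}_{\chi(t)}$-coarsenings is genuinely exhaustive, i.e., that property (ii) of the nice decomposition bounds the number of $\mathcal{Q}_{\chi(t)}$-blocks inside $O \cup P_\ell$ by $O(k^2)$ and hence the number of coarsenings into $\le k$ labeled parts by $k^{O(k^2)}$. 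A secondary point is that the algorithm deliberately does \emph{not} check condition (vi) (being a restriction of $\Omega$)—this is why correctness and soundness have asymmetric witness requirements, and it is consistent with the definitions: \hcness only demands that the $1$-entry be set whenever a $\mathcal{P}^\star$ satisfying all six conditions exists, and any such $\mathcal{P}^\star$ is in particular found by the weaker five-condition enumeration.
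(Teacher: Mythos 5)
Your proposal is correct and follows essentially the same approach as the paper's proof: for each $\ell$, enumerate the $k^{O(k^2)}$ coarsenings of $\mcq_{\chi(t)}$ restricted to $O\cup P_\ell$ (using nice-decomposition property (ii) for the bound), check conditions (i)--(v) directly, and fill in the $h$-table accordingly. Your write-up is slightly more explicit than the paper's (e.g., noting $G(\ell)=G[O\cup P_\ell]$ in the leaf case because $\mathcal{A}(\ell)=\emptyset$, and organizing the filling by computing each candidate's signature $(d^\star,\bar y,q)$ and spreading over $d'\ge d^\star$, rather than verifying each target tuple), but these are presentational differences, not a different argument.
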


\begin{proof}
Let the input be fixed as in the lemma. We will iteratively assign values to $h^{D, \mcp_{A_t}}(\ell, d, \Bar{w}, q)$ for $\ell=0,1,\ldots, p$. 

Let $\ell \in \set{0, 1, \ldots, p}$. 
By the definition of nice decomposition, we know that the part $P_\ell$ contains $O(k^2)$ parts of $\mcq_{\chi(t)}$. Hence, $O\cup P_\ell$ contains $O(k^2)$ parts from $\mcq_{\chi(t)}$. Hence, we can enumerate all $k^{O(k^2)}$ $k$-subpartitions of $O\cup P_\ell$ that coarsen $\mcq_{\chi(t)}$ and explicitly verify if one of them satisfies the required conditions to witness \hsness of $h^{D,\mcp_{A_t}}(\ell,d,\Bar{y}, q)$. If so, then we set the corresponding $h^{D,\mcp_{A_t}}(l,d,\Bar{y}, q)=1$ and otherwise set $h^{D,\mcp_{A_t}}(l,d,\Bar{y},q)=0$.
Thus, the time to compute $h^{D,\mcp_{A_t}}(\ell,d,\Bar{y}, q)$ for all $d\in\krange$, $\bar{y}\in\lrange^k$ and $q\in[k]$ is 
\[
(2k^2+1)\cdot\lambda^{k}\cdot k\cdot k^{O(k^2)}n^{O(1)} = (\lambda k)^{O(k^2)}n^{O(1)}.
\]

In order to compute $h^{D,\mcp_{A_t}}(\ell,d,\Bar{y},q)$ for all $\ell\in \set{0,1,\ldots, p}$, $d\in\krange$, $\bar{y}\in\lrange^k$ and $q\in[k]$, the total time required is $p(\lambda k)^{O(k^2)}n^{O(1)}=(\lambda k)^{O(k^2)}n^{O(1)}$, since $p\leq n$.
The resulting function $h^{D,\mcp_{A_t}}$ is \hs as well as \hc by construction. 
\end{proof}

\subsubsection{Computing $h^{D,\mcp_{A_t}}$ in non-leaf nodes of the tree decomposition}
In this section, we will describe an algorithm to compute a function $h^{D,\mcp_{A_t}}:\prange\times\krange\times\lrange^k\times[k]\to\{0,1\}$ that is \hc and \hs when $t$ is a non-leaf node of the tree decomposition $\tau$.

\begin{lemma}\label{lemma:h in non-leaf node}
There exists an algorithm that takes as input 
$(\tau, \chi)$, 
a non-leaf tree node $t\in V(\tau)$, 
a partition $\mcp_{A_t}\in\mathcal{F}^{A_t}$, 
a nice decomposition $D=(\mcp_{\chi(t)}=(P_1,\ldots,P_p,O),\mcq_{\chi(t)},O)\in\mathcal{D}$, 
together with Boolean functions $f_{t'}:\mathcal{F}^{A_{t'}}\times\lrange^k\times\krange\to\{0,1\}$ for each child $t'$ of $t$ that are \fc and \fs, 
and runs in time $(\lambda k)^{O(k^2)}n^{O(1)}$
to return a function $h^{D,\mcp_{A_t}}:\prange\times\krange\times\lrange^k\times[k]\to\{0,1\}$ that is \hc and \hs.
\end{lemma}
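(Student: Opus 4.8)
The plan is to process the part $P_\ell$ together with all children of $t$ whose adhesion is attached to $O\cup P_\ell$, using an inner dynamic program over the children in $\mathcal{A}(\ell)$. Recall that $G(\ell)=G[O\cup P_\ell\cup\bigcup_{t'\in\mathcal{A}(\ell)}V(G_{t'})]$ and that $G_\leq(\ell,a)=G[O\cup P_\ell\cup\bigcup_{1\le i\le a}V(G_{t_i})]$ with $\mathcal{A}(\ell)=\{t_1,\ldots,t_{|\mathcal{A}(\ell)|}\}$. So first I would fix $\ell\in\prange$ and define an auxiliary Boolean function $h_a^{D,\mcp_{A_t}}(\ell,d,\bar y,q)$ that plays the role of $h$ but for the partial graph $G_\leq(\ell,a)$ instead of $G(\ell)$; then $h^{D,\mcp_{A_t}}(\ell,\cdot,\cdot,\cdot):=h_{|\mathcal{A}(\ell)|}^{D,\mcp_{A_t}}(\ell,\cdot,\cdot,\cdot)$. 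The base case $a=0$ is exactly the leaf-node computation of Lemma \ref{lemma:h in leaf nodes}: we enumerate the $k^{O(k^2)}$ $k$-subpartitions of $O\cup P_\ell$ coarsening $\mcq_{\chi(t)}$ (there are only $O(k^2)$ parts of $\mcq_{\chi(t)}$ inside $O\cup P_\ell$ by the nice-decomposition property (ii)) and check conditions (i)--(v) directly.

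For the inductive step $a-1\to a$, I would merge $h_{a-1}^{D,\mcp_{A_t}}$ with the child function $f_{t_a}$. The key point is that $V(G_{t_a})\setminus A_{t_a}$ is disjoint from $V(G_\leq(\ell,a-1))$ and $A_{t_a}$ is the only interface; moreover by nice-decomposition property (iv) the adhesion $A_{t_a}$ intersects at most one part of $\mcp_{\chi(t)}$ other than $O$, namely $P_\ell$, so $A_{t_a}\subseteq O\cup P_\ell\subseteq V(G_\leq(\ell,a-1))$. Thus gluing $G_{t_a}$ on amounts to: pick a partition $\mcp_{A_{t_a}}\in\mathcal{F}^{A_{t_a}}$ of the adhesion (there are only $(\lambda k)^{O(k^2)}$ of them by Lemma \ref{lemma F family size}), require it to be consistent with the current $k$-subpartition's restriction to $A_{t_a}$, and take a value $(\bar\alpha,\beta)$ with $f_{t_a}(\mcp_{A_{t_a}},\bar\alpha,\beta)=1$. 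As in the proof of Lemma \ref{lemma:g update from h}, I would quantify over permutation pairs $\sigma_1,\sigma_2:[k]\to[k]$ that align the part containing $O$ (condition analogous to (G1)), that fix $[k']$ appropriately when $A_t\subseteq O\cup P_\ell$ (analogue of (G2)), and that are consistent with the forced labels $\mcp_{A_{t_a}}$ on $A_{t_a}$; then set $h_a^{D,\mcp_{A_t}}(\ell,d_3,\sigma_1(\bar y)+\sigma_2(\bar\alpha),\sigma_1(q_1))=1$ for all $d_3\ge d_1+d_2$, where we also add any tree edges of $T$ that lie between $A_{t_a}$ and $V(G_{t_a})\setminus A_{t_a}$ but were already counted — here care is needed to avoid double-counting tree edges, which is handled because $\delta_T$ of the union decomposes cleanly across the $V(G_{t_a})$-cut since $T$ restricted to $V(G_{t_a})$ and $T$ restricted to $V(G_\leq(\ell,a-1))$ share only vertices in $A_{t_a}$. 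For cut-edge counts, edges of $G$ incident to $V(G_{t_a})\setminus A_{t_a}$ go only to $A_{t_a}$ (compactness), so $|\delta_{G_\leq(\ell,a)}(Q^0_i)|=|\delta_{G_\leq(\ell,a-1)}(Q^1_{\sigma_1^{-1}(i)})|+|\delta_{G_{t_a}}(Q^2_{\sigma_2^{-1}(i)})|$ minus the edges inside $A_{t_a}$ that get absorbed — again this is exactly the bookkeeping already done in Lemma \ref{lemma:g update from h}, just with $G_{t_a}$ in place of $G(\ell)$ and $f_{t_a}$ in place of $h$.

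The correctness argument mirrors Lemma \ref{lemma:g update from h}: for \hcness, given a $k$-subpartition $\mcp$ of $V(G(\ell))$ witnessing \hcness, restrict it to $V(G_\leq(\ell,a-1))$ and to $V(G_{t_a})$; the restriction to $V(G_{t_a})$ witnesses \fcness of $f_{t_a}$ at the appropriate argument (using that $\mcp$ is a restriction of $\Omega$, so its further restriction is too, giving condition (iv) of \fcness), and induction plus taking $\sigma_1=\sigma_2=\mathrm{id}$ (after relabeling so $\mcp_{A_{t_a}}$ matches) forces the algorithm to set the value to $1$. For \hsness, any assignment of $1$ comes from glued witnesses, and pasting them along $A_{t_a}$ — which is legitimate because they agree on $A_{t_a}$ by the $\mcp_{A_{t_a}}$-consistency requirement and because $f_{t_a}$'s soundness only asserts conditions (i)--(iii), matching what \hsness needs — produces a $k$-subpartition of $V(G_\leq(\ell,a))$ with the claimed cut profile and tree-edge bound. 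The run-time is $(\lambda k)^{O(k^2)}n^{O(1)}$: for each $\ell$ (at most $n$ values) and each $a$ (at most $\deg$ of $t$, summing to $O(n)$ over all $\ell$), we iterate over $(\lambda k)^{O(k^2)}$ pairs of table entries, $(\lambda k)^{O(k^2)}$ choices of $\mcp_{A_{t_a}}$, and $(k!)^2$ permutation pairs, each step costing $n^{O(1)}$ to verify consistency on $A_{t_a}$.

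The main obstacle I anticipate is the correct handling of the adhesion interface when gluing: ensuring that tree edges and graph edges straddling $A_{t_a}$ are counted exactly once, and that the partition labels on $A_{t_a}$ (which are already implicitly fixed inside both $h_{a-1}^{D,\mcp_{A_t}}$ via its relation to $\Omega$ and inside $f_{t_a}$ via its $\mcf^{A_{t_a}}$-argument) are made to agree via the permutation $\sigma_2$. This is precisely the subtlety that the $\mcf^{A_{t_a}}$-indexing of $f_{t'}$ was designed to manage, and it is the reason the functions $f_{t'}$ carry the adhesion partition as an explicit argument; so the obstacle is conceptual bookkeeping rather than a new idea, and it is resolved by essentially replaying the argument of Lemma \ref{lemma:g update from h} with the child subtree playing the role that $G(\ell)$ plays there, while additionally branching over $\mcp_{A_{t_a}}\in\mathcal{F}^{A_{t_a}}$ and discarding branches where the forced adhesion labels are inconsistent with the part structure coming from $h_{a-1}^{D,\mcp_{A_t}}$.
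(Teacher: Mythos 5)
Your inductive table $h_a^{D,\mcp_{A_t}}(\ell,d,\bar y,q)$ has the same signature as $h^{D,\mcp_{A_t}}$ itself, so its value records only the \emph{existence} of a suitable $k$-subpartition of $V(G_\le(\ell,a))$ -- it does not remember how that subpartition restricts to $O\cup P_\ell$, and in particular not its trace on $A_{t_a}$. This is a genuine gap in the soundness direction: when you set $h_a(\cdot)=1$ from a pair $\bigl(h_{a-1}(\ell,d_1,\bar z,q_1)=1,\, f_{t_a}(\mcp_{A_{t_a}},\bar\alpha,\beta)=1\bigr)$, you cannot verify that the two (unknown) witnesses agree on $A_{t_a}$, because $h$-soundness does not tie the $h_{a-1}$-witness to $\Omega$; your appeal to ``labels implicitly fixed via its relation to $\Omega$'' is only available for the correctness direction. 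Pasting witnesses whose $A_{t_a}$-labels disagree may fail to produce a well-defined $k$-subpartition, and even when it does the edge and tree-edge counts in your recurrence are miscounted. Branching over $\mcp_{A_{t_a}}\in\mathcal{F}^{A_{t_a}}$ does not repair this: it fixes the $G_{t_a}$-side labels on $A_{t_a}$ but gives you no handle on the $h_{a-1}$-side labels, which were already ``forgotten'' when the table entry was collapsed to a Boolean.

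The paper avoids exactly this problem by adding the restriction to $O\cup P_\ell$ as explicit DP state. It works with a function $\nu^{D,\mcp_{A_t},\ell,\mcr}(d,a,\bar z,q)$, where $\mcr$ ranges over the $k^{O(k^2)}$ $k$-subpartitions of $O\cup P_\ell$ coarsening $\mcq_{\chi(t)}$ (the set $\tilde{\mcr}(D,\mcp_{A_t},\ell)$), and it requires every witness to restrict \emph{exactly} to $\mcr$ on $O\cup P_\ell$. Since $A_{t_a}\subseteq O\cup P_\ell$ (nice-decomposition property (iv)), this pins down the adhesion labels for all witnesses at every stage $a$, and the partition fed to $f_{t_a}$ is the determined restriction $\mcr_{A_{t_a}}$ rather than a free choice from $\mathcal{F}^{A_{t_a}}$ to be checked against a witness you do not have. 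A layer $\mu^{D,\mcp_{A_t},\ell}(d,\bar y,q,\mcr)$ then ORs out the $\mcr$ index to recover $h$. Your high-level picture -- an inner DP over $\mathcal{A}(\ell)$, gluing along $A_{t_a}$, permutations as in Lemma \ref{lemma:g update from h} -- does match the paper's; the missing idea is precisely to carry $\mcr$ as state so that agreement on the adhesion is determined by the DP index rather than merely hoped for.
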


\begin{proof}
Given the inputs and an integer $\ell\in\prange$, let us define $\Tilde{\mathcal{R}}(D,\mcp_{A_t},\ell)$ to be the set of $k$-subpartitions $\mcp=(P'_1,\ldots,P'_k)$ of $O\cup P_\ell$ that satisfy the following conditions:
\begin{enumerate}[(i)]
    \item $\mcp$ coarsens $\mcq_{\chi(t)}$ restricted to $O\cup P_{\ell}$,
    \item $|\delta_T(\mcp)|\leq 2k^2$, and
    \item if $A_t\subseteq O\cup P_\ell$, then $P'_i\cap A_t=\tilde{P}_i$ for all $i\in[k']$.
\end{enumerate}
Since every $k$-subpartition in $\Tilde{\mathcal{R}}(D,\mcp_{A_t},\ell)$ necessarily coarsens $\mathcal{Q}_{\chi(t)}$, we have the size bound $|\Tilde{\mathcal{R}}(D,\mcp_{A_t},\ell)|= k^{O(k^2)}$.

In order to compute an \hc and \hs function $h^{D, \mcp_{A_t}}$, we will employ a new sub-problem that we define below.
\begin{definition}
Let $\mu^{D,\mcp_{A_t},\ell}:\{0,1,...,2k^2\}\times\lrange^k\times[k]\times\Tilde{\mathcal{R}}(D,\mcp_{A_t},\ell)\to\{0,1\}$ be a Boolean function, where $D=(\mcp_{\ct}=(P_1,\ldots,P_p,O),\mcq_{\ct},O)\in\mathcal{D}$, $\mcp_{A_t}=(\tilde{P}_1, \ldots, \tilde{P}_{k'})\in\mathcal{F}^{A_t}$ and $\ell\in\prange$.
\begin{enumerate}
    \item
    \textbf{(Correctness)}
    The function $\mu^{D,\mcp_{A_t},\ell}$ is \emph{\hpc} if we have $\mu^{D,\mcp_{A_t},\ell}(d,\Bar{y},q,\mathcal{R})=1$ for all $d\in\krange$, $\bar{y}\in\lrange^k$, $q\in[k]$ and $\mathcal{R}\in\Tilde{\mathcal{R}}(D,\mcp_{A_t},\ell)$ for which there exists a $k$-subpartition $\mathcal{P}=(P'_1,\ldots,P'_k)$ of $V(G(\ell))$ satisfying the following conditions:
    \begin{enumerate}[(i)]
        \item $\mathcal{P}$ restricted to $O\cup P_\ell$ is $\mathcal{R}$,
        \item $|\delta_{G(\ell)}(P'_i)|= y_i$ for all $i\in [k]$,
        \item $|\delta_T(\mcp)|\leq d$, 
        \item $O\subseteq P'_q$,
        \item if $A_t\subseteq O\cup P_\ell$, then $P'_i\cap A_t=\Tilde{P}_i$ for all $i\in[k']$, and
        \item $\mcp$ is a restriction of $\Omega$ to $V(G(\ell))$.
    \end{enumerate}
    A $k$-subpartition of $V(G(\ell))$ satisfying the above six conditions is said to \emph{witness} \hpcness of $\mu^{D,\mcp_{A_t},\ell}(d,\Bar{y},q,\mathcal{R})$.
    \item 
    \textbf{(Soundness)} The function $\mu^{D,\mcp_{A_t},\ell}$ is \emph{\hps} if for all $d\in\krange$, $\bar{y}\in\lrange^k$, $q\in[k]$ and $\mathcal{R}\in\Tilde{\mathcal{R}}(D,\mcp_{A_t},\ell)$, we have $\mu^{D,\mcp_{A_t},\ell}(d,\Bar{y},q,\mathcal{R})=1$ only if there exists a $k$-subpartition $\mathcal{P}=(P'_1,\ldots,P'_k)$ of $V(G(\ell))$ satisfying conditions (i), (ii), (iii), (iv) and (v) above. A $k$-subpartition of $V(G(\ell))$ satisfying (i), (ii), (iii), (iv) and (v) is said to \emph{witness} \hpsness of $\mu^{D,\mcp_{A_t},\ell}(d,\Bar{y},q,\mathcal{R})$.
\end{enumerate}
\end{definition}

A function $\mu^{D,\mcp_{A_t},\ell}$ that is \hpc and \hps helps compute a function $h^{D,\mcp_{A_t}}$ that is \hc and \hs by the following proposition.
\begin{proposition}\label{prop:mu to h}
Suppose that we have functions $\mu^{D,\mcp_{A_t},\ell}:\{0,1,...,2k^2\}\times\lrange^k\times[k]\times\Tilde{\mathcal{R}}(D,\mcp_{A_t},\ell)\to\{0,1\}$ for all $\ell\in\prange$ such that all of them are \hpc and \hps. Then the function $h^{D,\mcp_{A_t}}:\{0,1,\ldots,p\}\times\{0,1,\ldots,2k^2\}\times\lrange^k\times[k]\to\{0,1\}$ obtained by setting 
\[h^{D,\mcp_{A_t}}(\ell,d,\Bar{y},q):=\max\big\{\mu^{D,\mcp_{A_t},\ell}(d,\Bar{y},q,\mathcal{R}):\mathcal{R}\in\Tilde{\mathcal{R}}(D,\mcp_{A_t},\ell)\big\}\]
for every $\ell\in\prange$, $d\in\krange$, $\bar{y}\in\lrange^k$ and $q\in[k]$ is both \hc and \hs.
\end{proposition}
\begin{proof}
We first show \hcness. For $\ell\in\prange$, $d\in\krange$, $\bar{y}\in\lrange^k$ and $q\in[k]$, suppose that there exists a $k$-subpartition $\Pi=(\pi_1,\ldots,\pi_k)$ of $V(G(\ell))$ withnessing \hcness of $h^{D,\mcp_{A_t}}(\ell,d,\Bar{y},q)$. Then, $\Pi$ also witnesses \hpcness of $\mu^{D,\mcp_{A_t},\ell}(d,\bar{y},q,\Pi^0)$, where $\Pi^0=(\pi_1\cap (O\cup P_\ell),\ldots,\pi_k\cap (O\cup P_\ell))\in\tilde{\mcr}(D,\mcp_{A_t},\ell)$. Since the function $\mu^{D,\mcp_{A_t},\ell}$ is \hpc, we know that $\mu^{D,\mcp_{A_t},\ell}(d,\bar{y},q,\Pi^0)=1$. This implies that $h^{D,\mcp_{A_t}}(\ell,d,\Bar{y},q)=1$.


Next, we show \hsness. Suppose that $h^{D,\mcp_{A_t}}(\ell,d,\Bar{y},q)=1$ for some $\ell\in\prange$, $d\in\krange$, $\bar{y}\in\lrange^k$ and $q\in[k]$. Then, there exists $\mcr'\in\tilde{\mcr}(D,\mcp_{A_t},\ell)$ such that  $\mu^{D,\mcp_{A_t},\ell}(d,\bar{y},q,\mcr')=1$. Since the function $\mu^{D,\mcp_{A_t},\ell}$ is \hps, we know that there exists some $k$-subpartition $\Pi'$ of $V(G(\ell))$ witnessing \hpsness of $\mu^{D,\mcp_{A_t},\ell}(d,\bar{y},q,\mcr')$. It follows that $\Pi'$ also witnesses \hsness of $h^{D,\mcp_{A_t}}(\ell,d,\Bar{y},q)$.

\end{proof}

By the above proposition, it suffices to assign values to $\mu^{D,\mcp_{A_t},\ell}(d,\Bar{y},q,\mathcal{R})$ for every $\ell\in\prange$, $\mathcal{R}\in \tilde{\mcr}(D,\mcp_{A_t},\ell)$, $d\in\krange$ $\Bar{y}\in \lrange^k$, and $q\in[k]$ so that the resulting function is \hpc and \hps. 
We define another sub-problem.

\begin{definition}
Let $\nu^{D,\mcp_{A_t},\ell,\mcr}:\krange\times\{0,1,\ldots,|\mathcal{A}(\ell)|\}\times\lrange^k\times[k]\to\{0,1\}$ be a Boolean function, where $D=(\mcp_{\ct}=(P_1,\ldots,P_p,O),\mcq_{\ct},O)\in\mathcal{D}$, $\mcp_{A_t}=(\tilde{P}_1, \ldots, \tilde{P}_{k'})\in\mathcal{F}^{A_t}$, $\ell\in\prange$ and $\mcr=(R_1,\ldots,R_k)\in\tilde{\mcr}(D,\mcp_{A_t},\ell)$.
\begin{enumerate}
    \item
    \textbf{(Correctness)}
    The function $\nu^{D,\mcp_{A_t},\ell,\mcr}$ is \emph{\nuc} if we have $\nu^{D,\mcp_{A_t},\ell,\mcr}(d,a,\Bar{z},q)=1$ for all $d\in\krange$, $a\in\{0,1,...,|\mathcal{A}(\ell)|\}$, $\bar{z}\in\lrange^k$ and $q\in[k]$ for which there exists a $k$-subpartition $\mathcal{P}=(P'_1,\ldots,P'_k)$ of $V(G_\leq(\ell,a))$ satisfying the following conditions:
    \begin{enumerate}[(i)]
        \item $P'_i\cap (O\cup P_\ell)=R_i$ for all $i\in [k]$,
        \item $|\delta_{G_\leq(\ell,a)}(P'_i)|= z_i$ for all $i\in [k]$,
        \item $|\delta_T(\mcp)|\leq d$,
        \item $O\subseteq P'_q$, and
        \item $\mcp$ is a restriction of $\Omega$ to $V(G_\leq(\ell,a))$.
    \end{enumerate}
    A $k$-subpartition of $V(G_\leq(\ell,a))$ satisfying the above five conditions is said to \emph{witness} \nucness of $\nu^{D,\mcp_{A_t},\ell,\mcr}(d,a,\Bar{z},q)$.
    \item 
    \textbf{(Soundness)} The function $\nu^{D,\mcp_{A_t},\ell,\mcr}$ is \emph{\nus} if for all $d\in\krange$, $a\in\{0,1,...,|\mathcal{A}(\ell)|\}$, $\bar{z}\in\lrange^k$ and $q\in[k]$, we have $\nu^{D,\mcp_{A_t},\ell,\mcr}(d,a,\Bar{z},q)=1$ only if there exists a $k$-subpartition $\mathcal{P}=(P'_1,\ldots,P'_k)$ of $V(G_\leq(\ell))$ satisfying conditions (i), (ii), (iii) and (iv) above. A $k$-subpartition of $V(G_\leq(\ell,a))$ satisfying (i), (ii), (iii) and (iv) is said to \emph{witness} \nusness of $\nu^{D,\mcp_{A_t},\ell,\mcr}(d,a,\Bar{z},q)$.
\end{enumerate}
\end{definition}
A function $\nu^{D,\mcp_{A_t},\ell,\mcr}:\krange\times\{0,1,...,|\mathcal{A}(\ell)|\}\times\lrange^k\times[k]\to\{0,1\}$ that is \nuc and \nus helps compute a function $\mu^{D,\mcp_{A_t},\ell}:\{0,1,...,2k^2\}\times\lrange^k\times[k]\times\Tilde{\mathcal{R}}(D,\mcp_{A_t},\ell)\to\{0,1\}$ that is \hpc and \hps by the following observation. 
\begin{observation}\label{obs:nu to mu}
Suppose that we have functions $\nu^{D,\mcp_{A_t},\ell,\mcr}:\krange\times\{0,1,\ldots,|\mathcal{A}(\ell)|\}\times\lrange^k\times[k]\to\{0,1\}$ for every $\mcr\in\tilde{\mcr}(D,\mcp_{A_t},\ell)$ such that all of them are \nuc and \nus. Then the function $\mu^{D,\mcp_{A_t},\ell}:\{0,1,...,2k^2\}\times\lrange^k\times[k]\times\Tilde{\mathcal{R}}(D,\mcp_{A_t},\ell)\to\{0,1\}$ obtained by setting
\[\mu^{D,\mcp_{A_t},\ell}(d,\Bar{y},q,\mathcal{R}):=\nu^{D,\mcp_{A_t},\ell,\mcr}(d,|\mathcal{A}(\ell)|,\Bar{y},q)\]
for every $d\in\krange$, $\bar{y}\in\lrange^k$, $q\in[k]$ and $\mcr\in\tilde{\mcr}(D,\mcp_{A_t},\ell)$ is both \hpc and \hps.
\end{observation}

We note that a function $\nu^{D,\mcp_{A_t},\ell,\mcr}:\krange\times\{0,1,\ldots,|\mathcal{A}(\ell)|\}\times\lrange^k\times[k]\to\{0,1\}$ that is \nuc and \nus can be computed in $(\lambda k)^{O(k^2)}$ time using Lemma \ref{lemma:compute nu}, which we state and prove after this proof.

Our algorithm to prove Lemma \ref{lemma:h in non-leaf node} starts by computing $\Tilde{\mathcal{R}}(D,\mcp_{A_t},\ell)$ for every $\ell\in\prange$, which can be done in $k^{O(k^2)}n$ time  (since the size of $\Tilde{\mathcal{R}}(D,\mcp_{A_t},\ell)$ is $k^{O(k^2)}$ for every $\ell\in\prange$). For each $\ell\in\prange$ and $\mcr\in\tilde{\mcr}(D,\mcp_{A_t},\ell)$, our algorithm assigns values to $\nu^{D,\mcp_{A_t},\ell,\mcr}(d,|\mathcal{A}(\ell)|,\Bar{y},q)$ for all $d\in\krange$, $\bar{y}\in\lrange^k$ and $q\in[k]$ using Lemma \ref{lemma:compute nu}. The algorithms uses these values to next assign values to $\mu^{D,\mcp_{A_t},\ell}(d,\Bar{y},q,\mathcal{R})$ for all $\ell\in\prange$, $d\in\krange$, $\bar{y}\in\lrange^k$, $q\in[k]$ and $\mcr\in\tilde{\mcr}(D,\mcp_{A_t},\ell)$ using Observation \ref{obs:nu to mu}. 
Finally, the algorithm uses these values to assign values to $h^{D,\mcp_{A_t}}(\ell,d,\bar{y},q)$ for all $\ell\in\prange$, $d\in\krange$, $\bar{y}\in\lrange^k$ and $q\in[k]$ using Proposition \ref{prop:mu to h}. The resulting function $h^{D, \mcp_{A_t}}$ is \hc and \hs. 

The total time to compute $h^{D,\mcp_{A_t}}(\ell,d,\bar{y},q)$ for all $\ell\in\prange$, $d\in\krange$, $\bar{y}\in\lrange^k$ and $q\in[k]$ is
\[(\lambda k)^{O(k^2)}n^{O(1)}+(p+1)\cdot|\Tilde{\mathcal{R}}(D,\mcp_{A_t},\ell)|\cdot(\lambda k)^{O(k^2)}n^{O(1)}=(\lambda k)^{O(k^2)}n^{O(1)}.\]
This completes the proof of Lemma \ref{lemma:h in non-leaf node}.
\end{proof}

\begin{lemma}\label{lemma:compute nu}
There exists an algorithm that takes as input 
$(\tau, \chi)$, 
a non-leaf tree node $t\in V(\tau)$, 
a partition $\mcp_{A_t}\in\mathcal{F}^{A_t}$,
a nice decomposition $D=(\mcp_{\ct}=(P_1,\ldots,P_p,O),\mcq_{\ct},O)\in \mcd$, 
an integer $\ell\in\prange$, 
a $k$-subpartition $\mcr=(R_1,\ldots,R_k)\in\tilde{\mcr}(D,\mcp_{A_t},\ell)$, 
together with Boolean functions  $f_{t'}:\mathcal{F}^{A_{t'}}\times\lrange^k\times\krange\to\{0,1\}$ for each child $t'$ of $t$ that are \fc and \fs, 
and runs in $(\lambda k)^{O(k^2)}n^{O(1)}$ time 
to return a function $\nu^{D,\mcp_{A_t},\ell,\mcr}:\krange\times\{0,1,...,|\mathcal{A}(\ell)|\}\times\lrange^k\times[k]\to\{0,1\}$ that is \nuc and \nus.
\end{lemma}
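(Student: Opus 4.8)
The plan is to compute $\nu^{D,\mcp_{A_t},\ell,\mcr}$ by a second-level dynamic program that iterates over $a=0,1,\ldots,|\mathcal{A}(\ell)|$, absorbing one child $t_a$ of $t$ at a time. Recall that $G_\leq(\ell,a)=G[O\cup P_\ell \cup \bigcup_{1\le i\le a}V(G_{t_i})]$, so that $G_\leq(\ell,0)=G[O\cup P_\ell]$ and $G_\leq(\ell,|\mathcal{A}(\ell)|)=G(\ell)$. \textbf{Base case} ($a=0$): since $\mcr$ is a fixed $k$-subpartition of $O\cup P_\ell$, for each $d,\bar z,q$ we set $\nu^{D,\mcp_{A_t},\ell,\mcr}(d,0,\bar z,q)=1$ if and only if $P'_i:=R_i$ satisfies conditions (i)--(iv): namely $|\delta_{G[O\cup P_\ell]}(R_i)|=z_i$ for all $i\in[k]$, $|\delta_T(\mcr)|\le d$, and $O\subseteq R_q$. (Conditions (i) and, implicitly, the restriction-of-$\Omega$ part of the correctness requirement are handled because $\mcr\in\tilde{\mcr}(D,\mcp_{A_t},\ell)$ was chosen so that it is consistent with being a restriction of $\Omega$; see below for why soundness needs only (i)--(iv).) This base case takes $(\lambda k)^{O(1)}n^{O(1)}$ time.

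\textbf{Inductive step.} Suppose $\nu^{D,\mcp_{A_t},\ell,\mcr}(\cdot,a-1,\cdot,\cdot)$ has been computed. The new vertex set $V(G_{t_a})\setminus A_{t_a}$ attaches to $G_\leq(\ell,a-1)$ only through the adhesion $A_{t_a}$, which by the nice-decomposition property (iv) intersects at most one part of $\mcp_{\ct}$ other than $O$; since $t_a\in\mathcal{A}(\ell)$, that part is contained in $O\cup P_\ell$. In particular all of $A_{t_a}$ already lies in $O\cup P_\ell\subseteq V(G_\leq(\ell,a-1))$, so $\mcr$ already fixes which parts the vertices of $A_{t_a}$ go into. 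We enumerate over the input $f_{t_a}$-function: for every $\mcp_{A_{t_a}}\in\mcf^{A_{t_a}}$ with $f_{t_a}(\mcp_{A_{t_a}},\bar\alpha,\beta)=1$, we check whether $\mcp_{A_{t_a}}$ is compatible with $\mcr$, i.e.\ there is a bijection (renaming of parts) $\pi:[k]\to[k]$ identifying the parts of $\mcp_{A_{t_a}}$ with parts of $\mcr$ restricted to $A_{t_a}$, fixing the part index containing $O$, and respecting the $\mcp_{A_t}$-constraint when $A_t\subseteq O\cup P_\ell$. For every such compatible quadruple $\big((d_1,\bar z^{(1)},q),(\bar\alpha,\beta,\mcp_{A_{t_a}})\big)$ with $\nu^{D,\mcp_{A_t},\ell,\mcr}(d_1,a-1,\bar z^{(1)},q)=1$, we set $\nu^{D,\mcp_{A_t},\ell,\mcr}(d_3,a,\bar z^{(1)}+\pi(\bar\alpha) - 2\bar\gamma,\,q)=1$ for all $d_3\ge d_1+\beta$, where $\bar\gamma$ corrects for edges of $G_{t_a}$ internal to $A_{t_a}$ that are double-counted (more carefully: the cut value of part $i$ in $G_\leq(\ell,a)$ is the cut value in $G_\leq(\ell,a-1)$ plus the cut value in $G_{t_a}$ minus the edges of $\delta_{G[A_{t_a}]}$ joining different parts, which are counted once on each side; this bookkeeping is routine and the exact correction is determined by $\mcr$, $\mcp_{A_{t_a}}$, and $\pi$). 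Finally we set all remaining entries to $0$. The same identity-permutation argument used in Lemma \ref{lemma:g update from h} shows correctness: a $k$-subpartition $\mcp$ witnessing \nucness of $\nu(d,a,\bar z,q)$ restricts to a witness for $\nu(\cdot,a-1,\cdot,q)$ on $G_\leq(\ell,a-1)$ and, since $\mcp$ is a restriction of $\Omega$, its restriction to $V(G_{t_a})$ witnesses \fcness of the appropriate $f_{t_a}$-entry, so \fcness of $f_{t_a}$ guarantees that entry is $1$ and the update fires. Conversely any firing of the update glues a \nus-witness for $G_\leq(\ell,a-1)$ with an \fs-witness for $G_{t_a}$ along the shared adhesion $A_{t_a}$ (whose part-assignment is pinned down by $\mcr$, so the two pieces agree), producing a \nus-witness for $\nu(d,a,\bar z,q)$; soundness needs only conditions (i)--(iv), which matches what the $f_{t_a}$-soundness side delivers.

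\textbf{Running time.} There are $|\mathcal{A}(\ell)|\le n$ iterations; in each we range over $|\mcf^{A_{t_a}}|=(\lambda k)^{O(k^2)}$ adhesion-partitions of $t_a$ (Lemma \ref{lemma F family size}), over $\bar\alpha\in\lrange^k$, $\beta\in\krange$, $d_1\in\krange$, $\bar z^{(1)}\in\lrange^k$, $q\in[k]$, and over $O(k!)$ renaming bijections, each consistency check and update costing $n^{O(1)}$. The total is $n\cdot(\lambda k)^{O(k^2)}\cdot\lambda^{2k}\cdot k^{O(1)}\cdot k!\cdot n^{O(1)}=(\lambda k)^{O(k^2)}n^{O(1)}$, as required.

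\textbf{Main obstacle.} The delicate point is the cut-accounting across the adhesion $A_{t_a}$: the functions $f_{t_a}$ record $|\delta_{G_{t_a}}(P'_i)|$, which includes edges inside $G[A_{t_a}]$, and when we merge we must neither double-count those edges nor miss edges of $G$ that run between $V(G_{t_a})\setminus A_{t_a}$ and the rest of $G_\leq(\ell,a-1)$ --- but compactness of the tree decomposition (Lemma \ref{lemma tree decomposition}) guarantees $N_G(V(G_{t_a})\setminus A_{t_a})=A_{t_a}$, so no such stray edges exist, and the only correction needed is for edges of $G[A_{t_a}]$ crossing $\mcr$; getting this correction exactly right (and verifying it stays within $\lrange^k$) is the one place where care is required, while everything else follows the template of Lemma \ref{lemma:g update from h}.
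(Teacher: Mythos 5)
Your construction follows the paper's approach exactly: a secondary dynamic program over $a=0,1,\ldots,|\mathcal{A}(\ell)|$, with base case $G_\leq(\ell,0)=G[O\cup P_\ell]$, gluing a $\nu$-witness on $G_\leq(\ell,a-1)$ to an $f_{t_a}$-witness on $G_{t_a}$ along the adhesion $A_{t_a}$ (whose part assignment $\mcr$ already pins down), using permutations to handle relabeling and compactness to rule out stray edges. However, there is a genuine gap in the tree-cut threshold: you require $d_3\ge d_1+\beta$, but $G_\leq(\ell,a-1)$ and $G_{t_a}$ share the vertices of $A_{t_a}$, so any tree edge of $T$ joining two vertices of $A_{t_a}$ in distinct parts of $\mcr$ is counted in $|\delta_T(\cdot)|$ for both pieces yet only once in the glued partition. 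The correct threshold is $d_3\ge d_1+\beta-|\delta_T(\mathcal{R}_{A_{t_a}})|$, where $\mathcal{R}_{A_{t_a}}$ is the restriction of $\mcr$ to $A_{t_a}$. Without this subtraction, \nucness fails: if $\Omega$ restricted to $V(G_\leq(\ell,a))$ crosses exactly $d$ tree edges but $|\delta_T(\mathcal{R}_{A_{t_a}})|>0$, then its two restrictions give $d_1+d_2=d+|\delta_T(\mathcal{R}_{A_{t_a}})|>d$, so your update never sets $\nu^{D,\mcp_{A_t},\ell,\mcr}(d,a,\bar{w},q)=1$.

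This is the tree-edge analogue of the $G[A_{t_a}]$-edge correction you do include in $\bar{w}$; both arise because $A_{t_a}$, unlike the set $O$ in Lemma \ref{lemma:g update from h}, can be split into several parts by $\mcr$, so the gluing interface can itself be crossed. Following the template of Lemma \ref{lemma:g update from h} too closely is what caused the omission. A secondary slip: your formula has $-2\bar\gamma$, but each edge of $\delta_{G[A_{t_a}]}$ crossing $\mcr$ is over-counted by exactly one per incident part (once via $\bar{z}^{(1)}$, once via $\bar{\alpha}$), so the per-part subtraction should be $-|\delta_{G[A_{t_a}]}(R'_j)|$, not twice that --- compare the paper's $w_i = z_{\sigma_1^{-1}(i)}+x_{\sigma_2^{-1}(i)}-\mathds{1}[\sigma_2^{-1}(i)\leq k_a]\cdot|\delta_{G[A_{t_a}]}(R'_{\sigma_2^{-1}(i)})|$. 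You hedge on this, so it may simply be a typo, but the missing $|\delta_T(\mathcal{R}_{A_{t_a}})|$ term is unhedged and breaks the correctness direction.
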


\begin{proof}

Let the input be as stated in the lemma. We will iteratively assign values to $\nu^{D,\mcp_{A_t},\ell,\mcr}(d,a,\bar{z},q)$ for $a=0,1,\ldots,p$.

For $a=0$, we observe that in order to assign values to $\nu^{D,\mcp_{A_t},\ell,\mcr}(d,0,\bar{z},q)$ for all $d\in\krange$, $\bar{z}\in\lrange^k$ and $q\in[k]$, the $k$-subpartition $\mcr$ is the only $k$-subpartition of $V(G_\leq(\ell,0))=O\cup P_l$ whose restriction to $O\cup P_\ell$ is $\mathcal{R}$ (i.e., it is the only $k$-subpartition that can satisfy condition (i) in the definition of the $\nu^{D,\mcp_{A_t},\ell,\mcr}$ sub-problem). Therefore, we set $\nu^{D,\mcp_{A_t},\ell,\mcr}(d,0,\bar{z},q)=1$ if and only if $\mcr$ satisfies the remaining \nusness conditions, which can be verified in $n^{O(1)}$ time. The run-time to assign values to $\nu^{D,\mcp_{A_t},\ell,\mcr}(d,0,\bar{z},q)$ for all $d\in\krange$, $\bar{z}\in\lrange^k$ and $q\in[k]$ is $\lambda^{O(k)} k^{O(1)}n^{O(1)}$. If $\mcr$ witnesses \nucness of $\nu^{D,\mcp_{A_t},\ell,\mcr}(d,0,\bar{z},q)$ for some $d\in\krange$, $\bar{z}\in\lrange^k$ and $q\in[k]$, then our algorithm sets $\nu^{D,\mcp_{A_t},\ell,\mcr}(d,0,\bar{z},q)=1$. If our algorithm sets $\nu^{D,\mcp_{A_t},\ell,\mcr}(d,0,\bar{z},q)=1$ for some $d\in\krange$, $\bar{z}\in\lrange^k$ and $q\in[k]$, then $\mcr$ witnesses \nusness of $\nu^{D,\mcp_{A_t},\ell,\mcr}(d,0,\bar{z},q)$.

Now, we will assume that for some $a\in [|\mathcal{A}(\ell)|]$, we have assigned values to $\nu^{D,\mcp_{A_t},\ell,\mcr}(d,a-1,\bar{z},q)$ for all $d\in\krange$, $\bar{z}\in\lrange^k$ and $q\in[k]$ and describe an algorithm to assign values to $\nu^{D,\mcp_{A_t},\ell,\mcr}(d,a,\bar{z},q)$ for all $d\in\krange$, $\bar{z}\in\lrange^k$ and $q\in[k]$. 

Let $\mathcal{R}_{A_{t_a}}=(R'_1,...,R'_{k_a})$ be the restriction of $\mathcal{R}$ to $A_{t_a}$, where $k_a\leq k$ and $R'_i\neq\emptyset$ for all $i\in [k_a]$ (i.e., $\mcr_{A_{t_a}}$ is a partition with at most $k$ parts). Additionally, in the case $t_a=t$, we order $\mcr_{A_{t_a}}$ so that $\mcr_{A_{t_a}}=\mcp_{A_t}$. Moreover, let $\gamma:[k_a]\to [k]$ be the injection such that $R'_i=R_{\gamma(i)}\cap A_{t_a}$.

We start by defining a set
\begin{align*}
    \mathcal{Z}^a:=\big\{&((d_1,a-1,\Bar{z},q),(\mathcal{R}_{A_{t_a}},\bar{x},d_2)):
    \\&\qquad d_1,d_2\in\krange,\bar{z},\bar{x}\in\lrange^k,q\in[k],
    \\&\qquad \nu^{D,\mcp_{A_t},\ell,\mcr}(d_1,a-1,\Bar{z},q)=1,f_{A_{t_a}}(\mathcal{R}_{A_{t_a}},\bar{x},d_2)=1\big\}.
\end{align*}
For each pair $((d_1,a-1,\Bar{z},q),(\mathcal{R}_{A_{t_a}},\Bar{x},d_2))\in\mathcal{Z}^a$, and each pair of permutations $(\sigma_1,\sigma_2):[k]\to[k]$ satisfying the following two conditions,
\begin{enumerate}
    \item[(N1)] $\sigma_1(i)=\sigma_2(\gamma^{-1}(i))$ for all $i\in[k]$ with $R_i\cap A_{t_a}\neq\emptyset$, and
    \item[(N2)] $\sigma_1(i)=i$ for all $i\in[k]$ for which $R_i\neq\emptyset$,
\end{enumerate}
our algorithm will set $\nu^{D,\mcp_{A_t},\ell,\mcr}(d_3,a,\Bar{w},\sigma_1(q))=1$ for all $d_3\geq d_1+d_2-|\delta_T(\mathcal{R}_{A_{t_a}})|$, where 
\[w_i:= z_{\sigma_1^{-1}(i)}+x_{\sigma_2^{-1}(i)}-\mathds{1}[\sigma_2^{-1}(i)\leq k_a]\cdot|\delta_{G[A_{t_a}]}(R'_{\sigma_2^{-1}(i)})|\]
for all $i\in[k]$.

Finally, we set $\nu^{D,\mcp_{A_t},\ell,\mcr}(d',a,\bar{w},q)=0$ 
for all $d'\in\krange$, $\bar{w}\in\lrange^k$ and $q\in[k]$ for which the algorithm has not set the $\nu^{D,\mcp_{A_t},\ell,\mcr}$ value so far. 
This algorithm is described in Algorithm \ref{algo updating nu values}. 

\begin{algorithm}
\begin{algorithmic}
\STATE 
\FORALL{$d\in\krange$, $\bar{z}\in\lrange^k$ and $q\in[k]$}
    \IF{$\mcr$ witnesses \nusness of $\nu^{D,\mcp_{A_t},\ell,\mcr}(d,0,\bar{z},q)$}
        \STATE Set $\nu^{D,\mcp_{A_t},\ell,\mcr}(d,0,\bar{z},q)=1$.
    \ELSE
        \STATE Set $\nu^{D,\mcp_{A_t},\ell,\mcr}(d,0,\bar{z},q)=0$.
    \ENDIF
\ENDFOR
\FORALL{$a=1,2,\ldots,|\mathcal{A}(\ell)|$}
    \FORALL{pair $((d_1,a-1,\Bar{z},q),(\mathcal{R}_{A_{t_a}},\Bar{x},d_2))\in\mathcal{Z}^a$}
        \FORALL{$(\sigma_1,\sigma_2)$ satisfying (N1) and (N2)}
            \STATE Set $\nu^{D,\mcp_{A_t},\ell,\mcr}(d_3,a,\Bar{w},\sigma_1(q))=1$ for all 
            $d_3\geq d_1+d_2-|\delta_T(\mathcal{R}_{A_{t_a}})|$, 
            \par where $w_i:= z_{\sigma_1^{-1}(i)}+x_{\sigma_2^{-1}(i)}-\mathds{1}[\sigma_2^{-1}(i)\leq k_a]\cdot|\delta_{G[A_{t_a}]}(R'_{\sigma_2^{-1}
        (i)})|$ for all $i\in[k]$.
        \ENDFOR
    \ENDFOR
    \STATE For all $d'\in\krange$, $\bar{w}\in\lrange^k$ and $q\in[k]$ with $\nu^{D,\mcp_{A_t},\ell,\mcr}(d',a,\bar{w},q)$ not yet set to 1, set $\nu^{D,\mcp_{A_t},\ell,\mcr}(d',a,\bar{w},q)=0$.
\ENDFOR
\end{algorithmic}
\caption{Computing function $\nu^{D,\mcp_{A_t},\ell,\mcr}$}\label{algo updating nu values}
\end{algorithm}
We first bound the run-time of the algorithm. The size of $\mathcal{Z}^a$ is $O(k^5)\lambda^{2k}$ for every $a\in[|\mathcal{A}(\ell)|]$. We note that $|\mathcal{A}(\ell)|\leq|V(\tau)|=\poly(n,\lambda,k)$. Thus, the run-time of the algorithm is
\[|\mathcal{A}(\ell)|\cdot O(k^5)\lambda^{2k}\cdot (k!)^2 O(k^2)=(\lambda k)^{O(k^2)}n^{O(1)}.\]

We now prove the correctness of Algorithm \ref{algo updating nu values}. We recall that we have already proved the base case. We now prove the induction step.

By induction hypothesis, if there exists a $k$-subpartition of $V(G_\leq(\ell,a-1))$ witnessing \nucness of $\nu^{D,\mcp_{A_t},\ell,\mcr}(d,a-1,\bar{w},q)$ for some $d\in\krange$, $\bar{w}\in\lrange^k$ and $q\in[k]$, then $\nu^{D,\mcp_{A_t},\ell,\mcr}(d,a-1,\bar{w},q)=1$. Furthermore, if $\nu^{D,\mcp_{A_t},\ell,\mcr}(d,a-1,\bar{w},q)=1$ for some $d\in\krange$, $\bar{w}\in\lrange^k$ and $q\in[k]$, then there exists a $k$-subpartition of $V(G_\leq(\ell,a-1))$ witnessing \nusness of $\nu^{D,\mcp_{A_t},\ell,\mcr}(d,a-1,\bar{w},q)$.

Suppose that there exists a $k$-subpartition $\mcq^0=(Q^0_1,\ldots,Q^0_k)$ of $G_\leq(\ell,a)$ that witnesses \nucness of $\nu^{D,\mcp_{A_t},\ell,\mcr}(d,a,\bar{w},q)$ for some $d\in\krange$, $\bar{w}\in\lrange^k$ and $q\in[k]$. We now show that the algorithm will correctly set $\nu(d,a,\bar{w},q)=1$.

Let $\mcq^1=(Q^1_1,\ldots,Q^1_k)$ be the restriction of $\mcq^0$ to $V(G_\leq(\ell,a))$ given by $Q^1_i:=Q^0_i\cap V(G_{\le}(\ell,a-1))$ for all $i\in[k]$. Let $\mcq^2=(Q^2_1,\ldots,Q^2_k)$ be the restriction of $\mcq^0$ to $V(G_{t_a})$ given by $Q^2_i\cap A_{t_a}=R'_i$ for all $i\in[k_a]$. It follows that $\mcq^1$ and $\mcq^2$ are both restrictions of $\Omega$.
Let $\theta:[k]\to[k]$ be the permutation of $[k]$ such that $Q^2_i=Q^0_{\theta(i)}\cap V(G_{t_a})$ for every $i\in[k]$.

We note that $\mcq^1$ witnesses \nucness of $\nu^{D,\mcp_{A_t},\ell,\mcr}(d_1,a-1,\bar{z},q)=1$, where $d_1:=|\delta_T(\mcq^1)|$ and $z_i:=|\delta_{G_\leq(\ell,a-1)}(Q^1_i)|$ for every $i\in[k]$. Furthermore, $\mcq^2$ witnesses \nucness of $f_{t_a}(\mathcal{R}_{A_{t_a}},\bar{x},d_2)=1$, where $x_i:=|\delta_{G_{t_a}}(\mcq^2_i)|$ for all $i\in[k]$ and $d_2:=|\delta_T(\mcq^2)|$. We note that $d_1,d_2\in\krange$ and $\bar{z},\bar{x}\in\lrange^k$ since the number of crossing edges in the corresponding subgraph is at most the number of crossing edges in the graph $G_\leq(\ell,a)$.

Hence, the pair $((d_1,a-1,\bar{z},q),(\mathcal{R}_{A_{t_a}},\bar{x},d_2))$ is present in $\mathcal{Z}^a$.
Consider the pair of permutations $(\sigma_1,\theta)$, where $\sigma_1$ is the identity permutation on $[k]$, i.e. $\sigma_1(i)=i$ for every $i\in[k]$. 

We will first prove that $(\sigma_1, \theta)$ is an eligible pair of permutations for the algorithm. We note that by definition of $\theta$, for all $i\in[k]$ such that $R_i\cap A_{t_a}\neq\emptyset$, the part $Q^0_{i}$ consists of $Q^2_{\theta^{-1}(i)}$ and $Q^1_i$, and $Q^1_i$ further contains $R_i$ and $R'_{\gamma^{-1}(i)}$. Since each part of $\mcq^0$ intersects at most one part of $\mathcal{R}_{A_{t_a}}$, we know that $R'_{\gamma^{-1}(i)}= Q^0_i\cap V(G_{t_a})= Q^2_{\theta^{-1}(i)}$. By definition of $\mcq^2$, we know that $\gamma^{-1}(i)=\theta^{-1}(i)$. This implies that $\theta(\gamma^{-1}(i))=i=\sigma_1(i)$ for all $i\in[k]$ such that $R_i\cap A_{t_a}\neq\emptyset$, which proves (N1). Since $\sigma_1$ is the identity permutation, it also satisfies (N2).

For this choice of permutations, we will show that our algorithm will set $\nu^{D,\mcp_{A_t},\ell,\mcr}(d,a,\bar{w},q)=1$. By compactness of the tree decomposition $(\tau,\chi)$, there are no edges between any two distinct members among $V(G_{t_1})\backslash A_{t_1},\ldots,V(G_{t_a})\backslash A_{t_a}$. Consequently, our algorithm indeed sets $\nu^{D,\mcp_{A_t},\ell,\mcr}(d,a,\bar{w},q)=1$ due to the following relationships: 
\begin{align*}
    d&\geq|\delta_T(\mcq^0)|= d_1+d_2-|\delta_T(\mathcal{R}_{A_{t_a}})|\text{, and}
    \\w_i&=|\delta_{G_\leq(\ell,a)}(Q^0_i)|
    \\&=|\delta_{G\leq(\ell,a-1)}(Q^1_i)|+|\delta_{G_{t_a}}(Q^2_{\theta^{-1}(i)})|-\mathds{1}[\theta^{-1}(i)\leq k_a]\cdot|\delta_{G[A_{t_a}]}(R'_{\theta^{-1}(i)})|
    \\&=z_i+x_{\theta^{-1}(i)}-\mathds{1}[\theta^{-1}(i)\leq k_a]\cdot|\delta_{G[A_{t_a}]}(R'_{\theta^{-1}(i)})|\quad\text{for all }i\in[k].
\end{align*}

Next suppose that $((d_1,a-1,\Bar{z},q),(\mathcal{R}_{A_{t_a}},\Bar{x},d_2))\in\mathcal{Z}^a$ and let $(\sigma_1,\sigma_2)$ be a pair of permutations satisfying (N1) and (N2). We will exhibit a $k$-subpartition $\mcq^0$ of $V(G_\leq(\ell,a))$ that witnesses \nusness of $\nu(d_3,a,\bar{w},\sigma_1(q))=1$ for all $d_3\ge d_1+d_2$ and $\bar{w}$ as described above. Let $\mcq^1=(Q^1_1,\ldots,Q^1_k)$ and $\mcq^2=(Q^2_1,\ldots,Q^2_k)$ be $k$-subpartitions of $V(G_\leq(\ell,a-1))$ and $V(G_{t_a})$ that witnesses \nusness of $\nu^{D,\mcp_{A_t},\ell,\mcr}(d_1,a-1,\Bar{z},q)$ and \fsness of $f_{t_a}(\mathcal{R}_{A_{t_a}},\Bar{x},d_2)$, respectively. Then, by definition, we have $Q^2_i\cap A_{t_a}=R'_i$ for all $i\in[k_a]$.

Consider the $k$-subpartition $\mcq^0:=(Q^0_1,\ldots,Q^0_k)$ of $V(G_\leq(\ell,a))$ obtained by setting $Q^0_i:=Q^1_{\sigma_1^{-1}(i)}\cup Q^2_{\sigma_2^{-1}(i)}$. Let $d_3\ge d_1+d_2$ and $ \Bar{w}$ be as described above. We will show that $\mcq^0$ witnesses \nusness of $\nu^{D,\mcp_{A_t},\ell,\mcr}(d_3,a,\bar{w},\sigma_1(q))$ by showing that it satisfies the five conditions in the definition of \nusness of $\nu^{D,\mcp_{A_t},\ell,\mcr}(d_3,a,\bar{w},\sigma_1(q))$.

We first show that $\mcq^0$ is indeed a $k$-subpartition of $V(G_\leq(\ell,a))$. It suffices to show that if a part $R_i$ of $\mathcal{R}$ has non-empty intersection with $A_{t_a}$, then the part containing $R_i$ in $\mcq^1$, i.e., the part $Q^1_i$, is in the same part in $\mcq^0$ with the part containing $R_i\cap A_{t_a}=R'_{\gamma^{-1}(i)}$ in $\mcq^2$, i.e. the part $Q^2_{\gamma^{-1}(i)}$. This is equivalent to requiring $\sigma_1(i)=\sigma_2(\gamma^{-1}(i))$ for all $i\in[k]$ with $R_i\cap A_{t_a}\neq\emptyset$, which is satisfied due to condition (N1). Additionally, for all $i\in[k]$ such that $R_i\neq\emptyset$, we have that $Q^0_i\cap (O\cup P_\ell)=Q^1_{\sigma_1^{-1}(i)}\cap (O\cup P_\ell)=R_i$ due to (N2). For all $i\in[k]$ such that $R_i=\emptyset$, we have $Q^0_i\cap (O\cup P_\ell)=Q^1_{\sigma_1^{-1}(i)}\cap (O\cup P_\ell)=\emptyset=R_i$, where  $Q^1_{\sigma_1^{-1}(i)}\cap (O\cup P_\ell)=\emptyset$ follows from condition (i) of $\mcq^1$ witnessing \nusness of $\nu^{D,\mcp_{A_t},\ell,\mcr}(d_1,a-1,\bar{z},q)$. Hence, this implies condition (i) needed to witness \nusness of $\nu^{D,\mcp_{A_t},\ell,\mcr}(d_3,a,\bar{w},\sigma_1(q))$.

By compactness of the tree decomposition $(\tau,\chi)$, we know that there is no edge between any two distinct members of $V(G_{t_1})\backslash A_{t_1},\ldots,V(G_{t_a})\backslash A_{t_a}$. This implies that for a part $Q^0_i$ of $\mcq^0$, if $Q^2_{\sigma_2^{-1}(i)}$, which is contained in $Q^0_i$, does not intersect $A_{t_a}$, then
\[|\delta_{G_\leq(\ell,a)}(Q^0_i)|=|\delta_{G_\leq(\ell,a-1)}(Q^0_{\sigma_1^{-1}(i)})|+|\delta_{G_{t_a}}(Q^0_{\sigma_2^{-1}(i)})|=z_{\sigma_1^{-1}(i)}+x_{\sigma_2^{-1}(i)}.\]
If $Q^2_{\sigma_2^{-1}(i)}$ intersects $A_{t_a}$, then
\[|\delta_{G_\leq(\ell,a)}(Q^0_i)|=z_{\sigma_1^{-1}(i)}+x_{\sigma_2^{-1}(i)}-|\delta_{G[A_{t_a}]}(R'_{\sigma_2^{-1}(i)})|.\]
Since $Q^2_{\sigma_2^{-1}(i)}$ intersects $A_{t_a}$ if and only if $\sigma_2^{-1}(i)\leq k_a$, combining these two equations, we get
\[|\delta_{G_\leq(\ell,a)}(Q^0_i)|=z_{\sigma_1^{-1}(i)}+x_{\sigma_2^{-1}(i)}-\mathds{1}[\sigma_2^{-1}(i)\leq k_a]\cdot|\delta_{G[A_{t_a}]}(R'_{\sigma_2^{-1}(i)})|=w_i.\]
The above holds for every $i\in [k]$, thus implying condition (ii) needed to witness \nusness of $\nu^{D,\mcp_{A_t},\ell,\mcr}(d_3,a,\bar{w},\sigma_1(q))$. The part in $\mcq^0$ that contains $O$ is $Q^0_{\sigma_1(q)}$, thus implying condition (iv) needed to witness \nusness of $\nu^{D,\mcp_{A_t},\ell,\mcr}(d_3,a,\bar{w},\sigma_1(q))$.

Again due to compactness of the tree decomposition $(\tau,\chi)$, we know that
\[|\delta_T(\mcq^0)|=|\delta_T(\mcq^1)|+|\delta_T(\mcq^2)|-|\delta_T(\mathcal{R}_{A_{t_a}})|\leq d_1+d_2-|\delta_T(\mathcal{R}_{A_{t_a}})|,\]
thus implying condition (iii) needed to witness \nusness of $\nu^{D,\mcp_{A_t},\ell,\mcr}(d_3,a,\bar{w},\sigma_1(q))$.
This shows that $\mcq^0$ indeed witnesses \nusness of $\nu^{D,\mcp_{A_t},\ell,\mcr}(d_3,a,\bar{w},\sigma_1(q))=1$, where $d_3$ and $\bar{w}$ are in the range given in our algorithm.
\end{proof}

\subsubsection{Proof of Lemma \ref{lemma:compute g}}
In this subsection, we complete the proof of Lemma \ref{lemma:compute g}. 

\begin{proof}[Proof of Lemma \ref{lemma:compute g}]
Let the inputs be as stated in Lemma \ref{lemma:compute g}. First we will consider the case where $t\in V(\tau)$ is a leaf node. By Lemma \ref{lemma:h in leaf nodes}, we can compute $h^{D,\mcp_{A_t}}:\prange\times\krange\times\lrange^k\times[k]\to\{0,1\}$ that is \hc and \hs in time $(\lambda k)^{O(k^2)}n^{O(1)}$. 

If $t\in V(\tau)$ is not a leaf node, then by Lemma \ref{lemma:h in non-leaf node}, we can compute a function  $h^{D,\mcp_{A_t}}:\prange\times\krange\times\lrange^k\times[k]\to\{0,1\}$ that is \hc and \hs in $(\lambda k)^{O(k^2)}n^{O(1)}$ time.

Therefore, in either case, to compute a desired function $h^{D,\mcp_{A_t}}:\prange\times\krange\times\lrange^k\times[k]\to\{0,1\}$ that is \hc and \hs takes total time $(\lambda k)^{O(k^2)}n^{O(1)}$. By Lemma \ref{lemma:g update from h}, we can compute a function $g^{D,\mcp_{A_t}}:\prange\times\krange\times\lrange^k\times[k]\to\{0,1\}$ that is \gc and \gs with an additional $(\lambda k)^{O(k^2)}n$ time. This completes the proof of Lemma \ref{lemma:compute g}.

\end{proof}

\subsection{Generating Nice Decompositions and Proof of Lemma \ref{lemma:nice decomp}}
\label{sec:generating-nice-decompositions}
In this section, we restate and prove Lemma \ref{lemma:nice decomp}.

\lemGeneratingNiceDecomps*

Our definition of nice decomposition closely resembles the definition of \cite{LSS}. Our way to generate nice decompositions and thereby prove Lemma \ref{lemma:nice decomp} will also closely resemble the proof approach of \cite{LSS}.
We need the following lemma. 
\begin{lemma} [Lemma 2.1 of \cite{LSS}] \label{lemma:2.1 LSS}
There exists an algorithm that takes as input a set $S$ and positive integers $s_1, s_2\le |S|$, and runs in time $O((s_1+s_2)^{O(s_1)}|S|^{O(1)})$ to return a family $\mathcal{S}\subseteq 2^S$ of size $O((s_1+s_2)^{O(s_1)}\log|S|)$ such that for every pair of disjoint subsets $X_1,X_2\subseteq S$ where $|X_1|\leq s_1$ and $|X_2|\leq s_2$, there exists a set $X\in\mathcal{S}$ with $X_1\subseteq X\subseteq S\backslash X_2$.
\end{lemma}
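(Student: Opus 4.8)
The plan is to prove Lemma~\ref{lemma:2.1 LSS} by a two-level construction: first compress the ground set $S$ down to a universe of size polynomial in $s_1+s_2$ by means of a perfect hash family, and then, on that small universe, simply take the family of all subsets of size at most $s_1$. Write $s:=s_1+s_2$ and $m:=s^2$. For the first level I would use a \emph{perfect hash family} $\mathcal{H}$ of functions $h\colon S\to[m]$ with the property that for every $W\subseteq S$ with $|W|\le s$ some $h\in\mathcal{H}$ is injective on $W$; it is a standard fact that such a family of size $|\mathcal{H}|=s^{O(1)}\log|S|$ exists and can be constructed in time $s^{O(1)}|S|^{O(1)}$. (Existence is immediate from the probabilistic method: a uniformly random $h\colon S\to[m]$ is injective on a fixed $W$ with probability at least $1-\binom{|W|}{2}/m\ge 1/2$, so a union bound over the at most $|S|^{s}$ candidate sets $W$ shows that $O(s\log|S|)$ random functions suffice; the matching deterministic construction is exactly what the splitter machinery of Naor--Schulman--Srinivasan, or classical perfect hash family constructions from error-correcting codes, provide.)

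For the second level put $\mathcal{T}:=\{\,T\subseteq[m] : |T|\le s_1\,\}$, so $|\mathcal{T}|\le (m+1)^{s_1}=s^{O(s_1)}$ and $\mathcal{T}$ can be enumerated within that time, and define
\[
\mathcal{S}:=\bigl\{\, h^{-1}(T) : h\in\mathcal{H},\ T\in\mathcal{T}\,\bigr\}.
\]
Then $|\mathcal{S}|\le|\mathcal{H}|\cdot|\mathcal{T}|=s^{O(s_1)}\log|S|=(s_1+s_2)^{O(s_1)}\log|S|$, and $\mathcal{S}$ is produced in time $|\mathcal{H}|\cdot|\mathcal{T}|\cdot O(|S|)$ plus the time to build $\mathcal{H}$ and enumerate $\mathcal{T}$, i.e.\ $(s_1+s_2)^{O(s_1)}|S|^{O(1)}$ altogether, matching the claimed bound.

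It remains to verify the covering property. Fix disjoint $X_1,X_2\subseteq S$ with $|X_1|\le s_1$ and $|X_2|\le s_2$, and set $W:=X_1\cup X_2$, so $|W|\le s$. Choose $h\in\mathcal{H}$ injective on $W$, let $T:=h(X_1)$, and note $|T|=|X_1|\le s_1$ by injectivity of $h$ on $X_1$, hence $T\in\mathcal{T}$ and $X:=h^{-1}(T)\in\mathcal{S}$. I claim $X_1\subseteq X\subseteq S\setminus X_2$. Every $x\in X_1$ has $h(x)\in h(X_1)=T$, so $x\in h^{-1}(T)=X$, giving $X_1\subseteq X$. If some $y\in X_2$ lay in $X$, then $h(y)\in T=h(X_1)$, so $h(y)=h(x)$ for some $x\in X_1$; since $x,y\in W$ and $h$ is injective on $W$, this forces $y=x\in X_1$, contradicting $X_1\cap X_2=\emptyset$. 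Hence $X\cap X_2=\emptyset$, and together with $X_1\subseteq X$ this gives $X_1\subseteq X\subseteq S\setminus X_2$, completing the proof.

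The only genuinely nontrivial ingredient---and hence the main obstacle---is the \emph{deterministic} construction of the perfect hash family $\mathcal{H}$ within the stated time and size: the naive method of conditional expectations would have to control $\binom{|S|}{s}$ bad events, which is far too many, so one must appeal to the known explicit/derandomized constructions. Everything else---the second-level family $\mathcal{T}$, the size and running-time bookkeeping, and the covering argument---is elementary. A more self-contained alternative that avoids the black box is to derandomize, for each of the $O(\log s)$ sampling rates $p=2^{-j}$, the process that includes each element of $S$ in $X$ independently with probability $p$: for the rate $p$ nearest $|X_1|/|W|$ one has $\Pr[\,X_1\subseteq X,\ X\cap X_2=\emptyset\,]\ge \binom{s}{s_1}^{-1}/\poly(s)$, which again yields a family of the desired size, though it runs into the same underlying derandomization difficulty.
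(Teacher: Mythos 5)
Your proof is correct and follows the standard derandomization route that underlies Lemma 2.1 of~\cite{LSS}: compress $S$ via a Naor--Schulman--Srinivasan $(|S|, s_1+s_2, (s_1+s_2)^2)$-splitter (i.e., a perfect hash family of size $(s_1+s_2)^{O(1)}\log|S|$, constructible in time $(s_1+s_2)^{O(1)}|S|^{O(1)}$), then enumerate all subsets of the hashed universe of size at most $s_1$ and pull them back. The covering argument (pick $h$ injective on $X_1\cup X_2$, take $T=h(X_1)$, $X=h^{-1}(T)$), the size bound $|\mathcal{H}|\cdot|\mathcal{T}|=(s_1+s_2)^{O(s_1)}\log|S|$, and the running-time accounting are all sound, and the one black-box ingredient you flag (the explicit splitter construction) is exactly the one used in the cited source.
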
 

Let the inputs be as stated in Lemma \ref{lemma:nice decomp}. We will start with notations followed by the algorithm with bound on the runtime and a proof of correctness.

\paragraph{Notations.} 
Let $A\subseteq E(\proj(T,\ct))$. We let $\mcr_A$ denote the partition of $V(\proj(T,\ct))$ whose parts are the connected components of $\proj(T,\ct)\backslash A$. 
Let $P$ and $P'$ be disjoint subsets of vertices of $\proj(T, \ct)$. We say that $P$ \emph{shares adhesion} with $P'$ if there exists some descendant $t'$ of $t$ (inclusive) such that $P\cap A_{t'}\neq\emptyset$ and $P'\cap A_{t'}\neq\emptyset$. When it is necessary to specify which adhesion is shared by $P$ and $P'$, we will say that \emph{$P$ shares adhesion with $P'$ via $A_{t'}$} if $t'$ is a descendant of $t$ (inclusive) such that $A_{t'}$ intersects both $P$ and $P'$. Moreover, we say that $P$ \emph{shares $G_t$-edge} with $P'$ if there is an edge in $G_t$ with one end-vertex in $P$ and another end-vertex in $P'$. 
For a partition $\mcr$ of $V(\proj(T,\ct))$, we define the graph $H(\mcr)$ whose vertices correspond to the parts of $\mcr$ and two vertices are adjacent in $H(\mcr)$ if and only if the corresponding parts in $\mcr$ share either an adhesion or a $G_t$-edge with each other.


\paragraph{Algorithm.} We now describe the algorithm. 
We initialize $\mcd$ to be the empty set.
The first step of our algorithm is to generate a family $\mathcal{C}$ using Lemma \ref{lemma:2.1 LSS} on set $E(\proj(T,\ct))$ with $s_1=\min\{|E(\proj(T,\ct))|,4k^2\}$ and $s_2=\min\{|E(\proj(T,\ct))|,(4k^2+1)\cdot 2((\lambda k+1)^5+4k^2+1)\}$.
The next step of our algorithm depends on the size of the bag $\chi(t)$.

\noindent
\emph{Case 1:} Suppose $|\ct|\leq (\lambda k+1)^5$. For each $C'\in\mathcal{C}$, if $\mcr_{C'}$ has at most $4k^2+1$ parts, then we add the following triple to $\mathcal{D}$:
\[D_{C'}:=\left(\mcp_{\chi(t)}:=(\ct),\mcq_{\chi(t)}:=\mcr_{C'}|_{\ct},O:=\emptyset\right),\]
where the partition $\mcr_{C'}|_{\ct}$ is the restriction of $\mcr_{C'}$ to $\ct$. If $\mcr_{C'}$ has more than $4k^2+1$ parts, then we do not add anything into $\mathcal{D}$.

\noindent
\emph{Case 2:} Suppose $|\ct|>(\lambda k+1)^5$. 
For each $C'\in\mathcal{C}$, 
we generate a family $\mathcal{S}_{C'}$ using Lemma \ref{lemma:2.1 LSS} on the set $\{P:P\text{ is a part of }\mcr_{C'}\}$ with $s_1=\min\{|\mcr_{C'}|,4k^2+1\}$ and $s_2=\min\{|\mcr_{C'}|,(4k^2+1)(\lambda^2 k^2+2\lambda k+4k^2+1)\}$.
For each $C'\in\mathcal{C}$ and each set $X'\in\mathcal{S}_{C'}$, we use the following steps to update $\mathcal{D}$. 
\begin{enumerate}
    \item 
    Starting from the partition $\mcq_0:=\mcr_{C'}$, we merge all parts that are not in $X'$ together to be one part called $O_1$. Call the resulting partition $\mcq_1$.
    \item  In the graph $H(\mcq_1)$, for each connected component of $H(\mcq_1)\backslash\{O_1\}$ that has more than $4k^2+1$ vertices, we merge the parts corresponding to the vertices of the component with $O_1$. Let $\mcq_2$ be the resulting partition and let $O_2$ be the part of $\mcq_2$ that contains $O_1$.
    \item In the graph $H(\mcq_2)$, for each connected component of $H(\mcq_2)\backslash\{O_2\}$, we merge the parts corresponding to vertices in the component. Let $\mcq_3$ be the resulting partition.
    \item Add the triple $D_{C',X'}:=(\mcp_{\ct},\mcq_{\ct},O)$ to $\mathcal{D}$, where $\mcp_{\ct},\mcq_{\ct}$ and $O$ are restrictions of $\mcq_3$, $\mcq_2$, and $O_2$ to $\ct$, respectively.
\end{enumerate}
This completes the description of our algorithm.

\paragraph{Run-time.} 
We now bound the run time of this algorithm. The family $\mathcal{C}$ can be computed in $O((4k^2+O(k^2(\lambda k+1)^5))^{4k^2}n^{O(1)})=(\lambda k)^{O(k^2)}n^{O(1)}$ time, and the size of $\mathcal{C}$ is $(\lambda k)^{O(k^2)}\log n$ using Lemma \ref{lemma:2.1 LSS}. In the next step, Case 1 runs in $n^{O(1)}$ time for each $C'\in\mathcal{C}$. Case 2 requires $O((4k^2+1+(4k^2)((\lambda k)^2+2\lambda k+4k^2+1))^{4k^2+1}n^{O(1)})=(\lambda k)^{O(k^2)}n^{O(1)}$ time to generate $\mathcal{S}_{C'}$ for each $C'\in\mathcal{C}$, and the size of each $\mathcal{S}_{C'}$ is bounded by $(\lambda k)^{O(k^2)}\log n$. The rest of the steps in Case 2 take $n^{O(1)}$ time. Therefore, the total time needed for this algorithm is
\[(\lambda k)^{O(k^2)}n^{O(1)}+ (\lambda k)^{O(k^2)}\log n\cdot (\lambda k)^{O(k^2)}n^{O(1)} =(\lambda k)^{O(k^2)}n^{O(1)}.\]

\paragraph{Correctness.} 
We now prove the correctness of the algorithm. We first show that the algorithm indeed outputs a family of nice decompositions.
\begin{claim}\label{claim D is family of nice decomp}
\begin{enumerate}
    \item If $|\ct|\leq (\lambda k+1)^5$, then the triple $D_{C'}$ is indeed a nice decomposition for every $C'\in\mathcal{C}$ for which $D_{C'}$ is defined.
    \item If $|\ct|> (\lambda k+1)^5$, then, for every $C'\in\mathcal{C}$ and $X'\in\mathcal{S}_{C'}$, the triple $D_{C',X'}$ is indeed a nice decomposition.
\end{enumerate}

\end{claim}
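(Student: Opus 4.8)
The plan is to verify the four defining conditions of a nice decomposition (Definition \ref{def nice decomp}) for each case separately, working directly from the construction. For Case 1 ($|\ct|\le(\lambda k+1)^5$), the triple is $D_{C'}=(\mcp_{\ct}=(\ct),\mcq_{\ct}=\mcr_{C'}|_{\ct},O=\emptyset)$, so $\mcp_{\ct}$ trivially has one part and $\mcq_{\ct}$ refines it; condition (i) holds because $O=\emptyset$ and $\mcp_{\ct}$ has one part; condition (ii) is exactly the defining test ``$\mcr_{C'}$ has at most $4k^2+1$ parts'' under which $D_{C'}$ is added (so one needs only to observe that the restriction to $\ct$ does not increase the number of parts); conditions (iii) and (iv) are vacuous since there is only one non-$O$ part. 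Thus Case 1 is essentially immediate.

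For Case 2 ($|\ct|>(\lambda k+1)^5$), I would trace through the three merging steps producing $\mcq_1,\mcq_2,\mcq_3$ from $\mcq_0=\mcr_{C'}$, and argue each condition in turn. Condition (i): after step 2, $O_2$ is a single part of $\mcq_2$, hence of $\mcq_3$ after step 3 (steps 3 only merges components of $H(\mcq_2)\setminus\{O_2\}$, leaving $O_2$ intact), so $O$ is a part of $\mcq_{\ct}$ — here one must also check $O$ survives as a part of the restriction to $\ct$, i.e. $O_2\cap\ct\neq\emptyset$ and the restriction doesn't split it. Condition (ii): each part $P$ of $\mcp_{\ct}$ other than $O$ corresponds, by step 3, to a single connected component of $H(\mcq_2)\setminus\{O_2\}$, and by step 2 each such component has at most $4k^2+1$ vertices, i.e. at most $4k^2+1$ parts of $\mcq_2$; since $\mcq_{\ct}$ is the restriction of $\mcq_2$, each part of $\mcp_{\ct}$ contains at most $4k^2+1$ parts of $\mcq_{\ct}$ (the part $O$ itself is a single part of $\mcq_{\ct}$). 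Condition (iii): if $P,P'$ are distinct parts of $\mcp_{\ct}$ other than $O$, they come from distinct components of $H(\mcq_2)\setminus\{O_2\}$, so by definition of $H(\cdot)$ there is no $G_t$-edge (in particular no edge) between them — this is where the definition of $H(\mcr)$, with adjacency via shared $G_t$-edge or shared adhesion, does the work. Condition (iv): for a child $t'$ of $t$ or $t$ itself, if $A_{t'}$ met two distinct parts $P,P'$ of $\mcp_{\ct}$ other than $O$, then the corresponding $\mcq_2$-parts would share adhesion via $A_{t'}$, making them adjacent in $H(\mcq_2)$ and hence in the same component after step 3 — contradicting that $P\ne P'$.

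The main obstacle I anticipate is condition (iv) in Case 2, and more generally the bookkeeping of how adhesions behave under restriction to $\ct$: a part $P$ of $\mcq_{\ct}$ is a restriction of a part $\hat P$ of $\mcq_2$, and I need that ``$A_{t'}$ intersects $P$'' is controlled by ``$A_{t'}$ intersects $\hat P$'' — this requires $A_{t'}\subseteq\ct$ (or at least $A_{t'}\cap V(\proj(T,\ct))$ behaving well), which should follow from $A_{t'}$ being an adhesion of a descendant of $t$ together with the properties of the tree decomposition and of $\proj(T,\ct)$. I would isolate this as a small sub-observation before running the four-condition check. The parts concerning $H(\mcq_2)$ and its components (conditions (ii), (iii), (iv)) are then routine consequences of the construction once the restriction subtlety is pinned down; I expect step 2's bound of $4k^2+1$ on component size to be the only quantitative input, and it is built into the algorithm by fiat.
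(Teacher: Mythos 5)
Your Case 1 argument matches the paper's. For Case 2 your argument handles the sub-case $O=O_2\cap\ct\neq\emptyset$ in essentially the same way the paper does: the restriction of $O_2$ to $\ct$ survives as a part of $\mcq_{\ct}$, the $\le 4k^2{+}1$ bound on parts comes from step 2, and conditions (iii), (iv) fall out of the ``star'' structure of $H(\mcq_3)$ centered at $O_2$. Your observation that $A_{t'}\subseteq\ct$ for $t'$ a child of $t$ (or $t$ itself) is the right way to pin down the restriction bookkeeping.

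There is, however, a genuine gap that you flag but do not resolve: the sub-case $O_2\cap\ct=\emptyset$, i.e.\ $O=\emptyset$ in the constructed triple. You note that ``one must also check $O$ survives as a part of the restriction, i.e.\ $O_2\cap\ct\neq\emptyset$,'' but this is not a side condition that always holds --- the construction can and does produce triples with $O=\emptyset$. In that case condition~(i) of Definition~\ref{def nice decomp} demands a completely different thing, namely that $\mcp_{\ct}$ has only one part, and your condition~(ii) argument (``the part $O$ itself is a single part of $\mcq_{\ct}$'') also breaks. The paper handles this separately: it considers the subgraph $H'_i$ of $H(\mcq_i)$ induced by vertices whose parts meet $\ct$, uses \emph{compactness} of the tree decomposition to argue $G_t$ is connected and hence $H'_0$ is connected, shows $H'_0=H'_1$ (because when $O_2\cap\ct=\emptyset$ none of the parts merged into $O_1$ meet $\ct$), and then infers that $H'_0$ must have at most $4k^2{+}1$ vertices (otherwise step~2 would have absorbed it into $O_2$, contradicting $O_2\cap\ct=\emptyset$). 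Hence $H'_0=H'_1=H'_2$ lies in a single component of $H(\mcq_2)\setminus\{O_2\}$, giving exactly one non-$O$ part of $\mcq_3$ that meets $\ct$, which yields condition~(i) (in its $O=\emptyset$ form) and the remaining conditions. This connectivity argument via compactness is the missing idea; without it, the $O=\emptyset$ sub-case is not covered and the claim is unproven.
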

\begin{proof}
Suppose $|\ct|\leq(\lambda k+1)^5$ and $\mcr_{C'}$ has at most $4k^2+1$ parts. Then $\mcr_{C'}|_{\ct}$ also has at most $4k^2+1$ parts, which verifies condition 2 in Definition \ref{def nice decomp}. The remaining conditions in the definition hold immediately, and hence, $D_{C'}$ is indeed a nice decomposition.

We henceforth consider the case where $|\ct|>(\lambda k+1)^5$. 
Let $C'\in\mathcal{C}$ be fixed and $X'\in\mathcal{S}_{C'}$, yielding a triple $D_{C',X'}:=(\mcp_{\ct},\mcq_{\ct},O)$ to be included into $\mathcal{D}$. 
We will prove that $D_{C', X'}$ satisfies the conditions in Definition \ref{def nice decomp}.
We note that $\mcq_2$ refines $\mcq_3$, so $\mcq_{\ct}$ refines $\mcp_{\ct}$. 

For $i=0,1,2$, let $H'_i$ denote the subgraph of $H(\mcq_i)$ induced by vertices whose corresponding parts intersect $\ct$. By compactness of the tree decomposition $(\tau,\chi)$, the subgraph $G_t$ is connected. Therefore, there exists a path in $G_t$ between every pair of vertices of $\ct$. This implies that there exists a path in $G_t$ between every pair of parts of $\mcq_0$. Hence, the graph $H'_0$ is connected.

Let us first consider the case where $O=O_2\cap\ct=\emptyset$. In this case the parts merged to become $O_1$ do not intersect $\ct$, and hence $H'_0=H'_1$. If $H'_0=H'_1$ has more than $4k^2+1$ vertices, then $H'_1$ will be merged into $O_2$. Therefore, we conclude that $H'_0=H'_1$ has no more than $4k^2+1$ vertices and $H'_0=H'_1=H'_2$.  In step 3, $H'_2$ is in one component of $H(\mcq_2)\backslash\{O_2\}$. Therefore, in $\mcq_3$, only one part intersects $\ct$ and this part consists of at most $4k^2+1$ parts in $\mcq_2$ intersecting $\ct$. This implies $\mcp_{\ct}$ has only one part, and this part contains at most $4k^2+1$ parts of $\mcq_{\ct}$. The rest of the conditions in the definition of nice decomposition hold immediately.

Next we consider the case where $O=O_2\cap\ct\neq\emptyset$. Since $O_2$ is a part of $\mcq_2$, we know that $O$ is a part of $\mcq_{\ct}$, which proves condition (i) of the definition of nice decomposition. Every part of $\mcq_3$ consists of at most $4k^2+1$ parts of $\mcq_2$ due to step 3, so every part of $\mcp_{\ct}$ consists of at most $4k^2+1$ parts of $\mcq_{\ct}$, proving condition (ii). For two distinct parts $P$ and $P'$ of $\mcq_3$, $P$ and $P'$ share either an adhesion or a $G_t$-edge only if the corresponding vertices in $H(\mcq_3)$ are adjacent. In graph $H(\mcq_3)$, each edge has one end-vertex being $O_2$. This implies that one of $P$ and $P'$ has to be $O_2$. Therefore, no two parts other than $O$ in $\mcp_{\ct}$ share an adhesion or a $G_t$-edge, thus proving conditions (iii) and (iv) in the definition of nice decomposition. This completes the proof of Claim \ref{claim D is family of nice decomp}. 

\end{proof}

The following lemma completes the proof of Lemma \ref{lemma:nice decomp}. 

\begin{lemma}\label{lemma:D contains refinement}
If a $k$-subpartition $\Pi$ of $V(G_t)$ witnesses \fcness of $f_t(\mcp_{A_t},\bar{x},d)$ for some $\mcp_{A_t}\in\mcf^{A_t}$, $\bar{x}\in\lrange^k$ and $d\in\krange$, then $\mathcal{D}$ contains a nice decomposition $D=(\mcq_{\ct},\mcq_{\ct},O)$ such that $\mcq_{\ct}$ refines a restriction of $\Pi$ to $\ct$. 
\end{lemma}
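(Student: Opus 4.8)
The plan is to trace through the algorithm and show that the correct choices of $C'$ (and, in Case 2, $X'$) produced by the two applications of Lemma \ref{lemma:2.1 LSS} yield a nice decomposition whose fine partition $\mcq_{\ct}$ refines the restriction of $\Pi$ to $\ct$. Let $\Pi$ witness \fcness of $f_t(\mcp_{A_t},\bar{x},d)$; in particular $\Pi$ is a restriction of the fixed optimum $\Omega$ to $V(G_t)$ and $|\delta_T(\Pi)|\le 2k^2$. Let $\Pi_{\ct}$ denote the restriction of $\Pi$ to $\ct$. First I would produce, via Lemma \ref{lemma proj cut size}(1) applied with $X=\ct$, a partition $\widetilde\Pi$ of $V(\proj(T,\ct))$ with $|\delta_{\proj(T,\ct)}(\widetilde\Pi)|\le 2k^2$ whose restriction to $\ct$ is $\Pi_{\ct}$; set $A^\star:=\delta_{\proj(T,\ct)}(\widetilde\Pi)\subseteq E(\proj(T,\ct))$, so $|A^\star|\le 2k^2\le 4k^2$. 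The partition $\mcr_{A^\star}$ (connected components of $\proj(T,\ct)\setminus A^\star$) then refines $\widetilde\Pi$ and hence $\mcr_{A^\star}|_{\ct}$ refines $\Pi_{\ct}$.

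Next I would invoke the covering guarantee of Lemma \ref{lemma:2.1 LSS} used to build $\mathcal{C}$: taking $X_1:=A^\star$ (of size $\le s_1$) and $X_2:=\emptyset$, there exists $C'\in\mathcal{C}$ with $A^\star\subseteq C'\subseteq E(\proj(T,\ct))$. Since removing more edges only refines, $\mcr_{C'}$ refines $\mcr_{A^\star}$ and therefore $\mcr_{C'}|_{\ct}$ refines $\Pi_{\ct}$. In Case 1 ($|\ct|\le(\lambda k+1)^5$) I must argue that this $C'$ actually contributes a triple, i.e.\ that $\mcr_{C'}$ has at most $4k^2+1$ parts. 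This is the point where the specific choice of $s_1,s_2$ for $\mathcal{C}$ matters: I would show $\mcr_{C'}$ has few parts by using that $\mcr_{A^\star}$ already has $\le 2k^2+1$ parts together with an edge-unbreakability argument — the bag $\ct$ is $((\lambda k+1)^5,\lambda k)$-edge-unbreakable, and one controls how many extra parts the enlargement from $A^\star$ to $C'$ can create (here $s_2$ is chosen large enough that the cover set $C'$ can be taken to avoid the at most $(4k^2+1)\cdot 2((\lambda k+1)^5+4k^2+1)$ "bad" edges whose removal would shatter a part of $\mcr_{A^\star}$ into many pieces inside $\ct$). Then $D_{C'}=((\ct),\mcr_{C'}|_{\ct},\emptyset)$ is in $\mcd$, is a nice decomposition by Claim \ref{claim D is family of nice decomp}, and $\mcq_{\ct}=\mcr_{C'}|_{\ct}$ refines $\Pi_{\ct}$, as required.

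For Case 2 ($|\ct|>(\lambda k+1)^5$) the remaining work is to choose $X'\in\mathcal{S}_{C'}$ so that steps 1--4 of the algorithm, which only merge parts, never merge two parts of $\mcr_{C'}|_{\ct}$ that lie in different parts of $\Pi_{\ct}$. Fix the part $Z$ of $\Pi$ that will play the role of the "outside" part $O$: because $\ct$ is edge-unbreakable and $|\delta_{G_t}(Z)|$ is bounded by $\bar x$-entries (hence by $\lambda$), exactly one part of $\Pi$ can be large inside $\ct$ — I would take $Z$ to be that part (or handle the degenerate all-small case, which forces $O=\emptyset$). Let $X_1$ be the (at most $4k^2+1$) parts of $\mcr_{C'}$ contained in parts of $\Pi$ other than $Z$, and let $X_2$ be the set of parts of $\mcr_{C'}$ that are "dangerous", meaning they either share an adhesion or a $G_t$-edge with one of the parts in $X_1$; a counting argument (each part in $X_1$ has cut value $\le\lambda k$ into $G_t$ and touches $\le\lambda k$ adhesion-vertices, and there are $\le 4k^2+1$ of them) bounds $|X_2|\le (4k^2+1)(\lambda^2k^2+2\lambda k+4k^2+1)=s_2$. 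By Lemma \ref{lemma:2.1 LSS} there is $X'\in\mathcal{S}_{C'}$ with $X_1\subseteq X'\subseteq\{\text{parts of }\mcr_{C'}\}\setminus X_2$. I then check that with this $X'$: (i) in step 1 the parts of $X_1$ survive (they are in $X'$) and $O_1$ absorbs only parts disjoint from $X_1$; (ii) in steps 2--3 the components of $H(\cdot)\setminus\{O_i\}$ that get merged cannot contain a vertex of $X_1$, because a merge would require a path of shared-adhesion/shared-$G_t$-edge relations reaching $X_1$, whose first such relation would go through a part in $X_2\subseteq$ complement of $X'$ — but those parts have already been swallowed into $O_1$, so no edge of $H(\mcq_1)$ leaves $X_1$ toward the rest except into $O_1$. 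Hence each part of $X_1$ ends up as its own part of $\mcq_3$, and each such part lies inside a single part of $\Pi$; consequently $\mcq_2|_{\ct}=\mcq_{\ct}$ refines $\Pi_{\ct}$. Finally I would confirm that $O=O_2\cap\ct$ is contained in $Z$, so it is consistent with "$\mcq_{\ct}$ refines a restriction of $\Pi$ to $\ct$" with $O$ being (part of) the $Z$-block. Together with Claim \ref{claim D is family of nice decomp}, this gives the desired nice decomposition in $\mcd$.

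The main obstacle I anticipate is the bookkeeping in Case 2 step (ii)/(iii): making precise that the greedy component-merging in $H(\mcq_1)$ and $H(\mcq_2)$ cannot accidentally pull a part of $X_1$ into $O_2$. The key structural fact to nail down is that, after step 1, every $H(\mcq_1)$-neighbor of a part in $X_1$ is either another part of $X_1$ or the aggregated part $O_1$ — which is exactly what the choice $X'\cap X_2=\emptyset$ buys us, since any part of $\mcr_{C'}$ that both sits outside $\Pi$'s $Z$-block's siblings and is adjacent (in the adhesion/$G_t$-edge sense) to $X_1$ was placed in $X_2$ and therefore merged into $O_1$. Getting the counting bound on $|X_2|$ to match the prescribed $s_2$, using only edge-unbreakability of $\ct$ and adhesion-size $\le\lambda k$, is the other place where care is needed.
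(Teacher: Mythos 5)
Your proposal traces the paper's own argument quite closely: project $\Pi$ down to $V(\proj(T,\ct))$ via Lemma~\ref{lemma proj cut size}, use the first application of Lemma~\ref{lemma:2.1 LSS} to capture the projected cut set $\tilde C$ and get $C'\in\mathcal C$ with $\mcr_{C'}|_{\ct}$ refining $\Pi_{\ct}$, then for $|\ct|>(\lambda k+1)^5$ use the second application of Lemma~\ref{lemma:2.1 LSS} to separate the ``small'' parts from their dangerous neighbors so that steps~1--3 of the algorithm never merge a small part into $O$. This is the right structure and matches the paper (Claims~\ref{claim C size bound}, \ref{claim:part of tpi' size bound}, \ref{claim:SE size bound}, Observation~\ref{obs:RC}, Claim~\ref{claim:bounding nbrs in R}, and Claims~\ref{claim:D yields refinement small case}/\ref{claim D contains refinement}).

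Two remarks. First, your initial statement that the first covering uses $X_2:=\emptyset$ is incorrect as written and you partially walk it back only when you reach Case~1; in fact a nontrivial $X_2$ is essential in \emph{both} cases. The paper takes $X_2:=\tilde S_E$, the set of $\proj(T,\ct)$-edges whose both endpoints lie in a part of $\tpi_{\ct}$ that meets $S_{\ct}$ (the union of ``small'' parts of $\Pi_{\ct}$). Demanding $C'\cap\tilde S_E=\emptyset$ (alongside $\tilde C\subseteq C'$) is what forces $\mcr_{C'}$ to agree with $\tpi'_{\ct}$ on $\tilde S$ (Observation~\ref{obs:RC}), which is exactly what guarantees that the parts you later call ``$X_1$'' number at most $4k^2+1$; without it, $C'$ could shatter the small parts arbitrarily and your Case~2 bound on $|X_1|$ would have no justification. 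Your stated count $(4k^2+1)\cdot 2((\lambda k+1)^5+4k^2+1)$ is the right magnitude for $|\tilde S_E|$, so you have the idea, but the proposal should make the identification $X_2=\tilde S_E$ explicit and note that Case~2 relies on it as well.

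Second, a small simplification you use that the paper does not: since $\Pi$ is a restriction of the global optimum $\Omega$ (by \fcness\ condition~(iv)) and $\Omega$ itself $2k^2$-respects $T$, you can apply Lemma~\ref{lemma proj cut size}(1) directly to $\Omega$ to obtain $|\delta_{\proj(T,\ct)}(\tpi_{\ct})|\le 2k^2$. The paper instead splices $\Pi$ with the $T$-feasibility witness $\mathcal B$ of $\mcp_{A_t}$ to produce a partition $\mathcal B'$ of $V$, yielding a $4k^2$ bound (Claim~\ref{claim C size bound}); both give figures within the algorithm's parameter $s_1=4k^2$, so your route works and is cleaner. Finally, a cosmetic slip: each part of your $X_1$ survives as its own part of $\mcq_2$ (not $\mcq_3$ --- step~3 may merge several of them), but since $\mcq_{\ct}$ is read off from $\mcq_2$, your conclusion that $\mcq_{\ct}$ refines $\Pi_{\ct}$ is still correct.
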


\begin{proof}
We begin with some notations. Let $\Pi$ be a $k$-subpartition of $V(G_t)$ that witnesses \fcness of $f_t(\mcp_{A_t},\bar{x},d)$ for some $\mcp_{A_t}\in\mcf^{A_t}$, $\bar{x}\in\lrange^k$, and $d\in\krange$. 
Then, $\Pi$ is a restriction of $\Omega$ to $V(G_t)$ and $|\delta_T(\Pi)|\leq d\leq 2k^2$.
Let $\Pi_{\ct}$ 
denote a partition of $\ct$ that is a restriction of $\Pi$ to $\chi(t)$. 
By definition, the partition $\mcp_{A_t}$ is a restriction of $\Pi$ to $A_t$. 

\begin{claim}\label{claim C size bound}
There exists a partition $\Tilde{\Pi}_{\chi(t)}$ of $\proj(T,\chi(t))$ such that $\Pi_{\ct}$ is a restriction of $\Tilde{\Pi}_{\chi(t)}$ to $\ct$ and $|\delta_{\proj(T,\ct)}(\tpi_{\ct})|\leq 4k^2$.
\end{claim}
\begin{proof}

Since $\mcp_{A_t}$ is $T$-feasible, there exists a partition $\mathcal{B}$ of $V$ that $2k^2$-respects $T$ and its restriction to $A_t$ is $\mcp_{A_t}$. Without loss of generality we may assume that $\mathcal{B}$ has exactly $k'$ parts $B_1,\ldots,B_{k'}$ (recall that $\mathcal{P}_{A_t}=(\Tilde{P}_1,\ldots,\Tilde{P}_{k'})$) such that $B_i\cap A_t=\tilde{P}_i$ for all $i\in[k']$. Moreover, we may assume that $\Pi$ has exactly $k'$ parts $\pi_1,\ldots,\pi_{k'}$ such that $\pi_i\cap A_t=\tilde{P}_i$ for all $i\in[k']$. With these two partitions, using compactness of the tree decomposition $(\tau,\chi)$, we can define a partition $\mathcal{B}=(B'_1,\ldots,B'_{k'})$ of $V$ by $B'_i=(B_i\backslash V(G_t))\cup \pi_i$ for each $i\in[k']$. The restriction of $\mathcal{B}'$ to $\ct$ is $\Pi_{\ct}$. Moreover, we have $|\delta_T(\mathcal{B}')|\leq |\delta_T(\mathcal{B})|+|\delta_T(\Pi)|\leq 4k^2$. 

Now by Lemma \ref{lemma proj cut size}, given partition $\mathcal{B}'$ of $V$, there exists a partition $\tpi_{\ct}$ of $\proj(T,\ct)$ such that its restriction to $\ct$ is $\Pi_{\ct}$ and $|\delta_{\proj(T,\ct)}(\tpi_{\ct})|\leq |\delta_T(\mathcal{B}')|\leq 4k^2$. This completes the proof of Claim \ref{claim C size bound}. 
\end{proof}

From now on we will fix $\tpi_{\ct}$ to be a partition of $V(\proj(T,\ct))$ that satisfies the conditions of Claim \ref{claim C size bound}. Let $\Tilde{C}$ denote the set of edges $\delta_{\proj(T,\ct)}(\Tilde{\Pi}_{\chi(t)})$, which is a subset of $E(\proj(T,\chi(t)))$. It follows that $|\tilde{C}|\leq 4k^2$.

Removing $\Tilde{C}$ from $\proj(T,\chi(t))$ yields a partition of $V(\proj(T,\chi(t)))$ whose parts are connected components of $\proj(T,\chi(t))\backslash C$. We denote this partition as $\tpi'_{\ct}$, and observe that $\tpi'_{\ct}$ is necessarily a refinement of $\tpi_{\ct}$. Moreover, we may assume that each part of $\tpi'_{\ct}$ intersects $\ct$. If some parts of $\tpi'_{\ct}$ do not intersect $\ct$, then there exist two parts $P$ and $P'$ of $\tpi'_{\ct}$ such that $P$ intersects $\ct$ while $P'$ does not, and there is an edge in $\Tilde{C}\subseteq E(\proj(T, \ct))$ with one end-vertex in $P$ and the other end-vertex in $P'$. We can modify $\tpi_{\ct}$ so that $P'$ belongs to the part containing $P$. After such modification, the size of $\delta_{\proj(T,\ct)}(\tpi_{\ct})$ does not increase (because grouping $P'$ into the part containing $P$ does not require us to cut any edge not in $\tilde{C}$) and $\Pi_{\ct}$ is still a restriction of $\tpi_{\ct}$ to $\ct$. By doing this repeatedly, we may assume that each part of $\tpi'_{\ct}$ intersects $\ct$ while the size of $\tilde{C}$ does not increase.

Since $\Omega$ is an optimum \mmkc in $G$, we have that $|\delta_G(\Omega)|\leq k\OPT \leq k\lambda$. Since $\Pi_{\ct}$ is a restriction of $\Omega$ to $\ct$, by the edge-unbreakability property of $\ct$, we know that at most one part of $\Pi_{\ct}$ has size exceeding $(\lambda k+1)^5$. Now, let $S_{\ct}$ denote the union of the parts of $\Pi_{\ct}$ whose sizes are at most $(\lambda k+1)^5$, i.e.,
\[S_{\ct}:=\bigcup_{\substack{\pi\text{ is a part of }\Pi_{\ct},\\|\pi|\leq (\lambda k+1)^5}}\pi.\]
Furthermore, we use $\Tilde{S}$ to denote the union of parts of $\tpi_{\ct}$ whose intersection with $S_{\ct}$ is non-empty. Each part of $\tpi_{\ct}$ included in $\Tilde{S}$ induces a set of edges in $\proj(T,\ct)$ whose both end-vertices are in this part. We use $\Tilde{S}_E$ to denote the union of such edges, i.e.,
\[\Tilde{S}:=\bigcup_{\substack{\pi\text{ is a part of }\tpi_{\ct},\\\pi\cap S_{\ct}\neq\emptyset}}\pi \text{ and}\]
\[\Tilde{S}_E:=\bigcup_{\substack{\pi\text{ is a part of }\tpi_{\ct},\\\pi\cap S_{\ct}\neq\emptyset}}E(\proj(T,\ct)[\pi]).\]
We remark that $\tilde{S}_E\cap\tilde{C}=\emptyset$ because every edge in $\tilde{C}$ has end-vertices in different parts of $\tpi_{\ct}$.
For convenience, we summarize the nature of notations introduced here in Table \ref{table:notations}.

\begin{table}[ht]
    \centering
    \begin{tabular}{c|c}
      $\Pi$ & $k$-subpartition of $V(G_t)$  \\
      \hline
        $\Pi_{\ct}$ & Partition of $\ct$\\
        \hline
        $\tpi_{\ct}$ & Partition of $V(\proj(T,\chi(t)))$\\
        \hline
        $\tpi'_{\ct}$ & Partition of $V(\proj(T,\chi(t)))$\\
        \hline
        $\Tilde{C}$ & Subset of $E(\proj(T,\ct))$\\
        \hline
        $S_{\ct}$ & Subset of $\ct$\\
        \hline
        $\Tilde{S}$ & Subset of $V(\proj(T,\ct))$\\
        \hline
        $\Tilde{S}_E$ & Subset of $E(\proj(T,\ct))$
    \end{tabular}
    \caption{Nature of notations for the proof of Lemma \ref{lemma:D contains refinement}.}
    \label{table:notations}
\end{table}

Now that we have introduced these notations and definitions, 
our next goal is to bound the size of $\tilde{S}_E$. We will use the following claim. 
\begin{claim}\label{claim:part of tpi' size bound}
For every part $P$ of $\tpi'_{\ct}$, we have that $|P|\leq 2(|P\cap\ct|+4k^2+1)$.
\end{claim}
\begin{proof}
If $P$ is a part of $\tpi'_{\ct}$, then by definition of $\tpi'_{\ct}$, the subgraph induced by $P$ in $\proj(T,\ct)$, i.e., $\proj(T,\ct)[P]$, is a connected subtree of $\proj(T,\ct)$. To bound the size of $P$, we will bound the sizes of the following types of vertices:
\begin{enumerate}
    \item vertices of $P$ with degree at least 3 in $\proj(T,\ct)[P]$,
    \item vertices of $P$ that are not in $\ct$ and have degree at most $2$ in the subtree $\proj(T,\ct)[P]$, and
    \item vertices in $P\cap\ct$.
\end{enumerate}
These three types together form a superset of the vertices of $P$.

In order to bound the number of Type 2 vertices, we note that these vertices are in $V(\proj(T,\ct))\backslash\ct$, which means they are of degree at least 3 in $\proj(T,\ct)$. For each Type 2 vertex, some edge in $\proj(T,\ct)$ adjacent to it which connects $P$ to some other component is not included in $\proj(T,\ct)[P]$, resulting it to have degree at most 2 in $\proj(T,\ct)[P]$. Since each part of $\tpi'_{\ct}$ induces a connected subtree of $\proj(T,\ct)$, each Type 2 vertex serves to connect $P$ to a unique part of $\tpi'_{\ct}$. Here $\tpi'_{\ct}$ has at most $4k^2+1$ parts as $|\tilde{C}|\leq 4k^2$, and hence the number of Type 2 vertices is at most $4k^2+1$.

In order to bound the number of Type 1 vertices, we will first bound the number of leaves of $\proj(T,\ct)[P]$. Leaves of $\proj(T,\ct)[P]$ are either in $\ct$ or not in $\ct$. The number of leaves in $\ct$ is at most $|P\cap\ct|$. The leaves of $\proj(T,\ct)[P]$ that are not in $\ct$ are Type 2 vertices. So the total number of leaves of $\proj(T,\ct)[P]$ is at most $|P\cap\ct|+4k^2+1$.

Next we bound the number of Type 1 vertices. We have the following inequality, where all degrees are with respect to the subgraph $\proj(T,\ct)[P]$:
\begin{align*}
    2|P|-2&=2|E(\proj(T,\ct)[P])|=\sum_{v\in P}\deg(v)
    \\&\geq |\{v\in P:\deg(v)=1\}|+2|\{v\in P:\deg(v)=2\}|+3|\{v\in P:\deg(v)\geq 3\}|
    \\&=2|P|-|\{v\in P:\deg(v)=1\}|+|\{v\in P:\deg(v)\geq 3\}|.
\end{align*}
where the last equation holds due to the fact that $|\{v\in P:\deg(v)=1\}|+|\{v\in P:\deg(v)=2\}|+|\{v\in P:\deg(v)\geq 3\}|=|P|$.
This implies that the number of Type 1 vertices can be bounded by the following relationship:
\[|\{v\in P:\deg(v)\geq 3\}|\leq |\{v\in P:\deg(v)=1\}|\leq |P\cap\ct|+4k^2+1.\]

The number of Type 3 vertices is exactly $|P\cap\ct|$, and hence the size of $P$ is at most 
\[(4k^2+1)+(|P\cap\ct|+4k^2+1)+|P\cap\ct|=2(|P\cap\ct|+4k^2+1).\]
\end{proof}

\begin{claim}\label{claim:SE size bound}
We have that $|\tilde{S}|,|\tilde{S}_E|\leq (4k^2+1)\cdot 2((\lambda k+1)^5+4k^2+1)$. 
\end{claim}
\begin{proof}
We will start by bounding the size of $\tilde{S}$. Let us fix $\pi$ to be a part of $\tpi_{\ct}$ such that $\pi\cap S_{\ct}\neq\emptyset$, and $\pi'$ to be a part of $\tpi'_{\ct}$ contained in $\pi$. Here, we notice that $\pi\cap\ct$ is a part of $\Pi_{\ct}$ and $|\pi\cap\ct|\leq (\lambda k+1)^5$. Then by Claim \ref{claim:part of tpi' size bound}, we know that
\[|\pi'|\leq 2(|\pi'\cap\ct|+4k^2+1)\leq 2(|\pi\cap\ct|+4k^2+1)\leq 2((\lambda k+1)^5+4k^2+1).\]
The set $\tilde{S}$ is the union of all such parts $\pi'$, i.e., it is the union of parts $\pi'$ of $\tpi'_{\ct}$ where $\pi'$ is contained in some part $\pi$ of $\tpi_{\ct}$ such that $\pi \cap S_{\ct}\neq \emptyset$. There are at most $4k^2+1$ such candidates for $\pi'$ because $\tpi'_{\ct}$ has at most $4k^2+1$ parts. Hence, 
\[|\tilde{S}|\leq(4k^2+1)\cdot 2((\lambda k+1)^5+4k^2+1).\]
To bound the size of $\tilde{S}_E$, we observe that $\tilde{S}_E$ forms a forest over the vertex set $\tilde{S}$, and thus $|\tilde{S}_E|\leq |\tilde{S}|\leq (4k^2+1)\cdot 2((\lambda k+1)^5+4k^2+1)$.
\end{proof}

The first step of our algorithm generates a family $\mathcal{C}$ using Lemma \ref{lemma:2.1 LSS} on the set $E(\proj(T,\ct))$ with $s_1=\min\{|E(\proj(T,\ct))|,4k^2\}$ and $s_2=\min\{|E(\proj(T,\ct))|,(4k^2+1)\cdot 2((\lambda k+1)^5+4k^2+1)\}$. By Claim \ref{claim:SE size bound} and Lemma \ref{lemma:2.1 LSS}, this implies that $\mathcal{C}$ contains a set $C$ such that $\tilde{C}\subseteq C$ and $C\cap \tilde{S}_E=\emptyset$. For the rest of the proof, let us fix $C\in\mathcal{C}$ such that $\tilde{C}\subseteq C$ and $C\cap\tilde{S}_E=\emptyset$. We now introduce some more more notations and prove certain useful properties of $C$.

Here, we note that the partition $\mcr_C$ refines $\tpi'_{\ct}$ because $\tilde{C}\subseteq C$. We use $L_C$ to denote the set of parts of $\mcr_C$ that are contained in $\tilde{S}$. 
We use $N_C$ to denote the set of parts of $\mcr_C$ outside $\Tilde{S}$ that either share adhesion with some part in $L_C$ or share $G_t$-edges with some part in $L_C$. We will use the following observation and claim to bound the size of $N_C$.

\begin{observation}\label{obs:RC}
Every edge in $C\backslash\tilde{C}$ necessarily has both end-vertices in $V(\proj(T,\ct))\backslash \tilde{S}$. This is because every edge between $V(\proj(T,\ct))\backslash \tilde{S}$ and $\tilde{S}$ belongs to $\tilde{C}$, and every edge whose both end-vertices are in $\tilde{S}$ are either in $\tilde{S}_E$ (which does not intersect $C$) or in $\tilde{C}$.
This implies that when restricted to $\tilde{S}$, the partition $\mcr_C$ and $\tpi'_{\ct}$ are the same.
\end{observation}

\begin{claim}\label{claim:bounding nbrs in R}
$|N_C|\leq (4k^2+1)((\lambda k)^2+2\lambda k+4k^2+1).$
\end{claim}
\begin{proof}
Let us fix one part $R$ in $L_C$ and bound the number of parts of $\mcr_C$ that could share adhesion or $G_t$-edge with $R$. By Observation \ref{obs:RC}, we know that $R$ is also a part of $\tpi'_{\ct}$. We will use $\pi$ to denote the part of $\tpi_{\ct}$ that contains $R$. The parts of $\mcr_C$ that share either an adhesion or a $G_t$-edge with $R$ can be enumerated by the following four types:
\begin{enumerate}
    \item parts outside $\pi$ that share adhesion with $R$ via $A_{t'}$, where $t'$ is a child of $t$,
    \item parts outside $\pi$ that share adhesion with $R$ via $A_t$,
    \item parts outside $\pi$ that share $G_t$-edge with $R$, and
    \item parts in $\pi$.
\end{enumerate}
We bound the number of parts of Type 1. For this, we will first bound the number of children $t'$ of $t$ such that $A_{t'}$ intersects both $R$ and some part outside $\pi$.
Let $t'$ be a child of $t$ such that $A_{t'}$ intersects $R$ and a part $R'$ that is outside $\pi$. Then $R\cap A_{t'}$ and $R'\cap A_{t'}$ are in different parts of $\Pi$. 
By compactness of $\tau$, we know that $R\cap A_{t'}$ has a neighbor $v$ in $V(G_{t'})\backslash A_{t'}$, and $R'\cap A_{t'}$ has a neighbor $v'$ in $V(G_{t'})\backslash A_{t'}$. 
Since $V(G_{t'})\backslash A_{t'}$ induces a connected subgraph in $G$, there is a path between $v$ and $v'$ in $G[V(G_{t'})\backslash A_{t'}]$. 
Hence, in order to separate $R\cap A_{t'}$ and $R'\cap A_{t'}$, the $k$-subpartition $\Pi$ must cut some edge with at least one end-vertex in $V(G_{t'})\backslash A_{t'}$. We fix one such edge and denote it as $e_{t'}$. Then $e_{t'}$ is contained in $\delta_G(\Pi)$. We associate one such edge $e_{t'}$ for each child $t'$ of $t$ such that $A_{t'}$ intersects $R$ and a part $R'$ that is outside $\pi$.

We now show that the edge $e_{t'}$ associated the child $t'$ is unique. For the sake of contradiction, suppose that $e_{t'}=e_{t''}=e$ for two children $t'$ and $t''$ of $t$. Let $e=\set{u', u''}$ with $u'\in V(G_{t'})\setminus A_{t'}$ and $u''\in V(G_{t''})\setminus A_{t''}$. 
Then, $e$ is contained in some bag $\chi(t_0)$. The bags containing $u'$ induce a connected subtree, and $u'\notin\ct$, so $t_0$ must be a descendant of $t'$ (inclusive). Similarly $t_0$ must be a descendant of $t''$ (inclusive). This is a contradiction because $t'$ and $t''$ are distinct children of $t$.

Therefore, the number of children $t'$ of $t$ such that $A_{t'}$ intersects both $R$ and some other part outside $\pi$ is at most $|\delta_G(\Pi)|\leq\lambda k$. Each such adhesion has size at most $\lambda k$, so it contributes at most $\lambda k$ to the number of adjacent parts that $R$ could have. Hence the size of type 1 is bounded above by $(\lambda k)^2$.

In order to bound the number of parts of Type 2, we use the fact that $|A_t|\leq\lambda k$ and conclude that the the number of parts of Type 2 is at most $\lambda k$.

In order to bound the number of parts of Type 3, we observe that a $G_t$-edge with one end-vertex in $\pi$ and the other end-vertex not in $\pi$ is necessarily in $\delta_G(\Pi)$. Each part outside $\pi$ that shares $G_t$-edge with $R$ connects $R$ to a unique edge in $\delta_G(\Pi)$, and hence the number of parts of Type 3 is at most $|\delta_G(\Pi)|\leq\lambda k$.

Lastly, we bound the number of parts of Type 4. Since $R$ is a part in $L_C$, by definition we know that $\pi$ is contained in $\tilde{S}$. This means that every part in $\pi$ is also a part of $\tpi'_{\ct}$ by Observation \ref{obs:RC}. Therefore, the size of type 4 is at most $4k^2+1$.

We conclude that $R$ shares an adhesion or a $G_t$-edge with at most $(\lambda k)^2+\lambda k+\lambda k+4k^2+1$ parts of $\mcr_C$. Hence, the size of $N_C$ is at most
\[|L_C|\cdot((\lambda k)^2+\lambda k+\lambda k+(4k^2+1)).\]
Since parts in $L_C$ are also parts in $\tpi'_{\ct}$, we know that  $|L_C|\leq 4k^2+1$. This yields the desired bound: 
\[|N_C|\leq (4k^2+1)((\lambda k)^2+2\lambda k+4k^2+1).\]
\end{proof}

We now have the ingredients to show that $\mcd$ contains a nice decomposition $D=(\mcp_{\ct},\mcq_{\ct},O)$ such that $\mcq_{\ct}$ refines $\Pi_{\ct}$. We begin with the easier case where the size of the bag is small.

\begin{claim}\label{claim:D yields refinement small case}
If $|\ct|\leq (\lambda k+1)^5$, then the nice decomposition $D_C=((\ct),\mcr_C|_{\ct},\emptyset)$ is contained in $\mathcal{D}$ and the partition $\mcr_C|_{\ct}$ refines $\Pi_{\ct}$.
\end{claim}
\begin{proof}

If $|\ct|\leq (\lambda k+1)^5$, then we note that $S_{\ct}=\ct$ and hence $\tilde{S}=V(\proj(T,\ct))$. By Observation \ref{obs:RC}, we know that $\mcr_C=\tpi'_{\ct}$. This guarantees that $\mcr_C$ has at most $4k^2+1$ parts, and thus the triple $((\ct),\mcr_C|_{\ct},\emptyset)$ is defined and added into $\mathcal{D}$. Moreover, we know that $\mcr_C=\tpi'_{\ct}$ refines $\tpi_{\ct}$, and hence $\mcr_C|_{\ct}$ refines $\Pi_{\ct}$.
\end{proof}

We now handle the case where the size of the bag is large. For the rest of the proof, suppose that $|\ct|>(\lambda k+1)^5$. We recall that $\tilde{C}\subseteq C$ and $C\cap \tilde{S}_E=\emptyset$. Since $|L_C|\le 4k^2+1$, by Claim \ref{claim:bounding nbrs in R} and Lemma \ref{lemma:2.1 LSS}, the family $\mathcal{S}_C$ is guaranteed to contain a set $X$ such that $L_C\subseteq X$ and $X\cap N_C=\emptyset$. Let us fix such a set $X$ in the remainder of the proof. The next claim states that the nice decomposition $D_{C,X}$ yields a refinement of $\Pi_{\ct}$ as desired, thereby completing the proof of Lemma \ref{lemma:D contains refinement}.

\end{proof}

\begin{claim}\label{claim D contains refinement}
If $|\ct|> (\lambda k+1)^5$, then the nice decomposition $D_{C,X}=(\mcp_{\ct},\mcq_{\ct},O)$ yields a partition $\mcq_{\ct}$ of $\ct$ that refines $\Pi_{\ct}$.
\end{claim}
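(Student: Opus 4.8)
The plan is to track how each part of $\mcr_C$ behaves under the two merging operations (steps 1 and 2) that produce $\mcq_2$, and to argue that no merge ever combines two parts whose intersections with $\ct$ lie in different parts of $\Pi_{\ct}$. Since $\tilde C\subseteq C$, the partition $\mcr_C$ refines $\tpi'_{\ct}$ and hence refines $\tpi_{\ct}$; as $\Pi_{\ct}$ is the restriction of $\tpi_{\ct}$ to $\ct$, this already gives that $\mcr_C$ restricted to $\ct$ refines $\Pi_{\ct}$, so it suffices to control the merges.

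First I would set up the dictionary between $\tpi_{\ct}$ and $\Pi_{\ct}$: since every part of $\tpi'_{\ct}$ meets $\ct$ and $\tpi'_{\ct}$ refines $\tpi_{\ct}$, every part of $\tpi_{\ct}$ meets $\ct$, so $\pi\mapsto\pi\cap\ct$ is a bijection from parts of $\tpi_{\ct}$ onto parts of $\Pi_{\ct}$. By edge-unbreakability of $\ct$, at most one part $\pi^{\ast}$ of $\Pi_{\ct}$ has size exceeding $(\lambda k+1)^5$; let $\pi^{\ast\ast}$ be the part of $\tpi_{\ct}$ with $\pi^{\ast\ast}\cap\ct=\pi^{\ast}$ (if no such $\pi^{\ast}$ exists, then $S_{\ct}=\ct$, which forces $C=\tilde C$ and $\mcr_C=\tpi'_{\ct}$; since then $H(\mcq_1)\setminus\{O_1\}$ has at most $|L_C|\le 4k^2+1$ vertices, no large component is merged and the claim is immediate, so assume $\pi^{\ast}$ exists). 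Then $S_{\ct}$ is exactly the union of the small parts of $\Pi_{\ct}$, so $\tilde S=V(\proj(T,\ct))\setminus\pi^{\ast\ast}$. Consequently each part of $\mcr_C$ is either contained in $\tilde S$ — these are precisely the parts in $L_C$ — or contained in $\pi^{\ast\ast}$ — these are the parts not in $L_C$, and such a part meets $\ct$ inside the single part $\pi^{\ast}$ of $\Pi_{\ct}$.

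Next I would analyze the merges. Step 1 merges into $O_1$ exactly the parts of $\mcr_C$ not in $X$; since $L_C\subseteq X$, every merged part lies in $\pi^{\ast\ast}$, so $O_1\subseteq\pi^{\ast\ast}$. The remaining parts of $\mcq_1$ are the parts of $\mcr_C$ lying in $X$. The crucial point is that for any $R\in L_C$, every neighbour of $R$ in $H(\mcq_1)$ other than $O_1$ lies in $L_C$: such a neighbour is a part $R'$ of $\mcr_C$ in $X$ sharing an adhesion or a $G_t$-edge with $R$, and if $R'\notin L_C$ then $R'$ lies outside $\tilde S$ and hence $R'\in N_C$, contradicting $X\cap N_C=\emptyset$. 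Therefore the connected component of $R$ in $H(\mcq_1)\setminus\{O_1\}$ is contained in $L_C$, so has at most $|L_C|\le 4k^2+1$ vertices, and is not absorbed into $O_1$ in step 2. It follows that $O_2$ is the union of $O_1$ with some components of $H(\mcq_1)\setminus\{O_1\}$, each consisting only of parts of $\mcr_C$ in $X\setminus L_C$ and hence contained in $\pi^{\ast\ast}$; thus $O_2\subseteq\pi^{\ast\ast}$ and $O=O_2\cap\ct\subseteq\pi^{\ast}$. Every surviving part of $\mcq_2$ other than $O_2$ is a single part of $\mcr_C$, which (being in $L_C$, hence a part of $\tpi'_{\ct}$, or in $X\setminus L_C$, hence inside $\pi^{\ast\ast}$) is contained in one part of $\tpi_{\ct}$ and so meets $\ct$ inside one part of $\Pi_{\ct}$. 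Restricting $\mcq_2$ to $\ct$ then yields a partition $\mcq_{\ct}$ refining $\Pi_{\ct}$, as required.

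The step I expect to require the most care is the one showing that parts of $L_C$ remain in small components of $H(\mcq_1)\setminus\{O_1\}$: this is precisely where the choice of $X$ with $L_C\subseteq X$ and $X\cap N_C=\emptyset$ is used (made possible by the bound on $|N_C|$ from Claim~\ref{claim:bounding nbrs in R} together with Lemma~\ref{lemma:2.1 LSS}), and one must check that the adjacency structure of $H(\mcq_1)$ around a part of $L_C$ is inherited from $H(\mcr_C)$ except for the collapse of non-$X$ parts into $O_1$. The degenerate case in which $\Pi_{\ct}$ has no large part also needs a separate (easy) word, as indicated above.
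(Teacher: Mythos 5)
Your proof is correct and follows essentially the same approach as the paper's: both arguments hinge on the observation that no part of $L_C$ can be merged into $O_1$ in step~2 because (by the choice of $X$ via Lemma~\ref{lemma:2.1 LSS} and Claim~\ref{claim:bounding nbrs in R}) every $H(\mcq_1)$-neighbour of a part in $L_C$ other than $O_1$ lies in $L_C$, hence the component of any $R\in L_C$ in $H(\mcq_1)\setminus\{O_1\}$ has at most $|L_C|\le 4k^2+1$ vertices. Your additional bookkeeping — the explicit bijection $\pi\mapsto\pi\cap\ct$, the identification $\tilde S = V(\proj(T,\ct))\setminus\pi^{\ast\ast}$ via Observation~\ref{obs:RC}, and the separate (subsumed) sub-case when no part of $\Pi_{\ct}$ is large — is correct and simply makes more explicit what the paper summarizes in the last few sentences of its proof.
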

\begin{proof}
By definition of the set $X$, at the end of step 1 of the algorithm, the part $O_1$ contains all parts in $N_C$ and no parts in $L_C$. Moreover, in the graph $H(\mcq_1)$, we have $N_{H(\mcq_1)}(L_C)=\{O_1\}$. Since $L_C$ contains at most $4k^2+1$ parts, we know that no part in $L_C$ is merged into $O_1$ in step 2. Therefore, every part of $\mcr_C$ in $L_C$ remains a single part in $\mcq_2$. This implies that every part of $\tpi'_{\ct}$ in $\tilde{S}$ remains a single part in $\mcq_2$. If a part $\pi$ of $\tpi_{\ct}$ is contained in $\tilde{S}$, then $\pi$ is the union of some parts of $\tpi'_{\ct}$ in $\tilde{S}$, and hence the union of some parts of $\mcq_2$. If a part $\pi$ of $\tpi_{\ct}$ is not contained in $\tilde{S}$, then $\pi$ is the unique part of $\tpi_{\ct}$ that is not contained in $\tilde{S}$. This implies that $\pi$ is the union of parts of $\mcq_2$ that are not contained in $L_C$. Thus, we conclude that $\mcq_2$ refines $\tpi_{\ct}$, and hence $\mcq_{\ct}$ refines $\Pi_{\ct}$.
\end{proof}


\section{Reduction to unweighted instances with logarithmic optimum value}
\label{sec:reduction-to-unweighted-instances}
In this section, we show a $(1+\epsilon)$-approximation preserving reduction to unweighted instances with optimum value $O((k/\epsilon^3)\log n)$. The ideas in this section are standard and are also the building blocks for the $(1+\epsilon)$-approximation for \mskc. 
Our contribution to the reduction is simply showing that the ideas also apply to \mmkc. 

\begin{theorem}
\label{thm:reduction-to-unweighted-instances-with-small-opt}
There exists a randomized algorithm that takes as input 
an $n$-vertex graph $G=(V,E)$ with edge weights $w:E\rightarrow \R_+$, 
an integer $k\ge 2$, 
and an $\epsilon\in (0,1)$, 
and runs in time $2^{O(k)}\poly(n,1/\epsilon)\log(\OPT(G_w,k))$ 
to return a collection $\mcc$ of unweighted instances of \mmkc such that with high probability 
\begin{enumerate}[(i)]
\item the size of every instance in $\mcc$ is polynomial in the size of the input instance, 
\item the number of instances in $\mcc$ is $2^{O(k)}\poly(n,1/\epsilon)\log(\OPT(G_w,k))$, and 
\item the optimum solution among all instances in $\mcc$ with costs bounded by $O((k/\epsilon^3)\log n)$ can be used to recover a $k$-partition $\mcp$ of the vertices of $G$ such that $\cost_{G_w}(\mcp)=(1+O(\epsilon))\opt(G_w, k)$
in time $2^{O(k)}\poly(n,1/\epsilon)\log(\OPT(G_w,k))$. 
\end{enumerate}
\end{theorem}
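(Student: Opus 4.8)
The plan is to follow, essentially verbatim, the three-step reduction of Lokshtanov, Saurabh, and Surianarayanan \cite{LSS} for \mskc, checking at each step that it is compatible with the min-max objective; the only facts specific to \mmkc that are needed are (a) $\opt(\mskc)\le k\cdot\opt(\mmkc)$ on every instance, (b) an edge of weight exceeding $\opt(G_w,k)$ is never cut by an optimal min-max $k$-partition, and (c) since each $\delta(V_i)$ is itself a cut, any transformation preserving all cut values within a $(1\pm\epsilon)$ factor also preserves $\max_i w(\delta(V_i))$ within a $(1\pm\epsilon)$ factor for every $k$-partition. Write $\opt:=\opt(G_w,k)$. \textbf{Step 1 (guessing $\opt$, contracting heavy edges, rounding).} First I would guess $\opt$ up to a $(1+\epsilon)$ factor, trying each $\lambda$ in a geometric progression of length $\poly(n,1/\epsilon)\log\opt$; it suffices to establish the claim for the guess with $\opt\le\lambda\le(1+\epsilon)\opt$. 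For a fixed $\lambda$, contract every edge of weight $>\lambda$ (by fact (b) this preserves the optimal min-max partition, up to extending it across the contracted edges), then scale all remaining weights by $n^2/(\epsilon\lambda)$, round each up to the next integer, and replace an integer-weight edge by that many parallel unit-weight edges. This yields an unweighted multigraph $G^{(1)}$ of size $\poly(n,1/\epsilon)$ with $\opt(G^{(1)},k)=\Theta(n^2/\epsilon)$, in which the original cost of any $k$-partition is recovered within a $(1+O(\epsilon))$ factor after undoing the scaling and contraction. This is the standard knapsack-PTAS-style argument, and the guessing contributes the $\log\opt$ and $2^{O(k)}$ overhead to the count and running time.

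\textbf{Step 2 (deleting small cuts).} Next I would repeatedly locate a connected component of the current unweighted graph whose min-cut is below $\tau:=\epsilon\cdot\opt(G^{(1)},k)/k$ and delete its minimum cut, stopping once every component has min-cut at least $\tau$; call the result $G^{(2)}$. The total number of deleted edges is at most $\epsilon\cdot\opt(G^{(1)},k)$: charging the deleted edges against an optimal \emph{minsum} $k$-cut and invoking fact (a) shows each such deletion can be absorbed into that solution, whose value is at most $k\cdot\opt(G^{(1)},k)$. Consequently, for any $k$-partition $\mcp$ of $G^{(2)}$ one has $\cost_{G^{(1)}}(\mcp)\le\cost_{G^{(2)}}(\mcp)+\epsilon\cdot\opt(G^{(1)},k)$ while $\opt(G^{(2)},k)\le\opt(G^{(1)},k)$, so a $(1+\epsilon)$-approximate $k$-partition for $G^{(2)}$ is a $(1+O(\epsilon))$-approximate $k$-partition for $G^{(1)}$.

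\textbf{Step 3 (uniform sparsification) and assembly.} Finally I would sample each edge of $G^{(2)}$ independently with probability $p:=\Theta(k\log n/(\epsilon^3\opt(G^{(1)},k)))$. Since every component of $G^{(2)}$ has min-cut at least $\tau=\Theta(\log n/(\epsilon^2 p))$, Karger's uniform cut-sparsification bound (as used in \cite{LSS}) guarantees that, with high probability, $p^{-1}$ times every cut of the sampled graph $G^{(3)}$ is within $(1\pm\epsilon)$ of the corresponding cut of $G^{(2)}$; by fact (c) this preserves the min-max objective of every $k$-partition within $(1\pm\epsilon)$ after scaling by $p^{-1}$, so $\opt(G^{(3)},k)=(1\pm\epsilon)p\cdot\opt(G^{(2)},k)$, and in particular $\opt(G^{(3)},k)\le(1+\epsilon)p\cdot\opt(G^{(1)},k)=O((k/\epsilon^3)\log n)$. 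The collection $\mcc$ consists of the graphs $G^{(3)}$, one per guess of $\lambda$; composing the four transfer maps (extend across contracted edges, account for the rounding, add back the at most $\epsilon\cdot\opt(G^{(1)},k)$ deleted edges, undo the $p^{-1}$ scaling) turns a best $k$-partition of cost $O((k/\epsilon^3)\log n)$ among the instances of $\mcc$ into a $k$-partition $\mcp$ of $G$ with $\cost_{G_w}(\mcp)=(1+O(\epsilon))\opt$, and the size, count, and running-time bounds follow by multiplying the per-guess polynomial bounds by the $\poly(n,1/\epsilon)\log\opt$ guesses (a union bound over guesses keeps the success probability high). I expect the main obstacle to be Step 2: obtaining the $\epsilon\cdot\opt$ bound on the deleted edges together with a min-cut lower bound for $G^{(2)}$ that is exactly what Karger's theorem needs in Step 3, while keeping the $\epsilon$-exponents consistent throughout; the remaining steps are routine and carry over from the minsum case precisely because of fact (c).
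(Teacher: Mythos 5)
Your proposal follows the same three-step reduction as the paper (knapsack-style rounding to an unweighted multigraph, removal of small cuts to make components well-connected, Benczur--Karger uniform sparsification), and the assembly and bookkeeping are the same. One place where your justification is off, even though your conclusion is right, is Step~2: ``charging the deleted edges against an optimal minsum $k$-cut and invoking fact~(a)'' does not directly bound the total number of deleted edges. The argument the paper uses (and the one you need) is a contradiction: if one could remove $k-1$ cut sets each of size at most $\tau = \Theta(\epsilon\,\opt/k)$, the graph would split into at least $k$ connected components with at most $\Theta(\epsilon\,\opt)$ edges deleted in total, yielding a $k$-partition of min-max cost $O(\epsilon)\,\opt < \opt$ for small $\epsilon$ --- impossible. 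Hence at most $k-2$ such removals occur, giving the $\le \epsilon\,\opt$ bound; fact~(a) plays no role. You also gloss over the $2^{O(k)}$ component-guessing step that the paper needs when $G$ (and later $G^{(2)}$) is disconnected, though you do mention a $2^{O(k)}$ overhead in passing. These are fixable imprecisions rather than a different route; the structure, the choice of $\tau$, the sampling rate $p$, and the final $O((k/\epsilon^3)\log n)$ bound are exactly as in the paper.
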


\begin{proof}
Let the input instance be $G=(V,E)$ with edge costs $w:E\rightarrow \R_+$ along with an integer $k\ge 2$ and $\epsilon \in (0,1)$. 
We assume that $G$ is connected (if not, then guess the number of parts that each connected component will split into and solve the sub-problem for each connected component - this contributes a $2^{O(k)}$ overhead to the run-time). 
We may now guess a value $\lambda$ such that $\lambda\in [\OPT(G_w,k), 2\OPT(G_w,k)]$ by doing a binary search -- this contributes a $\log(\OPT(G_w,k))$ overhead to the run-time. 

We use the guess $\lambda$ to do a knapsack PTAS-style rounding to reduce the problem in a $(1+\epsilon)$-approximation preserving fashion to an unweighted multigraph with $O(m^2/\epsilon)$ edges, where $m$ is the number of edges in the input graph $G$ (see Lemma \ref{lem:knapsack-style-rounding} for the complete arguments of the reduction). 
Due to this reduction, we may henceforth assume that the input instance $(G,k)$ of \mmkc is an unweighted instance. 
Our goal now is to get a $(1+O(\epsilon))$-approximation for the unweighted instance $G$ for a given constant $\epsilon \in (0,1)$. We still have a guess $\lambda\in [\OPT(G,k), 2\OPT(G,k)]$.

Our next step in the reduction is to repeatedly remove cut sets $F$ (i.e., a subset $F$ of edges that cross some $2$-partition of $G$) that have $0<|F|\le \epsilon \lambda/2(k-1)$. If we can remove $k-1$ such cut sets, then we would have removed at most $\epsilon \lambda/2\le \epsilon\OPT(G,k)$ edges while creating $k$ connected components, thus contradicting optimality. Thus, we can remove at most $k-2$ such cut sets and the number of edges removed is less than $\epsilon \lambda/2\le \epsilon \OPT(G,k)$. At the end, we have a subgraph $H_1=(V,E_1)$ of $G$ such that 
(i) $|E-E_1|\le \epsilon \OPT(G,k)$ and 
(ii) the min-cut in each connected component of $H_1$ is at least $\epsilon \opt(G,k)/k\ge \epsilon \opt(H_1,k)/k$. 
We note that if we can find a $(1+\epsilon)$-approximate optimum \minmax $k$-partition $(S_1, \ldots, S_k)$ for the unweighted instance $H_1$, then $\cost_G(S_1, \ldots, S_k)\le \cost_{H_1}(S_1, \ldots, S_k) + \epsilon \OPT(G,k)\le (1+2\epsilon)\opt(G,k)$. Hence, our goal now is to compute a $(1+\epsilon)$-approximation for the instance $(H_1,k)$ of \mmkc. We note that the components of $H_1$ are \emph{well-connected}: each component of $H_1$ has min-cut value at least $\epsilon \opt(H_1,k)/k$. With a run-time overhead of $2^{O(k)}$, we again assume that the instance $(H_1,k)$ is connected. 

The final step of our reduction is to subsample the edges of $H_1$: we pick each edge with probability $p=100\log n/(\epsilon^2\opt(H_1,2))$, where $n$ is the number of vertices in $H$. Let $E_2$ be the set of sampled edges and let $H_2:=(V,E_2)$. By Benczur-Karger \cite{benczur-karger}, we know that with probability at least $1-1/n^{26}$, (i) the scaled cut-value of every $2$-partition is preserved within a $(1\pm \epsilon)$-factor, i.e., $|\delta_{H_2}(S)|/p\in [(1-\epsilon)|\delta_{H_1}(S)|, (1+\epsilon)|\delta_{H_1}(S)|]$ for every $S\subseteq V$ and (ii) $|E_2|=O(\log n/\epsilon^2))|E_1|$. The preservation of cut values immediately implies that $\opt(H_2,k)/p\in [(1-\epsilon)\opt(H_1,k), (1+\epsilon)\opt(H_1,k)]$ and moreover, 
\begin{align*}
    \opt(H_2,k)
    &\le p(1+\epsilon)\opt(H_1,k)\\
    &= O(1)\frac{\log n}{\epsilon^2 \opt(H_1,2)}\cdot \opt(H_1,k)\\
    &=O(1)\frac{k\log n}{\epsilon^3 } \quad \quad \text{(since $\opt(H_1,2)\ge \frac{\epsilon\opt(H_1,k)}{k}$)}.
\end{align*}
Thus, we obtain an instance $H_2$ whose optimum cost is $O((k/\epsilon^3)\log n)$ and the optimum cost of the instance $(H_2,k)$ can be used to obtain a $(1+O(\epsilon))$-approximation to the optimum cost of the instance $(H_1,k)$. We note that all reduction steps can be implemented to run in time $2^{O(k)}n^{O(1)}\log(\opt(G_w,k))$. 
\end{proof}

For the sake of completeness, we now give the details of the knapsack PTAS-style rounding procedure to reduce the problem in a $(1+\epsilon)$-approximation preserving fashion to an unweighted instance. 
\begin{lemma}
\label{lem:knapsack-style-rounding}
There exists a polynomial-time algorithm that takes as input a graph $G=(V,E)$ with edge weights $w:E\rightarrow \R_+$, an $\epsilon\in (0, 1)$, and a value $\lambda \in [\opt(G_w,k), 2\opt(G_w,k)]$,  
and returns an unweighted multigraph $H=(V',E')$ 
such that 
$|E'|\le 2|E|^2/\epsilon$ and 
an $\alpha$-approximate \minmax $k$-partition in $H$ can be used to recover an $\alpha(1+\epsilon)$-approximate \minmax $k$-partition in $G_w$ in polynomial time. 
\end{lemma}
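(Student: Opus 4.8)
The plan is to perform a \emph{knapsack-PTAS-style} rounding of the edge weights, bracketed by a cleanup step that first contracts edges that are too heavy and then discards edges that are too light. Throughout write $m:=|E|$, $\opt:=\opt(G_w,k)$, and recall that the input value satisfies $\lambda\in[\opt,2\opt]$. First I would contract every edge $e$ with $w(e)>\lambda$, obtaining a multigraph $G'$ on the resulting super-vertices. This preserves the optimum, and the recovery is just the pull-back of partitions along the contraction: if some optimal $k$-partition of $G_w$ separated the endpoints of such an $e$ into parts $V_i\neq V_j$, then $w(\delta_{G_w}(V_i))\ge w(e)>\lambda\ge\opt$, a contradiction, so no optimal partition cuts a contracted edge; consequently optimal partitions of $G_w$ correspond, with identical weighted cut-profile, to partitions of $G'_w$, and conversely any $k$-partition of $G'$ pulls back to a $k$-partition of $G$ with the same cut values. (The same inequality shows $G'$ still has at least $k$ vertices, so feasibility is retained.) After this step every edge of $G'$ has weight at most $\lambda$, and $G'$ has at most $m$ edges.

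Next I would fix a scaling parameter $\theta:=\epsilon\lambda/(cm)$ for a suitable absolute constant $c$ (take $c=2$ to match the stated size bound, at the cost of a constant-factor change of $\epsilon$ that I suppress), discard from $G'$ every edge with $w(e)\le\theta$, and build the unweighted multigraph $H$ by replacing each surviving edge $e$ with $n_e:=\lfloor w(e)/\theta\rfloor\ge 1$ parallel unit-weight copies. The total weight discarded is at most $m\theta=\epsilon\lambda/c\le\epsilon\opt$; and since every surviving edge has $w(e)\in(\theta,\lambda]$, it spawns between $1$ and $\lambda/\theta=cm/\epsilon$ copies, so $|E(H)|\le cm^{2}/\epsilon$, which is polynomial in the input size and matches the claimed bound. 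The whole construction clearly runs in polynomial time.

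For the guarantee, fix any $k$-partition $\mathcal{P}=(V_1,\dots,V_k)$ of $V(H)=V(G')$ and let $\mathcal{P}^{\uparrow}$ be its pull-back to $G$. Since $\theta n_e\in(w(e)-\theta,\,w(e)]$ for every surviving edge, summing over the surviving edges of $\delta_{G'}(V_i)$ and adding back the at most $\epsilon\opt$ of discarded weight gives, for each $i$,
\[
w\bigl(\delta_{G_w}(V_i)\bigr)-2\epsilon\opt \;<\; \theta\,\bigl|\delta_H(V_i)\bigr| \;\le\; w\bigl(\delta_{G_w}(V_i)\bigr),
\]
where the outer cut values refer to $\mathcal{P}^{\uparrow}$ and we used that contracted edges contribute nothing to any cut under the pull-back. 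Taking the maximum over $i$,
\[
\cost_{G_w}(\mathcal{P}^{\uparrow})-2\epsilon\opt \;<\; \theta\,\cost_H(\mathcal{P}) \;\le\; \cost_{G_w}(\mathcal{P}^{\uparrow}).
\]
Applying the right inequality to the image in $H$ of an optimal $k$-partition of $G_w$ (which, by the first step, cuts no contracted edge) yields $\theta\,\opt(H,k)\le\opt$. Hence if $\mathcal{P}$ is an $\alpha$-approximate \minmax $k$-partition of $H$, then $\cost_{G_w}(\mathcal{P}^{\uparrow})<\theta\,\cost_H(\mathcal{P})+2\epsilon\opt\le\alpha\,\theta\,\opt(H,k)+2\epsilon\opt\le\alpha\opt+2\epsilon\opt\le\alpha(1+2\epsilon)\opt$, where the last step uses $\alpha\ge1$; choosing $c$ larger (equivalently, invoking the construction with $\epsilon/4$) replaces $2\epsilon$ by $\epsilon$ and gives the stated $\alpha(1+\epsilon)$-bound.

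I do not expect a genuine obstacle here — this reduction is entirely standard. The care is in two places: justifying the heavy-edge contraction from the inequality $\lambda\ge\opt$ alone (so that the procedure never needs to know $\opt$), and bookkeeping the two additive error sources, discarding light edges and floor-rounding, so that each is at most $\epsilon\opt$ and, crucially, passing this \emph{additive} slack through the scale factor $\theta$ to produce a \emph{multiplicative} $(1+\epsilon)$ loss (which is exactly where $\alpha\ge1$ enters). The size bound is then immediate from the fact that after the cleanup every surviving edge has weight in $(\theta,\lambda]$ and hence contributes at most $\lambda/\theta$ unit copies.
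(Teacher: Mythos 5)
Your proof is correct and follows the same knapsack-rounding blueprint as the paper: contract heavy edges, scale by $\theta = \Theta(\epsilon\lambda/m)$, replace each edge by an integer number of unit-weight parallel copies, and pass the additive $O(\epsilon\cdot\opt)$ slack through the scale factor to a multiplicative loss. The one mechanical difference is the rounding direction --- the paper uses $w'(e)=\lceil w(e)/\theta\rceil$, which is never zero and so needs no discard step, and which gives the clean one-sided inequality $w(\delta(S))\le\theta\,w'(\delta(S))$ for translating any cut found in $H$ back to $G_w$; you round down and therefore must separately discard sub-$\theta$ edges, picking up a second additive error term, but the bookkeeping lands in the same place (and both arguments, yours and the paper's, actually deliver $\alpha(1+O(\epsilon))$ rather than the literal $\alpha(1+\epsilon)$, a routine rescaling of $\epsilon$).
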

\begin{proof}
We may assume that $w(e)\le \lambda$ for every $e\in E$ (otherwise, contract the edge). Let $m:=|E|$, $\theta = \epsilon \lambda/m$ and $w'(e):=\lceil w(e)/\theta \rceil$ for every $e\in E$. Let $H=(V,E')$ be the graph obtained by creating $w'(e)$ copies of each edge $e\in E$. We first bound the number of edges in $H$. We have 
\begin{align*}
    |E'|
    &= \sum_{e\in E} \left\lceil \frac{w(e)}{\theta} \right\rceil
    \le \sum_{e\in E} \left(\frac{w(e)}{\theta} +1\right)
    = \frac{w(E)m}{\epsilon \lambda} + m
    \le \frac{m^2}{\epsilon} + m
    \le \frac{2m^2}{\epsilon}.
\end{align*}
The last but one inequality above is because $w(e)\le \lambda$ for every $e\in E$. 
Next, we show that $\theta \opt(H,k)\le (1+4\epsilon)\opt(G_w,k)$. Let $(S_1^*, \ldots, S_k^*)$ be an optimum \minmax $k$-partition in $G_w$. Then, for every $i\in [k]$, we have that 
\begin{align*}
    \theta w'(\delta(S_i^*))
    &= \sum_{e\in \delta(S_i^*)} \left\lceil \frac{w(e)}{\theta} \right \rceil \theta
    \le \sum_{e\in \delta(S_i^*)} \left( \frac{w(e)}{\theta}  + 1\right)\theta
    = w(\delta(S_i^*)) + \theta|\delta(S_i^*)|\\
    &\le \opt(G_w,k) + \frac{\epsilon \lambda |\delta(S_i^*)|}{m}
    \le \opt(G_w,k)(1+4\epsilon).
\end{align*}
Hence, $(S_1^*, \ldots, S_k^*)$ is a $k$-partition of the vertices of $H$ with $\cost_H(S_1^*, \ldots, S_k^*)\le \opt(G_w,k)(1+4\epsilon)$. 

Let $(S_1,\ldots, S_k)$ be an $\alpha$-approximate \minmax $k$-partition in $H$. For every $i\in [k]$, we have
\begin{align*}
    w(\delta(S_i))
    &= \sum_{e\in \delta(S_i)} w(e)
    = \sum_{e\in \delta(S_i)} \left(\frac{w(e)}{\theta w'(e)}\right) \theta w'(e)
    \le \sum_{e\in \delta(S_i)} \theta w'(e)\\
    &\le \alpha \opt(H,k) \theta
    \le \alpha(1+4\epsilon) \opt(G_w, k).
\end{align*}
Thus, $\cost_{G_w}(S_1, \ldots, S_k)\le \alpha(1+4\epsilon)\opt(G_w,k)$. 
\end{proof}

\section{Proof of Theorem \ref{thm:FPAS}}
\label{sec:FPAS}
In this section, we restate and prove Theorem \ref{thm:FPAS}. 
\thmFPAS*
\begin{proof}
Given the input, we compute a family $\mathcal{C}$ as described in Theorem \ref{thm:reduction-to-unweighted-instances-with-small-opt}. For each instance $H\in\mathcal{C}$, we use Theorem \ref{thm dp} with $\lambda = O((k/\epsilon^3)\log n)$ to find an optimum to the instance or obtain that the optimum is $\Omega((k/\epsilon^3)\log n)$. 
By conclusion (iii) of Theorem \ref{thm:reduction-to-unweighted-instances-with-small-opt}, we can then recover and output a $k$-partition $\mcp$ of $V$ such that $\cost_{G_w}(\mcp)=(1+O(\epsilon))\opt(G_w, k)$ in time $2^{O(k)}\poly(n,1/\epsilon)\log(\OPT(G_w,k))$. 

By Theorem \ref{thm:reduction-to-unweighted-instances-with-small-opt}, the time required to compute $\mathcal{C}$ is
\[2^{O(k)}\poly(n,1/\epsilon)\log(\OPT(G_w,k)).\]
For each $H\in\mathcal{C}$, the total time to run the algorithm described in Theorem \ref{thm dp} with $\lambda = O((k/\epsilon^3)\log n)$ is 
\begin{align*}
    \left(\left(\frac{k^2}{\epsilon^3}\right)\log{n}\right)^{O(k^2)}n^{O(1)}
    = \left(\frac{k}{\epsilon^3}\right)^{O(k^2)}(\log{n})^{O(k^2)}n^{O(1)}
    &= \left(\frac{k}{\epsilon^3}\right)^{O(k^2)}(k^{O(k^2)}+n)n^{O(1)}\\
    &= \left(\frac{k}{\epsilon}\right)^{O(k^2)}n^{O(1)}.
\end{align*}
Finally, our algorithm requires $2^{O(k)}\poly(n,1/\epsilon)\log(\OPT(G_w,k))$ time to recover $\mcp$. Therefore, the total run time is
\begin{align*}
    |\mathcal{C}|\cdot(k/\epsilon)^{O(k^2)}n^{O(1)}&+2^{O(k)}\poly(n,1/\epsilon)\log(\OPT(G_w,k))
    \\&=(k/\epsilon)^{O(k^2)}n^{O(1)}\cdot 2^{O(k)}\poly(n,1/\epsilon)\log(\OPT(G_w,k))
    \\&=(k/\epsilon)^{O(k^2)}n^{O(1)}\log(\OPT(G_w,k)).
\end{align*}
The run-time bound stated in the theorem follows since $\opt(G_w, k)\le \binom{n}{2}\max_{e\in E} w(e)$. 
\end{proof}


\section{NP-hardness}
\label{sec:hardness}
In this section, we restate and prove the hardness result. 

\thmHardness*
\begin{proof}
We will show a reduction from \textsc{$h$-clique} in the unweighted case. In  \textsc{$h$-clique}, the input consists of a simple graph $G=(V,E)$ and a positive integer $h$, and the goal is to decide whether there exists a subset $S$ of $V$ of size $|S|=h$ with $G[S]$ being a clique. The \textsc{$h$-clique} problem is NP-hard and W[1]-hard when parameterized by $h$.

Let $(G=(V,E),h)$ be an instance of \textsc{$h$-clique}, where $V=\{v_1,\ldots,v_n\}$. We may assume that $h\geq 2$. Let $M:=\max\{(n+1)^2,3m\}$ and $N:=Mn+2$, where $m$ is the number of edges in $G$. 
We construct a graph $G'=(V', E')$ as follows: for each vertex $v_i\in V$, we create a clique of size $N$ over the vertex set $C_i:=\{u_{i,j}:j\in[N]\}$. We also create a clique of size $N$ over the vertex set $W:=\{w_j:j\in[N]\}$. For each edge $e=v_i v_{i'}\in E$, we add the edge $u_{i,1} u_{i', 1}$ (between the first copy of vertex $v_i$ and the first copy of vertex $v_{i'}$). For each $v_i\in V$, we also add $M-\deg_G(v_i)$ edges between arbitrary pair of vertices in $C_i$ and $W$---for the sake of clarity, we fix these $M-\deg_G(v_i)$ edges to be $u_{i,j}w_j$ for every $j\in [M-\deg_G(v_i)]$. We set $V':=W\cup \cup_{i\in [n]}C_i$.
We note that the size of the graph $G'$ is polynomial in the size of the input graph $G$. See Figure \ref{fig:hardness} for an example. 
The next claim completes the reduction. 

\end{proof}

\begin{figure}[ht]
    \centering
    \includegraphics[width=0.5\textwidth]{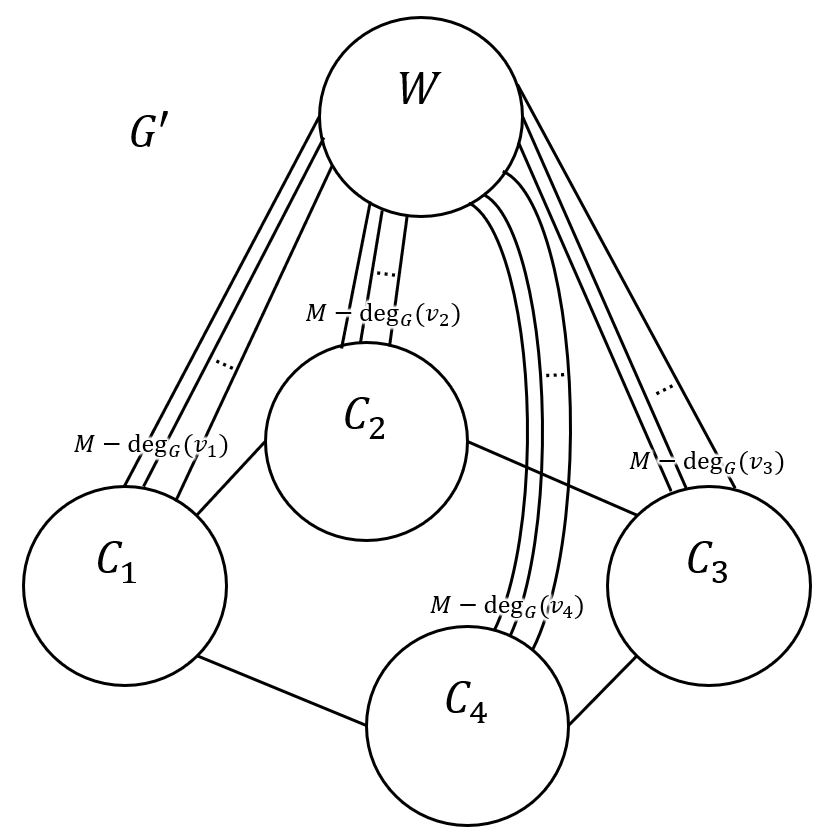}
    \caption{The graph $G'$ constructed in the proof of Theorem \ref{thm:hardness} when the input graph $G$ is a $4$-cycle.}
    \label{fig:hardness}
\end{figure}

\begin{claim}
The graph $G$ contains an $h$-clique if and only if there exists an $(h+1)$-partition $\mcp=(P_1,\ldots,P_{h+1})$ of $V'$ such that $\max_{i\in[h+1]} |\delta_{G'}(P_i)|\leq Mh-h(h-1)$.
\end{claim}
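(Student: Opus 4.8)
The plan is to prove both directions of the equivalence by carefully analyzing the structure of low-cost partitions of $G'$. The key observation is that each clique $C_i$ has size $N = Mn+2$, which is much larger than the total number of ``connector'' edges (at most $3m \le M$ going to $W$ plus at most $n$ edges realizing the original edges of $G$), so any $(h+1)$-partition with max-cut value at most $Mh - h(h-1) < N$ cannot split any of the cliques $C_1, \ldots, C_n, W$: splitting an $N$-clique into two nonempty sides costs at least $N-1 \ge M h - h(h-1)+1$ for the relevant parameter ranges (using $M \ge (n+1)^2$ and $h \le n$), exceeding the budget. Hence in any partition meeting the budget, each of the $n+1$ cliques $C_1, \ldots, C_n, W$ lies entirely inside one part. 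Since there are $n+1$ cliques and $h+1 \le n+1$ parts, and since every part must be nonempty, the parts are exactly unions of these cliques.

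For the forward direction, suppose $G$ has an $h$-clique on vertices $v_{i_1}, \ldots, v_{i_h}$. I would form the partition with singleton-clique parts $P_j := C_{i_j}$ for $j \in [h]$, and one ``garbage'' part $P_{h+1}$ consisting of $W$ together with all remaining cliques $C_i$ for $i \notin \{i_1,\ldots,i_h\}$. I would then compute $|\delta_{G'}(P_j)|$ for $j \in [h]$: the edges leaving $C_{i_j}$ are the $M - \deg_G(v_{i_j})$ edges to $W$, plus the $\deg_G(v_{i_j})$ edges realizing edges of $G$ incident to $v_{i_j}$; of those latter edges, exactly $h-1$ go to the other cliques $C_{i_1}, \ldots, C_{i_h}$ (since the $v_{i_j}$ form a clique in $G$) — but wait, those edges go to \emph{other} parts among $P_1,\ldots,P_h$, so they still count toward $\delta(P_j)$. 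Let me recount: $|\delta_{G'}(C_{i_j})| = (M - \deg_G(v_{i_j})) + \deg_G(v_{i_j}) = M$ at first glance, but edges from $C_{i_j}$ to another $C_{i_{j'}}$ with $j' \ne j$ — these DO cross. Hmm, so this would give $M$, not $Mh - h(h-1)$. The resolution must be that the connector edges between $C_{i_j}$ and $C_{i_{j'}}$ are counted, but we need the max over parts; actually I think the correct partition groups the $h$ clique-cliques differently, or the bound $Mh - h(h-1)$ refers to $|\delta_{G'}(P_{h+1})|$ or to a sum. I would re-examine: if $P_{h+1}$ contains $W$ and all cliques outside the chosen clique, then $\delta(P_{h+1})$ counts all edges from $W$ and outside-cliques to the $h$ chosen cliques. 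I'd carefully set up the bookkeeping — most likely the intended partition puts all $h$ chosen cliques into \emph{one} part together with... no. I would instead trust the arithmetic: with $P_j = C_{i_j}$, $\delta(P_j)$ has $M - \deg_G(v_{i_j})$ edges to $W\subseteq P_{h+1}$, plus edges to cliques in $P_{h+1}$, plus edges to the other chosen cliques. For the max to be $Mh-h(h-1)$ the parts can't be individual cliques; rather $P_{h+1}$ should be $W$ alone perhaps, with the $h$ parts being unions. The cleanest choice: $P_{h+1} := W$, and distribute the $n$ cliques $C_1,\ldots,C_n$ among $P_1,\ldots,P_h$ so that the $h$ chosen cliques are isolated. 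I will work out in the writeup that $|\delta_{G'}(W)| = \sum_i (M - \deg_G(v_i)) = Mn - 2m$ which may exceed the budget — so $W$ cannot be alone either. I expect the correct construction has each chosen clique $C_{i_j}$ as its own part and all else (including $W$) lumped into $P_{h+1}$; then $|\delta(C_{i_j})| = M - \deg_G(v_{i_j}) + \deg_G(v_{i_j}) = M$, still wrong.

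The step I genuinely need to nail down, and expect to be the main obstacle, is getting the exact cut-value bookkeeping to match $Mh - h(h-1)$ — this requires identifying precisely which partition achieves it and verifying the count edge-by-edge (the $h(h-1)$ term strongly suggests $\binom{h}{2}$ intra-clique-group edges counted with multiplicity $2$, i.e., the edges among the $h$ chosen cliques, each contributing to two parts' cut sets, and the ``max'' being attained when the chosen cliques are each singleton parts so that the garbage part absorbs $W$; then $|\delta(C_{i_j})| = M - \deg_G(v_{i_j}) + (\text{edges to other chosen cliques}) = M - \deg_G(v_{i_j}) + (\deg_G(v_{i_j}) - (\deg_G(v_{i_j}) - (h-1)))$ — only if every chosen vertex has all its $G$-neighbors among the chosen set, which holds iff they form a clique, giving $M - \deg + (h-1) + \ldots$). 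I would carefully disentangle this: with the chosen cliques as singleton parts, $\delta(C_{i_j})$ consists of the $M - \deg_G(v_{i_j})$ edges to $W$, the $h-1$ edges to the other chosen cliques, and the $\deg_G(v_{i_j}) - (h-1)$ edges to unchosen cliques; total $M$. To get a smaller value we must instead put \emph{all} chosen cliques in one part $P_1$ and split the rest: but then $P_1$'s cut is large. So the ``$\le$'' in the claim with the max attained by $P_{h+1}$: I'll compute $|\delta(P_{h+1})| = \sum_{j}|\delta(C_{i_j})| - 2\binom{h}{2}$ when $P_{h+1}$ is the complement and the $C_{i_j}$ are singleton parts — no, that's not how complements work with multiple parts. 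Ultimately I would: (1) prove no clique is split (the easy, robust part), (2) conclude parts are clique-unions, (3) for the forward direction exhibit the partition $P_j = C_{i_j}$ for $j\le h$ and $P_{h+1} = W \cup \bigcup_{i \notin I} C_i$ and verify $\max_j |\delta(P_j)| \le M$ while also showing $|\delta(P_{h+1})| \le Mh - h(h-1)$ — reconciling the bound $Mh-h(h-1)$ as the dominant (largest) cut, which for $h \ge 2$ and $M$ large is indeed $\ge M$, so the max is $|\delta(P_{h+1})|$ and equals $\sum_{i\in I}(M - \deg_G(v_i)) + \sum_{i\in I} \deg_G(v_i) - 2\binom h 2 = Mh - 2\binom h2 = Mh - h(h-1)$ exactly when the $v_i$ form a clique (so all $2\binom h2$ edge-endpoints among chosen cliques are internal to $P_{h+1}$'s complement... ) — this is the computation I will carry out in detail. (4) For the converse, given a partition meeting the budget, the parts are clique-unions; argue the cheapest configuration isolates single cliques, the $\le$ bound forces exactly $\binom h 2$ connector edges among some $h$ of the cliques, which forces the corresponding $h$ vertices of $G$ to be pairwise adjacent, yielding an $h$-clique. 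The W[1]-hardness and ETH lower bound then follow since the reduction is parameter-preserving ($h \mapsto h+1$) and polynomial-time.
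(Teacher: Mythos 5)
Your proposal ultimately arrives at the same construction and arithmetic as the paper's proof. After the several false starts you correctly identify the forward-direction partition ($P_j=C_{i_j}$ for $j\in[h]$ and $P_{h+1}=W\cup\bigcup_{i\notin I}C_i$) and the fact that the max is attained at $P_{h+1}$ with value $Mh-h(h-1)$, while each singleton clique-part contributes only $M\le Mh-h(h-1)$; your observation that no $N$-clique can be split within the budget (so parts are unions of the cliques $C_1,\ldots,C_n,W$) is also correct and matches the paper exactly.

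The converse, however, is described too loosely to count as a proof. Saying ``argue the cheapest configuration isolates single cliques'' glosses over the crux: you must show that the part containing $W$, call it $P_{h+1}$, contains \emph{exactly} $n-h$ of the cliques $C_1,\ldots,C_n$. The paper proves this in two steps: at most $n-h$, because the remaining $h$ parts must each be non-empty and are clique-unions; and at least $n-h$, because if $h'>h$ cliques lay outside $P_{h+1}$ then each contributes at least $M-n$ edges to $W$, giving $|\delta_{G'}(P_{h+1})|\ge h'(M-n)\ge (h+1)(M-n)>Mh-h(h-1)$ using $M\ge (n+1)^2$. Only with ``exactly $h$ cliques outside $P_{h+1}$'' pinned down can you compute $|\delta_{G'}(P_{h+1})|=Mh-2|E[S]|$ for the corresponding vertex set $S$ and deduce $|E[S]|\ge\binom{h}{2}$. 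Adding this counting step is what is needed to close your argument; with it, your plan reproduces the paper's proof.
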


\begin{proof}
Suppose that $G$ contains an $h$-clique induced by a subset $V_0\subseteq V$. We may assume that $V_0=\{v_1,\ldots,v_h\}$ by relabelling the vertices of $V$. 
Consider the partition $\mcp=(P_1,\ldots,P_{h+1})$ given by
\begin{align*}
    P_i&:=C_i\ \forall i\in[h]\text{ and}
    \\P_{h+1}&:=V'\backslash(P_1\cup\ldots\cup P_h).
\end{align*}
We observe that 
\begin{align*}
    |\delta_{G'}(P_i)|&=(M-\deg_G(v_i))+\deg_G(v_i)=M\,\text{ for all } i\in[h]\text{ and}
    \\|\delta_{G'}(P_{h+1})|&=\sum_{v\in V_0}(M-\deg_G(v))+|E[V_0,V\backslash V_0]|
    \\&=\sum_{v\in V_0}(M-\deg_G(v))+\sum_{v\in V_0}\deg_G(v)-2|E[V_0]|
    \\&=Mh-2|E[V_0]|
    \\&=Mh-h(h-1).
\end{align*}
By choice of $M$, we know that $M\leq Mh-h(h-1)$, and hence $\max_{i\in[h+1]}|\delta_{G'}(P_i)|=Mh-h(h-1)$.

We now prove the converse. Suppose that we have an $(h+1)$-partition  $\mcp=(P_1,\ldots,P_{h+1})$ of $V'$ such that $\max_{i\in[h+1]}\{|\delta_{G'}(P_i)|\}\leq Mh-h(h-1)$.

We now show that $\mcp$ does not separate any two vertices in $C_i$ for all $i\in [n]$ or any two vertices in $W$. For the sake of contradiction, suppose that there exists vertices $u,v\in V'$ such that $u$ and $v$ are in the same set $A\in\{C_1,\ldots,C_n,W\}$ but they are in different parts of $\mcp$. Without loss of generality, let $u\in P_1$ and $v\in P_2$. Then $A\cap P_1$ and $A\backslash P_1$ together forms a non-trivial 2-cut of $G'[A]$, and hence $|\delta_{G'}(P_1)|\geq|\delta_{G'[A]}(A\cap P_1)|\geq N-1$. By our choice of $N$, we know that $N-1>Mh-h(h-1)$. This contradicts the fact that $|\delta_{G'}(P_i)|\le Mh-h(h-1)$ for all $i\in [h+1]$.

From now on, let us fix $P_{h+1}$ to be the part of $\mcp$ that contains $W$. We will show that $P_{h+1}$ contains exactly $n-h$ sets among $C_1,\ldots,C_n$. Since all parts of $\mcp$ are non-empty, it follows that $P_{h+1}$ cannot contain more than $n-h$ sets among $C_1,\ldots,C_n$. For the sake of contradiction, suppose that $P_{h+1}$ contains at most $n-h-1$ sets among $C_1,\ldots,C_n$. This implies that more than $h$ sets among $C_1,\ldots,C_n$ are not contained in $P_{h+1}$. Let $C_{i_1},\ldots,C_{i_{h'}}$ be the sets outside $P_{h+1}$, where $h'>h$. Then

\begin{align*}
    |\delta_{G'}(P_{h+1})|&\geq\sum_{\ell\in[h']}|E'[C_{i_\ell},W]|
    \\&=\sum_{\ell\in[h']}(M-\deg_G(v_{i_\ell}))
    \\&\geq\sum_{\ell\in[h']}(M-n)
    \\&\geq(h+1)(M-n)
    \\&>Mh-h(h-1).
\end{align*}
This contradicts the fact that $|\delta_{G'}(P_i)|\le Mh-h(h-1)$ for all $i\in [h+1]$. Hence, $P_{h+1}$ contains exactly $n-h$ sets among $C_1,\ldots,C_n$. 

Let $C_{i_1},\ldots,C_{i_h}$ be the sets that are not in $P_{h+1}$. We will now show that $S:=\{v_{i_1},\ldots,v_{i_h}\}$ induces a clique in $G$.
Since $\max_{i\in[h+1]}|\delta_{G'}(P_i)|\leq Mh-h(h-1)$, we know that
\begin{align*}
    Mh-h(h-1)&\geq |\delta_{G'}(P_{h+1})|
    \\&=\sum_{\ell\in[h]}(M-\deg_G(v_{i_\ell}))+|E'[\cup_{\ell\in[h]}C_{i_{\ell}},(\cup_{i\in[n]}C_i)\backslash(\cup_{\ell\in[h]}C_{i_{\ell}})]|
    \\&=\sum_{\ell\in[h]}(M-\deg_G(v_{i_\ell}))+|E[S,V\backslash S]|
    \\&=Mh-\sum_{\ell\in[h]}\deg_G(v_{i_\ell})+|E[S,V\backslash S]|
    \\&=Mh-2|E[S]|.
\end{align*}
Consequently, $|E[S]|\geq h(h-1)/2$, and thus $S$ induces an $h$-clique in $G$.
\end{proof}

We remark that our hardness reduction also implies that an exact algorithm for \mmkc in $n^{o(k)}$ time in simple graphs (i.e., unweighted graphs) would also imply an $n^{o(h)}$-time algorithm for \textsc{$h$-clique}, thus refuting the exponential time hypothesis (see Theorem 14.21 in \cite{FPT-book}).

\section{Conclusion}\label{sec:conclusion}

In this work, we addressed the graph $k$-partitioning problem under the minmax objective. We showed that it is NP-hard, W[1]-hard when parameterized by $k$, and admits a parameterized approximation scheme when parameterized by $k$. Our algorithmic ideas generalize in a natural manner to also lead to a parameterized approximation scheme for \mlpnormkc for every $p\ge 1$: in \mlpnormkc, the input is a graph $G=(V,E)$ with edge weights $w:E\rightarrow \R_+$, and the goal is to partition the vertices into $k$ non-empty parts so as to minimize $(\sum_{i=1}^k w(\delta(V_i))^p )^{1/p}$. We note that \mlpnormkc generalizes \mskc as well as \mmkc. 

Based on prior works in approximation literature for minmax and minsum objectives, it is a commonly held belief that the minmax objective is harder to approximate than the minsum objective. Our results suggest that for the graph $k$-partitioning problem, the complexity/approximability of the two objectives are perhaps the same. 
A relevant question towards understanding if the two objectives exhibit a complexity/approximability gap is the following: 
When $k$ is part of input, is \mmkc constant-approximable? 
We recall that when $k$ is part of input \mskc does not admit a $(2-\epsilon)$-approximation for any constant $\epsilon>0$ under the Small Set Expansion Hypothesis \cite{Ma18} and admits a $2$-approximation \cite{SV95}. The best approximation factor that we know currently for \mmkc is $2k$ (see Section \ref{sec:results}). 

The $2$-approximation for \mskc is based on solving the same problem in the \emph{Gomory-Hu} tree of the given graph. We note that solving \mmkc on the Gomory-Hu tree of the given graph could at best result in an $O(n)$-approximation: consider the complete graph $K_{2n+1}$ on $2n+1$ vertices with unit edge weights. The optimum partition for \mmkc  for $k=n$ is the partition that contains $n-1$ parts each containing $2$ vertices and one part containing $3$ vertices leading to an optimum value of $3(2n-2)=\Theta(n)$. The star graph on $2n+1$ vertices with all edge weights being $2n$ is a Gomory-Hu tree for $K_{2n+1}$. The optimum partition for \mmkc for $k=n$ on the  Gomory-Hu tree (i.e., the star graph with weighted edges) consists of $n-1$ parts corresponding to $n-1$ leaves of the star graph and one part containing the remaining leaves and the center, thus leading to an optimum value of $\Theta(n^2)$.

\bibliographystyle{amsplain}
\bibliography{main}

\end{document}